\newcommand{\brs}[1]{\left(#1\right)}
\newcommand{\set}[1]{\left\{#1\right\}}
\newcommand{\SqBrs}[1]{\left[#1\right]}
\newcommand{\anglb}[1]{\langle#1\rangle}
\newcommand{\abs}[1]{\left|#1\right|}
\numberwithin{equation}{section}
\newtheorem{thm}{Theorem}[section]
\newtheorem{example}[thm]{Example}
\newtheorem{prop}[thm]{Proposition}
\newtheorem{cor}[thm]{Corollary}
\newtheorem{defn}[thm]{Definition}
\newtheorem{lemma}[thm]{Lemma}
\theoremstyle{remark}
\newtheorem{remark}[thm]{Remark}
\begin{document}
\title{A Dixmier trace formula for the density of states}

\author{N.\,Azamov, E.\,McDonald, F.\,Sukochev, D.\,Zanin}
\address{University of New South Wales, Kensington, NSW, 2052, Australia}
\email{nurulla.azamov@unsw.edu.au}
\email{edward.mcdonald@unsw.edu.au}
\email{f.sukochev@unsw.edu.au}
\email{d.zanin@unsw.edu.au}

\begin{abstract} 
A version of Connes trace formula allows to associate a measure on the essential spectrum of a Schr\"odinger operator
with bounded potential. In solid state physics there is another celebrated measure associated with such operators --- the density of states.
In this paper we demonstrate that these two measures coincide.  We show how this equality can be used to give explicit formulae for the density of states in some circumstances.
\end{abstract}
\maketitle

\tableofcontents

\section{Introduction}

Let $d \geq 2$, and let
\begin{equation} \label{F: H}
H = -\Delta + V
\end{equation}
be a Schr\"odinger operator on ${\mathbb{R}}^d$, where $\Delta = \sum_{j=1}^d \frac{\partial^2}{\partial x_j^2}$ is the Laplace operator and~$V$ is a bounded real-valued measurable potential $V \in L_\infty({\mathbb{R}}^d).$
The \emph{density of states} (or \emph{DOS}) is a Borel measure $\nu_H$ on ${\mathbb{R}}$ naturally associated to $H,$ see e.g. \cite{AiWa,PasFig,CarmonaLacroix,Lang_genetic},
defined as follows.
Let $L > 0$, and let $H_L$ be the restriction of $H$ to the cube $(-L,L)^d$ with Dirichlet boundary conditions (for a definition see e.g. \cite[\S VI.4.4]{KatoBook}, \cite[\S XIII.15]{ReedSimonIV}). 

Let $I \subset {\mathbb{R}}$ be a bounded interval, and let $N_L(I)$ be the number of eigenvalues with multiplicities of~$H_L$ in~$I$ (which is necessarily finite, since $H_L$ has compact resolvent). The density of states measure $\nu_H$ of $I$
is defined as
\begin{equation} \label{def_of_DOS_via_cubes}
\nu_H(I) = \lim_{L \to \infty} \frac {N_L(I)}{{\mathrm{Vol}}((-L,L)^d)}.
\end{equation}

The DOS measure does not always exist, see e.g. \cite[p.\,513]{SimonSemigroups}. However, it is well known to exist for certain classes of Hamiltonians important for solid state physics such as those corresponding to periodic, almost periodic and ergodic potentials, see e.g. \cite{AiWa,Beresin_Shubin,PasFig,ReedSimonIV,SimonSemigroups}. 
Another point to mention here is that the density of states measure~$\nu_H$ has several definitions. The difference is in the choice of domain in the limit \eqref{def_of_DOS_via_cubes}: one can replace the cubes $\{(-L,L)^d\}_{L > 0}$ with a family of balls or other domains. There is also some variation in the choice of boundary conditions used to define $H_L$ (such as Dirichlet, Neumann, periodic, etc.).
For our purposes it will be convenient to use yet another definition 
(see e.g. \cite[(C41)]{SimonSemigroups}, \cite[(1.2)]{DIT2001}) 
in terms of the spectral projections of $H,$ as follows
\begin{equation} \label{def_of_DOS}
\nu_H(I) = \lim_{R \to \infty} \frac {1}{{\mathrm{Vol}}(B(0,R))} {\rm Tr}\brs{M_{\chi_{B(0,R)}} \chi_I(H) M_{\chi_{B(0,R)}}},
\end{equation}
where $B(0,R)$ is the ball of radius $R$ centred at zero, $M_f$ is the operator of multiplication by a function $f$ on $L_2({\mathbb{R}}^d),$ ${\rm Tr}$ is the standard operator trace
and $\chi_A$ is the indicator function of a set $A.$
(We use notation $M_f$ when it is necessary to distinguish a function $f$ from the operator of multiplication by $f,$ however occasionally when there is no danger of confusion we write $f$
meaning $M_f,$ especially in the Schr\"odinger operator $-\Delta + V$).
It is known that, assuming existence, these different definitions of DOS coincide at least for such important classes of potentials as periodic or ergodic, the latter including random and almost periodic potentials, see e.g. \cite[Theorem C.7.4]{SimonSemigroups}
and \cite{DIT2001}.
In this paper we will assume existence of the limit \eqref{def_of_DOS}.

 The density of states has attracted substantial attention in the mathematical literature. Items of particular interest have been the existence of $\nu_H$ in various circumstances,
the asymptotic behaviour of the integrated density of states function $\nu_H((-\infty,\lambda])$ as $\lambda\to\infty$ and its continuity properties \cite{Shubin-UMN, BourgainKlein}. As an example
of these results, we mention in particular that it follows from the work of Shubin \cite[Theorem 4.5]{Shubin-UMN} that if $V$ is smooth and almost periodic (see \cite[\S 1.2]{Shubin-UMN} for details) then $\nu_H$
exists and
$$\nu_H((-\infty,\lambda]) = \frac{\omega_d}{(2\pi)^d}\lambda^{\frac{d}{2}}+O(\lambda^{\frac{d}{2}-1})$$
as $\lambda\to\infty$, where
\begin{equation*}
\omega_d := \frac{2\pi^{d/2}}{\Gamma\left(\frac{d}{2}\right)}
\end{equation*}
is the $(d-1)$-volume of the unit sphere $S^{d-1}$. 
More recently, Bourgain and Klein \cite[Theorem 1.1]{BourgainKlein} proved among other results that in dimensions $d = 1,2,3$ the density of states (defined in terms of cubes, as in \eqref{def_of_DOS_via_cubes}) satisfies the following local $\log$-H\"older continuity property:
$$\nu_H([E,E+\varepsilon]) \leq C_{d,V,E}\log(\varepsilon^{-1})^{-2^{-d}},\quad E \in {\mathbb{R}}, \varepsilon \leq \frac{1}{2}$$
whenever $V$ is bounded and $\nu_H$ exists.

\smallskip
The second measure which can be associated with $H$ comes from a version of Connes' trace formula \cite{action}, \cite[Corollary 7.21]{GVF}, \cite[Theorem 11.7.5]{LSZ2012}.
One form of Connes trace formula asserts that for all continuous and compactly supported functions $f$ on ${\mathbb{R}}^d$, we have:
\begin{equation} \label{F: Connes' trace formula}
{\rm Tr}_{\omega}\brs{M_f (1-\Delta)^{-d/2}} = \frac{{\omega_d}}{d(2\pi)^d}\int_{{\mathbb{R}}^d} f(t)\,dt,
\end{equation}
where  
${\rm Tr}_{\omega}$ is a Dixmier trace on the ideal ${\mathcal{L}}_{1,\infty}(L_2({\mathbb{R}}^d)).$
For our purpose it is desirable to rewrite this formula
in the Fourier transform picture, as follows:
\begin{equation}\label{momentum_cif}
{\rm Tr}_{\omega}\brs{f(-i\nabla) M^{-d}_{\anglb{x}} } = \frac{{\omega_d}}{d(2\pi)^d} \int_{{\mathbb{R}}^d} f(t)\,dt,
\end{equation}
where $\nabla = (\partial_1,\ldots,\partial_d)$ is the gradient operator, $f(-i\nabla)$ is defined via functional calculus, and 
$$
\anglb{x} = (1+\abs{x}^2)^{1/2}.
$$
We would like to rewrite this formula in terms of the Laplacian $-\Delta$ rather than the gradient operator $\nabla$. To this end, consider the case 
where $f$ is a radial function, and therefore $f(-i\nabla)$ can be written as $g(-\Delta)$ for some continuous compactly supported function $g$ on $[0,\infty)$. Then by switching to polar coordinates we have

\begin{align*}
{\rm Tr}_{\omega}\brs{g(-\Delta) M^{-d}_{\anglb{x}} } &= \frac{\omega_d}{d(2\pi)^d}\int_{{\mathbb{R}}^d} g(|x|^2)\,dx\\
                                                    &= \frac{\omega_d}{d(2\pi)^d} \int_{S^{d-1}}\int_0^\infty g(r^2)r^{d-1}\,drds\\
                                                    &= \frac{\omega_d^2}{d(2\pi)^d} \int_0^\infty g(r^2)r^{d-1}\,dr\\
                                                    &= \frac{\omega_d^2}{2d(2\pi)^d}\int_0^\infty g(\lambda)\lambda^{\frac{d}{2}-1}\,d\lambda.
\end{align*}

Now, let $V \in L_\infty({\mathbb{R}}^d)$ be a real-valued potential. Our main result is the following:

\begin{thm} \label{T: main} Let $H = -\Delta+M_V$ be a Schr\"odinger operator on $L_2({\mathbb{R}}^d)$, where $V$ is a bounded real-valued measurable potential.
For any $g \in C_c({\mathbb{R}})$ the operator
$$
g(H) M^{-d}_{\anglb{x}} 
$$
belongs to the weak trace-class ideal ${\mathcal{L}}_{1,\infty}(L_2({\mathbb{R}}^d))$. If we assume that the density of states of $H$ (defined according to \eqref{def_of_DOS}) exists and is a Borel measure $\nu_H$ on $\mathbb{R}$, then for every Dixmier trace
${\rm Tr}_\omega$ on~${\mathcal{L}}_{1,\infty}$ there holds the equality
\begin{equation} \label{F: phi(..)=mu(g)}
{\rm Tr}_\omega\brs{g(H) M^{-d}_{\anglb{x}} } = \frac{{\omega_d}}{d}  \int_{{\mathbb{R}}} g\,d\nu_H.
\end{equation}

\end{thm}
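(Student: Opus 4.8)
The plan is to combine three ingredients: an a priori bound placing $g(H)M^{-d}_{\anglb{x}}$ in ${\mathcal{L}}_{1,\infty}$; a ``Connes trace formula in the spatial variable'' expressing ${\rm Tr}_\omega\brs{g(H)M^{-d}_{\anglb{x}}}$ through a logarithmically averaged integral of the diagonal of the integral kernel of $g(H)$; and the defining relation \eqref{def_of_DOS}, which identifies that averaged integral with $\int g\,d\nu_H$. For the membership, since $g\in C_c({\mathbb{R}})$ one factors $g(H)=\tilde h(H)\brs{H+c}^{-d/2}$ with $c$ chosen so that $H+c\ge 1$ and $\tilde h(t)=g(t)(t+c)^{d/2}\in C_c({\mathbb{R}})$, reducing matters to $\brs{H+c}^{-d/2}M^{-d}_{\anglb{x}}$; since $V\in L_\infty$ the heat semigroup $e^{-tH}$ obeys Gaussian kernel bounds, so by subordination $\brs{H+c}^{-d/2}$ obeys the same kernel estimates as $(1-\Delta)^{-d/2}$ up to constants, and together with $\anglb{x}^{-d}\in L_{1,\infty}({\mathbb{R}}^d)$ a Cwikel--Birman--Solomyak estimate gives $\brs{H+c}^{-d/2}M^{-d}_{\anglb{x}}\in{\mathcal{L}}_{1,\infty}$. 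The same argument yields $\chi_K(H)M^{-d}_{\anglb{x}}\in{\mathcal{L}}_{1,\infty}$ for every compact $K\subset{\mathbb{R}}$, which is used below.

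Next I would reduce \eqref{F: phi(..)=mu(g)} to $g\in C_c^\infty({\mathbb{R}})$. For $g\ge 0$, cyclicity of ${\rm Tr}_\omega$ gives ${\rm Tr}_\omega\brs{g(H)M^{-d}_{\anglb{x}}}={\rm Tr}_\omega\brs{g(H)^{1/2}M^{-d}_{\anglb{x}}g(H)^{1/2}}\ge 0$, so $g\mapsto{\rm Tr}_\omega\brs{g(H)M^{-d}_{\anglb{x}}}$ is a positive linear functional on $C_c({\mathbb{R}})$ --- hence, by the Riesz theorem, integration against a Radon measure $\mu_\omega$ --- while the right-hand side of \eqref{F: phi(..)=mu(g)} is integration against $\tfrac{\omega_d}{d}\nu_H$. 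Two Radon measures agree as soon as they agree on $C_c^\infty$; equivalently, fixing a compact $K\supset\operatorname{supp}g$ and approximating $g$ uniformly by $g_\varepsilon\in C_c^\infty$ supported in $K$, the errors are controlled by $\|g-g_\varepsilon\|_\infty\,\|\chi_K(H)M^{-d}_{\anglb{x}}\|_{{\mathcal{L}}_{1,\infty}}$ on the left and $\|g-g_\varepsilon\|_\infty\,\nu_H(K)$ on the right.

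The heart of the matter is a spatial trace formula: for $g\in C_c^\infty({\mathbb{R}})$,
$$
{\rm Tr}_\omega\brs{g(H)M^{-d}_{\anglb{x}}}=\frac{1}{d}\lim_{R\to\infty}\frac{1}{\log R}\int_{|x|\le R}\SqBrs{g(H)}(x,x)\,\anglb{x}^{-d}\,dx ,
$$
where $\SqBrs{g(H)}(x,x)$ is the diagonal of the (continuous) integral kernel of $g(H)$, whenever the limit exists; this extends \eqref{momentum_cif} (the case $g(-\Delta)=f(-i\nabla)$, where the diagonal is the constant $(2\pi)^{-d}\int f$) to operators that are merely ``smoothing with a well-localised kernel''. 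To prove it, tile ${\mathbb{R}}^d$ by unit cubes $\{Q_n\}_{n\in{\mathbb{Z}}^d}$: replacing $M^{-d}_{\anglb{x}}$ by $\sum_n\anglb{n}^{-d}M_{\chi_{Q_n}}$ changes $g(H)M^{-d}_{\anglb{x}}$ by an operator in ${\mathcal{L}}_1$ (the defect is multiplication by an $O(\anglb{x}^{-d-1})$-function, and $\operatorname{Tr}\brs{M_{\chi_{Q_n}}\chi_K(H)M_{\chi_{Q_n}}}\le C$ uniformly in $n$ since $V$ is bounded), hence does not affect ${\rm Tr}_\omega$. The Helffer--Sj\"ostrand formula with Combes--Thomas resolvent estimates (available because $H$ is a second-order differential operator and $g$ is smooth) gives rapid off-diagonal decay $\|M_{\chi_{Q_m}}g(H)M_{\chi_{Q_n}}\|_1\le C_N\anglb{m-n}^{-N}$; one then writes $\sum_n\anglb{n}^{-d}g(H)M_{\chi_{Q_n}}$ as an orthogonal direct sum of the diagonal blocks $\anglb{n}^{-d}M_{\chi_{Q_n}}g(H)M_{\chi_{Q_n}}$ plus an off-diagonal remainder. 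The direct sum is evaluated by a singular-value count ($\mu_k(\operatorname{diag}\anglb{n}^{-d})\sim\tfrac{\omega_d}{d}k^{-1}$, $\sum_{k\le N}\mu_k\sim\tfrac{\omega_d}{d}\log N$) together with the DOS, which supplies the Cesàro average of $\operatorname{Tr}\brs{M_{\chi_{Q_n}}g(H)M_{\chi_{Q_n}}}$, producing the right-hand side above; the remainder, being again of ``smoothing, $\anglb{x}^{-d}$-weighted'' type with kernel vanishing on the diagonal, has zero Dixmier trace by the very same formula with vanishing integrand. Finally, switching to polar coordinates and using \eqref{def_of_DOS} (extended from intervals to $g\in C_c$, routine as $\nu_H$ is Borel) in the form $\int_{|x|\le R}\SqBrs{g(H)}(x,x)\,dx\sim\tfrac{\omega_d}{d}R^d\int g\,d\nu_H$, an Abel-summation argument turns the logarithmically averaged weighted integral into $\omega_d\int g\,d\nu_H$, so the displayed right-hand side equals $\tfrac{\omega_d}{d}\int g\,d\nu_H$, which is \eqref{F: phi(..)=mu(g)}.

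I expect the main obstacle to be establishing the spatial trace formula itself for operators $BM^{-d}_{\anglb{x}}$ with $B$ smoothing and well localised (of which both $g(H)$ and the off-diagonal remainder are instances), and in particular controlling the cross-terms produced when one localises both $g(H)$ and $M^{-d}_{\anglb{x}}$: this requires meshing the kernel estimates for $g(H)$ with a Connes/Dixmier-trace computation, and cannot be reduced to crude ideal inclusions, since the remainder need not lie in ${\mathcal{L}}_1$ nor in the separable part of ${\mathcal{L}}_{1,\infty}$. By contrast the membership estimate, the reduction to smooth $g$, and the polar-coordinate/Abel evaluation are comparatively routine, and the free case $V=0$ of \eqref{F: phi(..)=mu(g)} is already contained in the computation preceding the theorem.
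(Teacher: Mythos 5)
Your proposal takes a genuinely different route from the paper's. The paper proceeds via a zeta-function residue formula ${\rm Tr}_\omega(e^{-sH}M^{-d}_{\anglb{x}})=d^{-1}\lim_{r\downarrow d}(r-d){\rm Tr}(e^{-sH}M^{-r}_{\anglb{x}})$ (Theorem~\ref{main_residue_formula}, resting on the operator-integral estimate of Theorem~\ref{main abstract thm} and the reduction of Lemma~\ref{reduction lemma}), then an Abelian lemma (Lemma~\ref{abelian_lemma}) connecting this residue to the spherical Ces\`aro average defining $\nu_H$, and finally Laplace-transform uniqueness to pass from $g(t)=e^{-st}$ to general $g\in C_c$. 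You instead aim directly for a ``spatial trace formula'' expressing ${\rm Tr}_\omega(g(H)M^{-d}_{\anglb{x}})$ as a logarithmic spatial average of the kernel diagonal, followed by Abel summation. The target formula is consistent with the paper's result and the constant check works out, but your derivation of it is incomplete and, at the crucial point, circular.

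Concretely, the tiling decomposition produces an off-diagonal remainder $\sum_{m\neq n}\anglb{n}^{-d}M_{\chi_{Q_m}}g(H)M_{\chi_{Q_n}}$ which is not trace class: even with rapid decay $\|M_{\chi_{Q_m}}g(H)M_{\chi_{Q_n}}\|_1\leq C_N\anglb{m-n}^{-N}$, summing $\anglb{n}^{-d}\anglb{m-n}^{-N}$ over $m\neq n$ still carries the divergent factor $\sum_n\anglb{n}^{-d}$. Your justification that this remainder has zero Dixmier trace invokes ``the very same formula with vanishing integrand,'' i.e.\ precisely the spatial trace formula you set out to prove, which is circular; you flag the obstacle yourself but offer no actual resolution. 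A further delicacy is that for a direct sum $\bigoplus_n c_n T_n$ with blocks $T_n$ of rank greater than one, relating the singular-value partial sums of the total operator to Ces\`aro averages of the block traces requires uniform control on the singular-value distribution of each $T_n$, not merely $\mu_k(\operatorname{diag}\anglb{n}^{-d})\sim\frac{\omega_d}{d}k^{-1}$. The membership step should also be re-examined: Gaussian kernel bounds on $(H+c)^{-d/2}$ do not by themselves feed into a Cwikel--Birman--Solomyak estimate, which is formulated for Fourier multipliers $g(-i\nabla)$ rather than general kernel operators; the paper resolves this via the resolvent-iteration and commutator argument of Theorem~\ref{main_cwikel_estimate}. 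The paper's indirect route through the $\zeta$-residue and double-operator-integral machinery exists precisely because a direct spatial trace formula of the kind you propose is delicate to establish; closing the gap would require tools of comparable depth.
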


It is instructive to consider the simplest case, $V = 0$, which also serves to compute the constant in \eqref{F: phi(..)=mu(g)}. 
\begin{example} \label{E: 1} For $H_0 = -\Delta$ we have 
\begin{equation*}
{\rm Tr}_{\omega}(g(H_0)M^{-d}_{\anglb{x}}) = \frac{{\omega_d}}{d}\int_{\mathbb{R}} g(\lambda) d\nu_{H_0}(\lambda)
\end{equation*}
for all $g \in C_c(\mathbb{R})$.
\end{example}
\begin{proof}
We shall verify that:
\begin{equation*}
    {\rm Tr}_\omega(e^{-tH_0}M^{-d}_{\anglb{x}}) = \frac{{\omega_d}}{d}\int_0^\infty e^{-t\lambda}\,d\nu_{H_0}(\lambda),\quad t > 0.
\end{equation*}
This suffices to ensure that the equality holds for all $g \in C_c(\mathbb{R})$ (see the argument in Remark~\ref{laplace_uniqueness}) provided both sides exist.
The existence of the left-hand side follows from classical Cwikel estimates, or may be derived from the Cwikel estimates given below in Section~\ref{cwikel_section}.
That there is indeed an explicitly computable density of states measure for this case is well known (see e.g. \cite[Theorem C.7.7]{SimonSemigroups}).

Connes' trace formula in the form \eqref{momentum_cif} yields:
\begin{align*}
    {\rm Tr}_{\omega}(e^{t\Delta}M^{-d}_{\anglb{x}}) &= \frac{\omega_d}{d(2\pi)^d}\int_{\mathbb{R}^d} e^{-t|x|^2}\,dx\\
                                                    &= \frac{\omega_d}{d(2\pi)^d}\left(\frac{\pi}{t}\right)^{d/2}\\
                                                    &= \frac{\omega_d}{d}(4\pi t)^{-d/2}.
\end{align*}
We may compare this to $\int_0^\infty e^{-t\lambda} d\nu_{H_0}(\lambda)$ using \eqref{def_of_DOS}. According to \cite[Proposition C.7.2]{SimonSemigroups}, it suffices to compute:
\begin{equation*}
    \lim_{R\to\infty}\frac{1}{|B(0,R)|}{\rm Tr}(M_{\chi_{B(0,R)}}e^{-tH_0}).
\end{equation*}
The semigroup $e^{-tH_0}$ has integral kernel (see e.g. \cite[\S IV.7, Example 3]{ReedSimonII}):
\begin{equation*}
    K_t(x,y) = (4\pi t)^{-d/2}e^{-\frac{|x-y|^2}{4t}}.
\end{equation*}
Hence $K_t(x,x) = (4\pi t)^{-d/2}$ is constant, and we have:
\begin{equation*}
    \frac{1}{|B(0,R)|}{\rm Tr}(M_{\chi_{B(0,R)}}e^{-tH_0}) = \frac{1}{|B(0,R)|}\int_{B(0,R)} (4\pi t)^{-d/2}\,dx = (4\pi t)^{-d/2}.
\end{equation*}
\end{proof} 

Theorem~\ref{T: main} observes a direct connection between two measures which can naturally be associated with the operator \eqref{F: H}. 
Since these two measures {\it a priori} have very different definitions, this connection ought to be considered as somewhat surprising. 
At the same time, both measures do share some obvious common properties.
Indeed, both measures are invariants of a Schr\"odinger operator (\ref{F: H}),
both are suppported on the essential spectrum of $H$ and they both exhibit certain robustness. Namely, the Dixmier trace ${\rm Tr}_\omega,$
used in the definition of one of these measures, is insensitive to trace class perturbations, while the density of states measure $\nu_H$ is insensitive to localised perturbations $V_0+V$ 
of the bounded potential~$V_0,$ \cite[Theorems C.7.7 and C.7.8]{SimonSemigroups} reflecting the fact that DOS is a property of the behaviour of the potential~$V$ at infinity.

As mentioned above, throughout this paper we assume existence of the density of states. That is, we assume existence of the limit \eqref{def_of_DOS}. It is noteworthy however that the left hand
side of \eqref{F: phi(..)=mu(g)} is meaningful for an arbitrary bounded potential $V$ and defines a Borel measure on the spectrum of $H=-\Delta+M_V$. In the event that the density of states does indeed exist, \eqref{F: phi(..)=mu(g)} implies
that the value of the Dixmier trace ${\rm Tr}_\omega(f(H)M_{\anglb{x}}^{-d})$ is independent of the extended limit $\omega$. In general, an operator $T$ in the weak trace ideal is called Dixmier measurable
if ${\rm Tr}_\omega(T)$ is independent of the choice of Dixmier trace ${\rm Tr}_\omega$. Hence, existence of the density of states implies that $f(H)M_{\anglb{x}}^{-d}$ is Dixmier measurable for all $f\in C_c({\mathbb{R}})$. We conjecture
that the converse holds.

\section{Example computations}
    Before giving the full proof of Theorem \ref{T: main}, we can demonstrate its utility to explicitly compute the density of states in a number of examples. To the best of our knowledge,
    these formulae are new. In these examples we have not attempted to prove that the density of states exists, nonetheless using \eqref{F: phi(..)=mu(g)} we can compute it conditional on the assumption
    of existence.
    
    Given a potential $V$, we denote the density of states measure of $H = -\Delta+V$ by $\nu_{H}$ (assuming that the measure exists). We will also write $H = -\Delta+M_V$
    whenever there is potential for confusion between the pointwise multiplier $M_V$ and the function $V$.
    
\subsection{Radially homogeneous potentials}
    In this subsection we consider potentials $V$ which are positively homogeneous in the sense that $V(tx) = V(x)$ for all $t > 0$ and all $x \in {\mathbb{R}}^d$.
    
    \begin{thm}\label{homogeneous_potential_theorem}
        Denote $H_0 = -\Delta$ and let $H = H_0 + M_V,$ where $V \in L_\infty({\mathbb{R}}^d)$ is such that $V(tx) = V(x)$ for all $t > 0$ and all $x \in {\mathbb{R}}^d$.
        We assume that the density of states of $H$ exists. 
        Then the density of states of $H$ is the average over $\xi \in S^{d-1}$ of the density of states of $H_0+V(\xi),$
        in other words:
        \begin{equation*}
            \nu_{H_0+V} = \frac{1}{\omega_d} \int_{S^{d-1}} \nu_{H_0+V(\xi)} \,d\xi.
        \end{equation*}
    \end{thm}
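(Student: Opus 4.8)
The plan is to derive the claimed identity of measures from Theorem~\ref{T: main}. Both $g\mapsto{\rm Tr}_\omega\brs{g(H)M^{-d}_{\anglb{x}}}$ and $g\mapsto\frac1{\omega_d}\int_{S^{d-1}}\int_{\mathbb R}g\,d\nu_{H_0+V(\xi)}\,d\xi$ are positive linear functionals on $C_c({\mathbb R})$ (for the second one note that each $\nu_{H_0+V(\xi)}$ is $\nu_{H_0}$ translated by $|V(\xi)|\le\|V\|_\infty$, so the $\xi$-integral is a locally finite Borel measure), and by the Riesz representation theorem the theorem is equivalent to equality of these functionals. By Theorem~\ref{T: main} the first equals $\frac{\omega_d}d\int g\,d\nu_H$, while Theorem~\ref{T: main} applied to the constant (hence bounded) potential $V(\xi)$, whose density of states exists since it is a translate of $\nu_{H_0}$, gives $\int g\,d\nu_{H_0+V(\xi)}=\frac d{\omega_d}{\rm Tr}_\omega\brs{g(H_0+V(\xi))M^{-d}_{\anglb{x}}}$. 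Thus the theorem reduces to the statement that for every $g\in C_c({\mathbb R})$
\begin{equation}\label{hom_target}
{\rm Tr}_\omega\brs{g(H)M^{-d}_{\anglb{x}}}=\frac1{\omega_d}\int_{S^{d-1}}{\rm Tr}_\omega\brs{g(H_0+V(\xi))M^{-d}_{\anglb{x}}}\,d\xi .
\end{equation}
I will prove \eqref{hom_target} first assuming that $h:=V|_{S^{d-1}}$ is continuous, and reduce the general case $h\in L_\infty(S^{d-1})$ to it afterwards.

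The mechanism is a localisation of the Dixmier trace onto cones. Given $\varepsilon>0$, partition $S^{d-1}=\bigsqcup_{k=1}^N\sigma_k$ into Borel caps of diameter small enough that $\sup_{\xi\in\sigma_k}\abs{h(\xi)-h(\xi_k)}<\varepsilon$ for some chosen $\xi_k\in\sigma_k$, and let $\Omega_k=\set{r\xi:r>0,\ \xi\in\sigma_k}$ be the cone over $\sigma_k$, so $\sum_k M_{\chi_{\Omega_k}}=1$. Since ${\rm Tr}_\omega$ is linear on ${\mathcal{L}}_{1,\infty}$,
$$
{\rm Tr}_\omega\brs{g(H)M^{-d}_{\anglb{x}}}=\sum_{k=1}^N {\rm Tr}_\omega\brs{M_{\chi_{\Omega_k}}g(H)M^{-d}_{\anglb{x}}}.
$$
The proof then rests on two claims: \textbf{(i)} (freezing the potential) ${\rm Tr}_\omega\brs{M_{\chi_{\Omega_k}}g(H)M^{-d}_{\anglb{x}}}={\rm Tr}_\omega\brs{M_{\chi_{\Omega_k}}g(H_0+h(\xi_k))M^{-d}_{\anglb{x}}}+O(\varepsilon)$, with implied constant depending only on $d$, $g$, $\|V\|_\infty$; and \textbf{(ii)} (solid-angle factor) ${\rm Tr}_\omega\brs{M_{\chi_\Omega}g(H_0+c)M^{-d}_{\anglb{x}}}=\frac{\abs{\Omega\cap S^{d-1}}}{\omega_d}{\rm Tr}_\omega\brs{g(H_0+c)M^{-d}_{\anglb{x}}}$ for every constant $c$ and measurable cone $\Omega$. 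Granting these, summing over $k$, recognising $\sum_k\frac{\abs{\sigma_k}}{\omega_d}{\rm Tr}_\omega\brs{g(H_0+h(\xi_k))M^{-d}_{\anglb{x}}}$ as a Riemann sum for the right-hand side of \eqref{hom_target} (the integrand $\xi\mapsto{\rm Tr}_\omega\brs{g(H_0+h(\xi))M^{-d}_{\anglb{x}}}=\frac{\omega_d}d\int g(\lambda+h(\xi))\,d\nu_{H_0}(\lambda)$ is continuous, as $g\in C_c$ is uniformly continuous and $\nu_{H_0}$ is locally finite, so the mesh-to-zero Riemann sums converge), and letting $\varepsilon\to0$, proves \eqref{hom_target}.

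Claim \textbf{(ii)} is the easy one. Fixing $g\ge0$ (without loss of generality), the cyclic property and the symmetrisation $M^{-d}_{\anglb{x}}=M^{-d/2}_{\anglb{x}}M^{-d/2}_{\anglb{x}}$ show $\mu_c(\Omega):={\rm Tr}_\omega\brs{M_{\chi_\Omega}g(H_0+c)M^{-d}_{\anglb{x}}}={\rm Tr}_\omega\brs{M^{-d/2}_{\anglb{x}}M_{\chi_\Omega}g(H_0+c)M_{\chi_\Omega}M^{-d/2}_{\anglb{x}}}\ge0$; moreover $\Omega\mapsto\mu_c(\Omega)$ is finitely additive and, by unitary invariance of ${\rm Tr}_\omega$ and rotation invariance of $H_0+c$ and of $M_{\anglb{x}}$, invariant under rotations. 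Comparing $\mu_c$ with surface measure (squeezing a cap between finite unions of congruent spherical simplices) gives $\mu_c(\Omega)=\frac{\abs{\Omega\cap S^{d-1}}}{\omega_d}\mu_c({\mathbb R}^d)$, which is \textbf{(ii)}. Alternatively one runs the heat-kernel computation of Example~\ref{E: 1} with $B(0,R)$ replaced by $B(0,R)\cap\Omega$: the diagonal $(4\pi t)^{-d/2}$ of the heat kernel of $H_0+c$ is constant, so the spatial integral contributes exactly $\abs{B(0,R)\cap\Omega}/\abs{B(0,R)}\to\abs{\Omega\cap S^{d-1}}/\omega_d$.

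Claim \textbf{(i)} is the crux and is a pseudolocality estimate; this is the step I expect to be the main obstacle. Writing $\tilde H_k=H_0+h(\xi_k)$ and using the Helffer--Sj\"ostrand formula together with $(z-H)^{-1}-(z-\tilde H_k)^{-1}=(z-H)^{-1}M_{V-h(\xi_k)}(z-\tilde H_k)^{-1}$, the difference $M_{\chi_{\Omega_k}}\brs{g(H)-g(\tilde H_k)}M^{-d}_{\anglb{x}}$ is a rapidly convergent average over $z$ of operators $M_{\chi_{\Omega_k}}(z-H)^{-1}M_{V-h(\xi_k)}(z-\tilde H_k)^{-1}M^{-d}_{\anglb{x}}$. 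Let $\Omega_k'$ be the cone over an open cap $\sigma_k'\supset\overline{\sigma_k}$ of still-small diameter, and split $M_{V-h(\xi_k)}=M_{(V-h(\xi_k))\chi_{\Omega_k'}}+M_{(V-h(\xi_k))\chi_{{\mathbb R}^d\setminus\Omega_k'}}$. For the first term $\|(V-h(\xi_k))\chi_{\Omega_k'}\|_\infty=\sup_{\xi\in\sigma_k'}\abs{h(\xi)-h(\xi_k)}\lesssim\varepsilon$, so by the ideal bound $\|ATB\|_{1,\infty}\le\|A\|\,\|B\|\,\|T\|_{1,\infty}$ (with $T=(z-\tilde H_k)^{-1}M^{-d}_{\anglb{x}}\in{\mathcal{L}}_{1,\infty}$, or in a smaller ideal, by the Cwikel estimates of Section~\ref{cwikel_section}) and $\abs{{\rm Tr}_\omega(S)}\le\|S\|_{1,\infty}$ it contributes $O(\varepsilon)$ to the Dixmier trace. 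For the second term, $\chi_{\Omega_k}$ and $\chi_{{\mathbb R}^d\setminus\Omega_k'}$ are supported on cones separated by a positive angle, hence at distance $\gtrsim\rho$ at radius $\abs{x}=\rho$, so the Combes--Thomas estimate $\|M_{\chi_{\Omega_k}\cap\set{\abs{x}>\rho}}(z-H)^{-1}M_{\chi_{{\mathbb R}^d\setminus\Omega_k'}}\|\le Ce^{-c\rho}$, summed dyadically in $\rho$, makes this term negligible for the Dixmier trace, the remaining part supported in $\set{\abs{x}\le\rho_0}$ being of Dixmier trace zero (trace class, the resolvents providing the needed smoothing). The genuinely delicate points are the interplay of the \emph{angular} separation of the cones with the only \emph{linearly} growing spatial gap (and the failure of separation near the origin, disposed of as just indicated), and the passage from continuous $h$ to general $h\in L_\infty(S^{d-1})$: approximate $h$ in every $L_p(S^{d-1})$, $p<\infty$, by continuous $h_n$ with $\|h_n\|_\infty\le\|h\|_\infty$; the right-hand side of \eqref{hom_target} converges by dominated convergence, and the left-hand side converges once one has $\|\brs{g(H_n)-g(H)}M^{-d}_{\anglb{x}}\|_{1,\infty}\to0$, which is the $L_p$-stability form of the Cwikel estimates of Section~\ref{cwikel_section}.
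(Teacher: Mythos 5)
Your strategy is genuinely different from the paper's. The paper takes the Fourier transform so that $e^{-sH}$ becomes $e^{-s(M_{|x|}^2+V(-i\nabla))}$, expands this in a norm-convergent Dyson series, observes that each term and hence the sum lies in the Sukochev--Zanin $C^*$-algebra $\Pi$ with computable principal symbol $e^{-s(|x|^2+V(\xi))}$, and then applies the symbol-based Connes trace formula \eqref{F: Dao1 formula}; this computes ${\rm Tr}_\omega(e^{-sH}M_{\anglb{x}}^{-d})$ directly and the measure identity follows from Laplace inversion. The paper never spatially decomposes $\mathbb{R}^d$ at all. Your cone-decomposition scheme is an appealing alternative, and the overall plan (localise, freeze the symbol on narrow cones, use a rotation-invariance computation for constant potentials, take a Riemann sum) is in the right spirit. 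But as sketched it has two genuine gaps.

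\textbf{The pseudolocality step (Claim (i)) conflates operator-norm smallness with trace-class smallness.} Combes--Thomas gives a bound on the \emph{operator norm} of $M_{\chi_{\Omega_k}\cap\{|x|>\rho\}}(z-H)^{-1}M_{\chi_{\mathbb{R}^d\setminus\Omega_k'}}$, and a sum of operators with summable operator norms need not be in $\mathcal{L}_1$, nor have small $\mathcal{L}_{1,\infty}$ quasinorm, nor vanishing Dixmier trace. The pieces you sum live on expanding dyadic shells, so their ranks (and $\mathcal{L}_1$-norms before inserting the exponential gain) grow; you would need to prove an $\mathcal{L}_1$-norm (or at least ${(\mathcal{L}_{1,\infty})_0}$-quasinorm) bound on each shell that is summable. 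The usual route is to conjugate the resolvent by exponential weights $e^{\pm\phi}$ so that the decay is absorbed into a modified Hamiltonian to which Cwikel-type bounds still apply, and simultaneously track the blow-up of the Combes--Thomas constant as $\mathrm{Im}\,z\to 0$ against the high-order vanishing of $\bar\partial\tilde g$ in the Helffer--Sj\"ostrand kernel. None of this is in the paper, and it is far from routine; it is the main obstacle and you have flagged it but not closed it. Also, the angular separation fails near the origin, and your fix (``the remaining part supported in $\{|x|\le\rho_0\}$ being trace class'') does not mesh cleanly with the dyadic summation since the cutoff $\rho_0$ has to grow with $1/\mathrm{Im}\,z$.

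\textbf{Two secondary gaps.} In Claim (ii), from finite additivity and rotation invariance of $\Omega\mapsto\mu_c(\Omega)$ you cannot conclude proportionality to surface measure: rotation-invariant \emph{finitely additive} set functions on $S^{d-1}$ are not unique, and the Dixmier trace is not $\sigma$-additive over an infinite partition into cones (it kills $\mathcal{L}_1$). The heat-kernel computation you mention is the safer route, but note it computes the DOS-type averaged limit rather than ${\rm Tr}_\omega$ directly, so you would need the residue machinery of the paper (Theorem~\ref{main_residue_formula} with $e^{-sH_0}$ localised to $B(0,R)\cap\Omega$) to transfer it to a Dixmier trace. Finally, the asserted $L_p$-stability of the Cwikel estimates in the potential, needed to pass from continuous $h$ to $h\in L_\infty(S^{d-1})$, is a claim that would itself require a proof; the estimates in Section~\ref{cwikel_section} are stated for a fixed bounded $V$ and do not obviously give $\|(g(H_n)-g(H))M_{\anglb{x}}^{-d}\|_{1,\infty}\to 0$ from $V_n\to V$ in $L_p$ with a uniform $L_\infty$ bound.

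In summary: a correct but different proof is not what you have yet; it is a different \emph{strategy} with a substantive unproven core (the Combes--Thomas-to-trace-class upgrade). By contrast, the paper's route sidesteps all spatial localisation by working in the abstract $C^*$-algebra and computing the symbol of the heat semigroup exactly, so no pseudolocality estimate is needed.
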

    The spectral and scattering theory of Schr\"odinger operators with positively homogeneous potentials have been studied by Herbst and Skibsted \cite{Herbst1991, HerbstSkibsted2001, HerbstSkibsted2004},
    although we are not aware of any results which imply the existence of the density of states for these potentials.

    The computation in this case is based on a version of Connes trace theorem proved by two of the authors in \cite{Dao1}, which is stated as follows (c.f. \cite[Theorem 1.2]{Dao1}).
    Let~$\Pi$ be the $C^*$-subalgebra of ${\mathcal{B}}(L_2({\mathbb{R}}^d)),$ generated by the algebras of pointwise multipliers
$$
   \set{M_f \colon f \in L_\infty({\mathbb{R}}^d)}
$$
and homogeneous Fourier multipliers
$$
   \set{ g\brs{\frac {D_1}{(-\Delta)^{1/2}}, \ldots, \frac {D_d}{(-\Delta)^{1/2}}}  \colon g \in L_\infty\brs{S^{d-1}}}, 
$$
where $D_k = \frac1 i \frac {\partial}{\partial x_k}$ and $S^{d-1}$ is the $d-1$ dimensional unit sphere. 
Theorem 1.2 of \cite{Dao1} asserts that there exists a unique $C^*$-algebra homomorphism
$$
   {\rm sym} \colon \Pi \to L_\infty\brs{{\mathbb{R}}^d \times S^{d-1}},
$$
such that ${\rm sym}\brs{M_f} = f \otimes 1$ and ${\rm sym} \brs{ g\brs{\frac {D_1}{\sqrt{-\Delta}}, \ldots, \frac {D_d}{\sqrt{-\Delta}}}} = 1 \otimes g.$
Theorem 1.5 of \cite{Dao1} asserts that for $T \in \Pi$ and $g\in L_\infty({\mathbb{R}}^d)$ is compactly supported, then the opreator $TM_g \anglb{\nabla}^{-d}$ belongs to ${\mathcal{L}}_{1,\infty}$ and 
\begin{equation} \label{F: Dao1 formula}
   {\rm Tr}_\omega\brs{TM_g \anglb{\nabla}^{-d} } = \int_{{\mathbb{R}}^d \times S^{d-1}} {\rm sym}\brs{T}g.
\end{equation}
This theorem is a version of Connes' trace theorem \cite[Theorem 1]{action}.

\begin{proof}[Proof of Theorem \ref{homogeneous_potential_theorem}]
    Let $s > 0.$ For each $N > 0$, performing $N$-fold iteration of Duhamel's formula gives the Dyson expansion with remainder term:
    \begin{align*}
        e^{-sH} &=  e^{-sH_0} + \sum_{n=1}^N \frac{(-s)^n}{n!}\int_{\Delta_n} e^{-\theta_0 sH_0}M_Ve^{-\theta_1 sH_0}M_V\cdots M_Ve^{-\theta_n sH_0}\,d\theta\\
                &\quad + \frac{(-s)^{N+1}}{(N+1)!}\int_{\Delta_{N+1}} e^{-\theta_0 sH_0}M_V\cdots e^{-\theta_{N}sH_0}M_Ve^{-\theta_{N+1}sH}\,d\theta
    \end{align*}
    where $\Delta_n = \{(\theta_0,\ldots,\theta_n) \colon \theta_0+\cdots+\theta_n = 1\}$ is the $n$-simplex, and $d\theta$ is the normalised Lebesgue measure (so $\int_{\Delta_n}\,d\theta = 1$).
    For each fixed $s>0$, since $V$ is bounded it is easy to see that the remainder term tends to zero in the operator norm as $N\to \infty$, so we have the operator norm convergent Dyson series:
    \begin{equation*}
        e^{-sH} = e^{-sH_0} + \sum_{n=1}^\infty \frac{(-s)^n}{n!}\int_{\Delta_n} e^{-\theta_0 sH_0}M_Ve^{-\theta_1 sH_0}M_V\cdots M_Ve^{-\theta_nsH_0}\,d\theta.
    \end{equation*}
    Taking the Fourier transform, we have:
    \begin{align*}
        e^{-s(M_{|x|}^2+V(-i\nabla))} &= e^{-sM_{|x|^2}} \\
                                      &\quad + \sum_{n=1}^\infty \frac{(-s)^n}{n!}\int_{\Delta_n} e^{-\theta_0sM_{|x|}^2}V(-i\nabla)e^{-\theta_1 sM_{|x|}^2}\cdots V(-i\nabla)e^{-\theta_n sM_{|x|}^2}\,d\theta.
    \end{align*}

    For each $n\geq 1$ the integral:
    \begin{equation*}
        I_n := \int_{\Delta_n} e^{-s\theta_0 M_{|x|}^2}V(-i\nabla)e^{-s\theta_1M_{|x|}^2}V(-i\nabla)\cdots e^{-s\theta_{n}M_{|x|}^2}\,d\theta
    \end{equation*}
    converges in the $C^*$-algebra $\Pi$. Indeed, at each $\theta \in \Delta_n$ the integrand belongs to $\Pi$ and is norm continuous as a function of $\theta$. Since the principal symbol function ${\rm sym}$ is norm continuous, we have:
    \begin{align*}
        {\rm sym}(I_n)(x,\omega) &= \int_{\Delta_n} e^{-s(\theta_0+\cdots+\theta_n)|x|^2}V(\xi)^n\,d\theta\\
                            &= e^{-s|x|^2}V(\xi)^n,\quad x \in {\mathbb{R}}^d,\,\omega \in S^{d-1}.
    \end{align*}
    The norm convergence of the Dyson series and the norm continuity of the principal symbol mapping imply that $e^{-s(M_{|x|}^2+V(-i\nabla))} \in \Pi$, and:
    \begin{align*}
        {\rm sym}(e^{-s(M_{|x|}^2+V(-i\nabla))})(x,\omega) &= e^{-s|x|^2}+\sum_{n=1}^\infty \frac{(-s)^n}{n!}{\rm sym}(I_n)(x,\omega)\\
                                                      &= e^{-s|x|^2}\sum_{n=0}^\infty \frac{(-sV(\xi))^n}{n!}\\
                                                      &= e^{-s(|x|^2+V(\xi))}.
    \end{align*}
    
    Let $g$ be a compactly supported smooth function on ${\mathbb{R}}$ with $g(0) = 1$, and let $\varepsilon> 0$.
    Applying \cite[Theorem 1.5]{Dao1} (and the unitary invariance of the Dixmier trace), we have
    \begin{align*}
        {\rm Tr}_\omega(e^{-sH}M_{\anglb{x}}^{-d}g(\varepsilon H_0)) &= {\rm Tr}_\omega(g(\varepsilon M_{|x|}^2)e^{-s(M_{|x|}^2+V(-i\nabla))}(1-\Delta)^{-\frac{d}{2}})\\
                                                                &= \frac{1}{d(2\pi)^ds^{d/2}}\int_{S^{d-1}} e^{-sV(\xi)}\,d\xi\int_{{\mathbb{R}}^d} g(\varepsilon|x|^2)e^{-|x|^2}\,dx.
    \end{align*}    
    
    As $\varepsilon\to 0$ the integral $\int_{{\mathbb{R}}^d} e^{-|x|^2}g(\varepsilon|x|^2)\,dx$ converges to $\pi^{d/2}$ by the dominated convergence theorem.
    Therefore, for all $s > 0$:
    \begin{equation*}
        \lim_{\varepsilon\to 0} {\rm Tr}_\omega(e^{-sH}M^{-d}_{\anglb{x}}g(s\varepsilon H_0)) = \frac{1}{d(4\pi)^{d/2}}\int_{S^{d-1}} s^{-d/2}e^{-sV(\xi)}\,d\xi.
    \end{equation*}
    We need to show that the limit on the left is:
    $${\rm Tr}_\omega(e^{-sH}M^{-d}_{\anglb{x}}).$$
    To achieve this, write $e^{-sH} = e^{-sH/2}e^{-sH/2}$ and use the cyclic property of ${\rm Tr}_\omega$
    \begin{equation*}
        {\rm Tr}_\omega(e^{-sH}M^{-d}_{\anglb{x}}g(s\varepsilon H_0)) = {\rm Tr}_\omega(e^{-sH/2}M^{-d}_{\anglb{x}}g(s\varepsilon H_0)e^{-sH/2}).
    \end{equation*}
    The operator $e^{-sH/2}M^{-d}_{\anglb{x}}$ is in the ideal ${\mathcal{L}}_{1,\infty}$ (this is a consequence of Theorem \ref{main_cwikel_estimate}, to be proved in Section \ref{cwikel_section}). We claim that:
    \begin{equation*}
        \lim_{\varepsilon\to 0} g(s\varepsilon H_0)e^{-sH/2} = e^{-sH/2}
    \end{equation*}
    in the uniform norm. To see this, note that the operator $(H+i)e^{-sH/2}$ is bounded, and the resolvent identity implies that $(H_0+i)(H+i)^{-1}$ is bounded (see the proof of Theorem \ref{main_cwikel_estimate}), so it follows that $(H_0+i)e^{-sH/2}$ is also bounded. By functional calculus for self-adjoint operators, we have:
    \begin{equation*}
        \lim_{\varepsilon\to 0} g(s\varepsilon H_0)(H_0+i)^{-1} = (H_0+i)^{-1}
    \end{equation*}
    in the operator norm. Hence as $\varepsilon\to 0$ we have
    \begin{align*}
        g(s\varepsilon H_0)e^{-sH/2} &= g(s\varepsilon H_0)(H_0+i)^{-1}(H_0+i)e^{-sH/2}\\
                                     &\rightarrow (H_0+i)^{-1}(H_0+i)e^{-sH/2}\\
                                     &= e^{-sH/2}
    \end{align*}
    in the operator norm. Therefore as $\varepsilon\to 0$,
    \begin{equation*}
        e^{-sH/2}M^{-d}_{\anglb{x}}g(s\varepsilon H_0)e^{-sH/2}\rightarrow e^{-sH/2}M^{-d}_{\anglb{x}}e^{-sH/2}
    \end{equation*}
    in the ${\mathcal{L}}_{1,\infty}$ topology. 
    Since ${\rm Tr}_\omega$ is continuous,
    \begin{equation*}
        \lim_{\varepsilon\to 0}{\rm Tr}_\omega(e^{-sH/2}M^{-d}_{\anglb{x}}g(\varepsilon sH_0)e^{-sH/2}) = {\rm Tr}_\omega(e^{-sH/2}M^{-d}_{\anglb{x}}e^{-sH/2}) = {\rm Tr}_\omega(e^{-sH}M^{-d}_{\anglb{x}}).
    \end{equation*}
    
   Therefore we have proved that:
   \begin{equation*}
       {\rm Tr}_\omega(e^{-sH}M^{-d}_{\anglb{x}}) = \frac{1}{d(4\pi s)^{d/2}}\int_{S^{d-1}} e^{-sV(\xi)}\,d\xi.
   \end{equation*}
    
    Hence, \emph{assuming that the density of states $\nu_H$ for $H$ exists}, it follows from Theorem \ref{T: main} that:
    \begin{align*}
        \int_{{\mathbb{R}}} e^{-s\lambda}\,d\nu_{H}(\lambda) & =  \frac d{\omega_d}{\rm Tr}_\omega \brs{ e^{-sH} M^{-d}_{\anglb{x}}}\\
                                                     & = \frac{1}{\omega_d}\int_{S^{d-1}} (4\pi s)^{-d/2}e^{-sV(\xi)}\,d\xi.
    \end{align*}
    Recall from Example \ref{E: 1} that the Laplace transform of the density of states for $H_0$ is given by:
    \begin{equation*}
        \int_{{\mathbb{R}}} e^{-s\lambda}\,d\nu_{H_0}(\lambda) = (4\pi s)^{-d/2}.
    \end{equation*}
    Thus for a constant perturbation $H_0+c$, where $c \in {\mathbb{R}}$, we have
    \begin{equation*}
        \int_{{\mathbb{R}}} e^{-s\lambda}\,d\nu_{H_0+c}(\lambda) = (4\pi s)^{-d/2}e^{-sc}.
    \end{equation*}
    Therefore,
    \begin{equation*}
        \int_{{\mathbb{R}}} e^{-s\lambda}\,d\nu_H(\lambda) = \frac{1}{\omega_d} \int_{S^{d-1}}\int_{{\mathbb{R}}} e^{-s\lambda}\, d\nu_{H_0+V(\xi)}(\lambda)d\xi.
    \end{equation*}
    Finally, Fubini's theorem and the uniqueness of the Laplace transform (c.f. Remark \ref{laplace_uniqueness}) yields
    \begin{equation*}
        \nu_H = \frac{1}{\omega_d}\int_{S^{d-1}} \nu_{H_0+V(\xi)}\,d\xi
    \end{equation*}
    in the sense that if $I\subseteq {\mathbb{R}}$ is Borel, then we have
    \begin{equation*}
        \nu_H(I) = \frac{1}{\omega_d}\int_{S^{d-1}} \nu_{H_0+V(\xi)}(I)\,d\xi.
    \end{equation*}
    
\end{proof}

The result of Theorem \ref{homogeneous_potential_theorem} can be put into a more explicit form by observing that
the density of states for a Hamiltonian $H_0+c$ where $c \in {\mathbb{R}}$ is given by:
\begin{equation*}
    \nu_{H_0+c}((-\infty,t]) = \frac{\omega_d}{d(2\pi)^d}(t-c)_+^{\frac{d}{2}}
\end{equation*}
where $t_+ = \max\{t,0\}$.
Hence if $V$ is positively homogeneous and $\nu_{H_0+V}$ exists, then the result of Theorem \ref{homogeneous_potential_theorem} states that
\begin{equation*}
    \nu_{H_0+V}((-\infty,t]) = \frac{1}{d(2\pi)^d} \int_{S^{d-1}} (t-V(\xi))_+^{\frac{d}{2}}\,d\xi.
\end{equation*}
    
\subsection{Stability of the density of states}
As another example of the utility of Theorem \ref{T: main}, we also show how the theorem can be used to prove the stability of the density of states under ``small" perturbations.
Let $V \in L_\infty({\mathbb{R}}^d)$ be real valued, and let 
\begin{equation*}
    H := -\Delta+M_V
\end{equation*}
be the corresponding Schr\"odinger operator. We are interested in perturbations $H+M_{V_0}$ where $V_0 \in L_\infty({\mathbb{R}}^d)$ is such that
$$M_{V_0}e^{s\Delta}$$
is \emph{compact} on $L_2({\mathbb{R}}^d)$ for any $s > 0$. We have the following:
\begin{thm}\label{stability_of_dos}
    If the density of states measure exists for both $H$ and $H+M_{V_0}$, then the two measures are equal.
\end{thm}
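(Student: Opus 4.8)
The plan is to reduce everything to the Laplace transform, exactly as in the proof of Theorem~\ref{homogeneous_potential_theorem}. Write $H' := H + M_{V_0}$. By Theorem~\ref{T: main} applied to the functions $\lambda \mapsto e^{-s\lambda}$ (legitimate as in Example~\ref{E: 1}, since $e^{-sH}M^{-d}_{\anglb{x}}$ and $e^{-sH'}M^{-d}_{\anglb{x}}$ lie in ${\mathcal{L}}_{1,\infty}$ by Theorem~\ref{main_cwikel_estimate}) together with the injectivity of the Laplace transform on Borel measures (Remark~\ref{laplace_uniqueness}), it suffices to prove
\begin{equation*}
{\rm Tr}_\omega\brs{e^{-sH}M^{-d}_{\anglb{x}}} = {\rm Tr}_\omega\brs{e^{-sH'}M^{-d}_{\anglb{x}}}
\end{equation*}
for every $s>0$ and every Dixmier trace ${\rm Tr}_\omega$ on ${\mathcal{L}}_{1,\infty}$.

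The first substantive step is to show that $e^{-sH'}-e^{-sH}$ is a compact operator for every $s>0$. Fix $c>0$ large enough that $-c$ lies outside the spectra of $H_0=-\Delta$, $H$ and $H'$. Since $(H_0+c)^{-1}=\int_0^\infty e^{-ct}e^{t\Delta}\,dt$, with the integral converging in operator norm, and since each $M_{V_0}e^{t\Delta}$ is compact by hypothesis while $t\mapsto e^{-ct}M_{V_0}e^{t\Delta}$ is norm continuous on $(0,\infty)$ and dominated by the integrable function $e^{-ct}\|V_0\|_\infty$, the operator $M_{V_0}(H_0+c)^{-1}=\int_0^\infty e^{-ct}M_{V_0}e^{t\Delta}\,dt$ is compact. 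The resolvent identity $(H+c)^{-1}=(H_0+c)^{-1}-(H_0+c)^{-1}M_V(H+c)^{-1}$ then shows that $M_{V_0}(H+c)^{-1}$ is compact, and the same argument (with $M_{V+V_0}$ in place of $M_V$) shows $M_{V_0}(H'+c)^{-1}$ is compact. Consequently $M_{V_0}e^{-uH}=M_{V_0}(H+c)^{-1}\cdot(H+c)e^{-uH}$ is compact for each $u>0$, being a product of a compact operator and the bounded operator $(H+c)e^{-uH}$ (bounded by the functional calculus, since $H$ is bounded below). The Duhamel formula
\begin{equation*}
e^{-sH'}-e^{-sH} = -\int_0^s e^{-(s-u)H'}M_{V_0}e^{-uH}\,du
\end{equation*}
then exhibits $e^{-sH'}-e^{-sH}$ as a norm-convergent integral of a bounded, piecewise norm-continuous, compact-operator-valued function on $[0,s]$, hence it is compact.

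To conclude, write $e^{-sH}=e^{-sH/2}e^{-sH/2}$ and use the cyclicity of the Dixmier trace (legitimate since $e^{-sH/2}M^{-d}_{\anglb{x}}\in{\mathcal{L}}_{1,\infty}$ by Theorem~\ref{main_cwikel_estimate}) to obtain ${\rm Tr}_\omega\brs{e^{-sH}M^{-d}_{\anglb{x}}}={\rm Tr}_\omega\brs{e^{-sH/2}M^{-d}_{\anglb{x}}e^{-sH/2}}$, and likewise for $H'$ (using also that $M^{-d}_{\anglb{x}}e^{-sH'/2}\in{\mathcal{L}}_{1,\infty}$, by taking adjoints). Their difference decomposes as
\begin{equation*}
\brs{e^{-sH'/2}-e^{-sH/2}}M^{-d}_{\anglb{x}}e^{-sH'/2} + e^{-sH/2}M^{-d}_{\anglb{x}}\brs{e^{-sH'/2}-e^{-sH/2}},
\end{equation*}
and each summand is the product of a compact operator with an operator from ${\mathcal{L}}_{1,\infty}$; such a product has $n$-th singular value $o(n^{-1})$ and therefore lies in the kernel of every Dixmier trace. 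Hence the two traces agree, which by the reduction above proves $\nu_H=\nu_{H+M_{V_0}}$.

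The step I expect to be the main obstacle is the second one: converting the hypothesis on $M_{V_0}e^{s\Delta}$ into compactness of $M_{V_0}e^{-uH}$, and, more importantly, arranging the comparison so that every operator to which ${\rm Tr}_\omega$ is applied remains inside ${\mathcal{L}}_{1,\infty}$. A naive comparison of $g(H)M^{-d}_{\anglb{x}}$ with $g(H')M^{-d}_{\anglb{x}}$ fails, because $\brs{g(H)-g(H')}M^{-d}_{\anglb{x}}$ is only controlled in the larger ideal ${\mathcal{L}}_{d,\infty}$ on which Dixmier traces are not defined; splitting the semigroup symmetrically, so that the compact ``difference'' factor is sandwiched by smoothing operators on both sides, is precisely what keeps us in ${\mathcal{L}}_{1,\infty}$.
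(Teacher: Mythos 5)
Your proof is correct and follows essentially the same route as the paper: reduce to the Laplace transform level, show $e^{-sH'}-e^{-sH}$ is compact via Duhamel's formula, and then use the symmetric splitting $e^{-sH}=e^{-sH/2}e^{-sH/2}$ together with cyclicity of the Dixmier trace so that the compact difference is sandwiched, placing the discrepancy in $\mathcal{K}\cdot\mathcal{L}_{1,\infty}+\mathcal{L}_{1,\infty}\cdot\mathcal{K}\subset(\mathcal{L}_{1,\infty})_0$, which lies in the kernel of every Dixmier trace. The only minor variation is how you obtain compactness of $M_{V_0}e^{-uH}$: you pass through $M_{V_0}(H_0+c)^{-1}$ (via a Laplace transform of the heat semigroup) and the resolvent identity, whereas the paper's Lemma~\ref{first_compact_perturbation_lemma} applies Duhamel directly to $e^{-sH}$ expanded around $e^{s\Delta}$; both yield the same conclusion.
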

This theorem reflects the stability of the density of states under ``localised" perturbations. Similar statements are already known in the literature \cite[Theorem C.7.7]{SimonSemigroups}, but under different assumptions
on $V$ and $V_0$.

To give some feeling for the class of potentials which satisfy the stated condition, the following is a sufficient condition.
\begin{lemma}\label{tau_compact_lemma}
    Let $V_0 \in L_\infty({\mathbb{R}}^d)$ be a potential such that for all $t > 0$, 
    \begin{equation*}
        |\{x\in {\mathbb{R}}^d \;:\;|V_0(x)|\geq t\}| < \infty.
    \end{equation*}
    Then $M_{V_0}e^{s\Delta}$ is compact for all $s > 0$.
\end{lemma}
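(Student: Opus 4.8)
The plan is to reduce, by a truncation argument, to the case of a potential lying in $L_2({\mathbb{R}}^d)$, for which the operator $M_f e^{s\Delta}$ is easily seen to be Hilbert--Schmidt, and then pass to a uniform limit.

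Fix $s > 0$. For $t > 0$ let $E_t = \{x \in {\mathbb{R}}^d : |V_0(x)| \geq t\}$, which has finite Lebesgue measure by hypothesis, and split $V_0 = V_t + W_t$, where $V_t = V_0\chi_{E_t}$ and $W_t = V_0\chi_{{\mathbb{R}}^d\setminus E_t}$. Since $V_0$ is bounded, $V_t \in L_2({\mathbb{R}}^d)$ with $\|V_t\|_2 \leq \|V_0\|_\infty |E_t|^{1/2}$, while $\|W_t\|_\infty \leq t$. As $e^{s\Delta}$ is a contraction on $L_2({\mathbb{R}}^d)$, this gives $\|M_{W_t}e^{s\Delta}\|_{{\mathcal{B}}(L_2({\mathbb{R}}^d))} \leq t$.

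It remains to see that $M_f e^{s\Delta}$ is compact whenever $f \in L_2({\mathbb{R}}^d)$. The operator $e^{s\Delta}$ acts by convolution with the Gaussian $K_s(z) = (4\pi s)^{-d/2} e^{-|z|^2/(4s)}$, so $M_f e^{s\Delta}$ has integral kernel $(x,y) \mapsto f(x) K_s(x-y)$, and
$$
\int_{{\mathbb{R}}^d}\int_{{\mathbb{R}}^d} |f(x)|^2 |K_s(x-y)|^2 \,dy\,dx = \|f\|_2^2\, \|K_s\|_2^2 < \infty,
$$
so $M_f e^{s\Delta}$ is Hilbert--Schmidt, hence compact. (This is the $p = 2$ instance of the classical Kato--Seiler--Simon/Cwikel bound for $M_f\, g(-i\nabla)$ with $g(\xi) = e^{-s|\xi|^2} \in L_2({\mathbb{R}}^d)$.) Applying this with $f = V_t$ shows that $M_{V_t}e^{s\Delta}$ is compact.

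Finally, $\|M_{V_0}e^{s\Delta} - M_{V_t}e^{s\Delta}\|_{{\mathcal{B}}(L_2({\mathbb{R}}^d))} = \|M_{W_t}e^{s\Delta}\|_{{\mathcal{B}}(L_2({\mathbb{R}}^d))} \leq t \to 0$ as $t \to 0^+$, so $M_{V_0}e^{s\Delta}$ is a uniform limit of compact operators and is therefore compact. There is no genuine obstacle here: the only input beyond elementary measure theory is the Hilbert--Schmidt property of $M_f e^{s\Delta}$ for $f \in L_2$, which is standard and can alternatively be obtained as a special case of the Cwikel-type estimates developed in Section~\ref{cwikel_section}.
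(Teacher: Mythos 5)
Your proof is correct and follows essentially the same approach as the paper: truncate $V_0$ to the set where $|V_0|\geq t$ to get an $L_2$ piece yielding a Hilbert--Schmidt (hence compact) operator, then observe that the remaining piece has small operator norm, so $M_{V_0}e^{s\Delta}$ is a uniform limit of compact operators. The only cosmetic difference is that you compute the Hilbert--Schmidt kernel estimate explicitly, whereas the paper defers this to the discussion of classical Cwikel-type estimates in Section~\ref{cwikel_section}.
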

\begin{proof}
    Let $\varepsilon> 0$, and let
    \begin{equation*}
        A_{\varepsilon} = \{x \in {\mathbb{R}}^d\;:\;|V_0(x)| \geq \varepsilon\}. 
    \end{equation*}
    Then $\|(1-\chi_{A_{\varepsilon}})V_0\|_{\infty} \leq \varepsilon$, and hence:
    \begin{equation*}
        \|M_{V_0}e^{s\Delta}-M_{\chi_{A_{\varepsilon}}V_0}e^{s\Delta}\|_{\infty} = \|M_{(1-\chi_{A_{\varepsilon}})V}e^{s\Delta}\|_\infty \leq \varepsilon\|e^{s\Delta}\|_{\infty} = \varepsilon.
    \end{equation*}
    Therefore in the operator norm we have
    $$\lim_{\varepsilon\to 0} M_{\chi_{A_{\varepsilon}}}M_{V_0}e^{s\Delta} = M_{V_0}e^{s\Delta}.$$
    
    On the other hand, $A_{\varepsilon}$ has finite measure and $V_0$ is bounded, so it follows that the function $\chi_{A_{\varepsilon}}V_0$ is square integrable. Hence
    the operator $M_{\chi_{A_{\varepsilon}}V_0}e^{s\Delta}$ is Hilbert-Schmidt and in particular, compact (see our review of classical Cwikel-type estimates below in Section \ref{cwikel_section}). Thus, $M_{V_0}e^{s\Delta}$ is the limit in the operator norm of a sequence of compact operators, so is itself compact.
\end{proof}

Recall that $H = -\Delta+M_V$ is an arbitrary Schr\"odinger operator with bounded real-valued potential $V$.
\begin{lemma}\label{first_compact_perturbation_lemma}
    If $V_0 \in L_\infty({\mathbb{R}}^d)$ is such that $M_{V_0}e^{s\Delta}$ is compact for all $s > 0$, then $M_{V_0}e^{-sH}$ is also compact for all $s > 0$.
\end{lemma}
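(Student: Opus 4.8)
The plan is to pass from $M_{V_0}e^{s\Delta}$ to $M_{V_0}e^{-sH}$ by writing $e^{-sH}$ in terms of $e^{s\Delta}=e^{-sH_0}$ via the Dyson (Duhamel) expansion already used in the proof of Theorem \ref{homogeneous_potential_theorem}, and observing that compactness of $M_{V_0}e^{-\theta_0 s H_0}$ is inherited by every term. Concretely, I would first write the norm-convergent series
\begin{equation*}
    e^{-sH} = e^{-sH_0} + \sum_{n=1}^\infty \frac{(-s)^n}{n!}\int_{\Delta_n} e^{-\theta_0 sH_0}M_Ve^{-\theta_1 sH_0}M_V\cdots M_Ve^{-\theta_nsH_0}\,d\theta,
\end{equation*}
and multiply on the left by $M_{V_0}$. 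The zeroth term is $M_{V_0}e^{-sH_0} = M_{V_0}e^{s\Delta}$, which is compact by hypothesis. For $n\geq 1$, the integrand of the $n$-th term, after left multiplication by $M_{V_0}$, factors as
\begin{equation*}
    \brs{M_{V_0}e^{-\theta_0 sH_0}}\cdot M_Ve^{-\theta_1 sH_0}M_V\cdots M_Ve^{-\theta_nsH_0},
\end{equation*}
where $M_{V_0}e^{-\theta_0 sH_0}$ is compact for every $\theta_0 > 0$ (the hypothesis with $\theta_0 s$ in place of $s$) and the remaining factor is bounded with norm $\leq \|V\|_\infty^n e^{-(\theta_1+\cdots+\theta_n)s\cdot 0}\le\|V\|_\infty^n$, uniformly in $\theta\in\Delta_n$. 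Hence the integrand is compact for a.e.\ $\theta\in\Delta_n$ with uniformly bounded norm, and compactness is preserved under the Bochner integral over the finite-measure simplex $\Delta_n$ and under the norm-convergent sum; therefore $M_{V_0}e^{-sH}$ is a norm limit of compact operators and so compact.

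The one point needing a little care is the endpoint $\theta_0 = 0$ in the $n$-th simplex integral, where the factor $M_{V_0}e^{-\theta_0 sH_0} = M_{V_0}$ need not be compact. This is a measure-zero set in $\Delta_n$, so it does not affect the Bochner integral: the integrand $\theta\mapsto M_{V_0}e^{-\theta_0 sH_0}M_Ve^{-\theta_1 sH_0}\cdots e^{-\theta_n sH_0}$ takes values in the compact operators for all $\theta$ with $\theta_0 > 0$, is norm-continuous on $\Delta_n$ (continuity of the semigroups in the strong sense suffices after composing with the bounded pieces, and in fact it is norm-continuous on the open face $\theta_0 > 0$), and is uniformly bounded; since the compact operators form a closed subspace of ${\mathcal{B}}(L_2({\mathbb{R}}^d))$, the integral $\int_{\Delta_n}(\cdots)\,d\theta$ lies in that subspace. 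Alternatively, and perhaps more cleanly, one can split $e^{-\theta_0 sH_0} = e^{-\theta_0 sH_0/2}e^{-\theta_0 sH_0/2}$ is awkward at $\theta_0=0$; instead simply note $M_{V_0}e^{-\theta_0 s H_0}=\brs{M_{V_0}e^{-\frac{s}{2}H_0}}e^{(\frac{s}{2}-\theta_0 s)(-H_0)}$ whenever $\theta_0\le \tfrac12$, exhibiting compactness with a \emph{uniform} compact left factor $M_{V_0}e^{-\frac s2 H_0}$ and bounded right factor; on the remaining part $\theta_0\ge\tfrac12$ of the simplex one writes instead the analogous decomposition peeling $e^{-\theta_n s H_0}$ off the right-hand end and uses the adjoint. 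Either way the remainder term in the finite Dyson expansion is handled identically, with $e^{-\theta_{N+1}sH}$ bounded by $1$.

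The main obstacle is thus purely the bookkeeping at the simplex endpoints; there is no analytic difficulty, since all semigroups involved are uniformly bounded ($H\geq -\|V\|_\infty$, so $\|e^{-sH}\|\le e^{s\|V\|_\infty}$, and $\|e^{-sH_0}\|\le 1$), and compactness is a norm-closed, two-sided-ideal property. Once $M_{V_0}e^{-sH}$ is shown compact, this lemma feeds into the proof of Theorem \ref{stability_of_dos} by allowing a symmetric Dyson-type comparison of $e^{-s(H+M_{V_0})}$ with $e^{-sH}$ modulo operators whose product with $M_{\anglb{x}}^{-d}$ lies in the trace class, so that the Dixmier trace — and hence, by Theorem \ref{T: main}, the density of states — is unchanged.
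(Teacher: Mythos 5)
Your main argument is correct, but it takes a genuinely different route from the paper. You expand $e^{-sH}$ in the full Dyson series and multiply by $M_{V_0}$ on the left, arguing that each $n$-fold simplex integrand is compact off the measure-zero face $\{\theta_0=0\}$ and uniformly bounded, so that every term of the series is compact and the norm-convergent sum is compact. The paper instead performs a \emph{single} Duhamel iteration,
$e^{-sH} = e^{s\Delta}-s\int_0^1 e^{-s\theta H}M_{V}e^{s(1-\theta)\Delta}\,d\theta$,
and first passes to the adjoint so as to work with $e^{-sH}M_{V_0}$; then only one $\theta$-dependent factor, $e^{s(1-\theta)\Delta}M_{V_0}$, carries the compactness, the integrand is compact for $\theta\in(0,1)$, and a single appeal to Bochner's theorem (weak measurability by strong continuity plus separability of $\mathcal{K}$) finishes the proof. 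The single-iteration approach is more economical: one endpoint to worry about rather than a face of every $\Delta_n$, and no infinite summation to control; your approach buys a slightly more uniform and symmetric picture but at the cost of extra bookkeeping. Both resolve the endpoint issue the same way, namely that it is a null set and the integrand is $\mathcal{K}$-valued a.e., weakly measurable, and uniformly bounded.

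One small but genuine error: your ``more clean'' alternative factorization
$M_{V_0}e^{-\theta_0 s H_0}=\brs{M_{V_0}e^{-\frac{s}{2}H_0}}e^{(\frac{s}{2}-\theta_0 s)(-H_0)}$
does not hold; the right-hand side equals $M_{V_0}e^{-(s-\theta_0 s)H_0}$, not $M_{V_0}e^{-\theta_0 s H_0}$. Moreover, even with the algebra corrected, the idea cannot work: peeling off a compact factor $M_{V_0}e^{-\alpha H_0}$ with $\alpha>0$ fixed would force the remaining factor to be $e^{(\alpha-\theta_0 s)H_0}$, which is \emph{unbounded} when $\theta_0 s<\alpha$, so there is no uniform compact left factor as $\theta_0\downarrow 0$. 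This is exactly why the null-set/Bochner argument (your first route, and the paper's) is the right tool here; drop the alternative.
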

\begin{proof}
    By taking the adjoint, it suffices to check that $e^{-sH}M_{V_0}$ is compact. Using Duhamel's formula:
    \begin{equation*}
        e^{-sH} = e^{s\Delta}-s\int_0^1 e^{-s\theta H}M_{V}e^{s(1-\theta)\Delta}\,d\theta
    \end{equation*}
    It follows that
    \begin{equation*}
        e^{-sH}M_{V_0}-e^{s\Delta}M_{V_0} = -s\int_0^1 e^{-s\theta H}M_Ve^{s(1-\theta)\Delta}M_{V_0}\,d\theta.
    \end{equation*}
    We will complete the proof by showing that the integral on the right hand side
    is a convergent ${\mathcal{K}}(L_2({\mathbb{R}}^d))$-valued Bochner integral, and hence in particular is an element of ${\mathcal{K}}(L_2({\mathbb{R}}^d))$. Note that the integrand belongs to ${\mathcal{K}}(L_2({\mathbb{R}}^d))$ for each $\theta \in (0,1)$,
    and has uniformly bounded operator norm.
    Since ${\mathcal{K}}(L_2({\mathbb{R}}^d))$ is separable, the Bochner theorem implies that to ensure Bochner integrability it is enough to show that $\theta \mapsto e^{-s\theta H}M_{V}e^{-s(1-\theta)\Delta}M_{V_0}$ is weakly measurable. Since the semigroups $e^{-s\theta H}$ and $e^{s(1-\theta)\Delta}$ are strongly continuous and uniformly bounded, it follows that the integrand is weakly continuous and hence weakly measurable. Thus the integral defines an element of ${\mathcal{K}}(L_2({\mathbb{R}}^d))$, and
    this completes the proof.
\end{proof}

\begin{lemma}\label{second_compact_perturbation_lemma}
    If $V_0 \in L_\infty({\mathbb{R}}^d)$ is such that $M_{V_0}e^{s\Delta}$ is compact for all $s > 0$, then the difference
    $$e^{-s(H+M_{V_0})}-e^{-sH}$$
    is compact for all $s > 0$.
\end{lemma}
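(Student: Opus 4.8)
The plan is to repeat the Duhamel argument from the proof of Lemma~\ref{first_compact_perturbation_lemma}, this time comparing the two semigroups directly. Set $\widetilde{H} := H + M_{V_0} = -\Delta + M_{V+V_0}$; this is again a Schr\"odinger operator with bounded real-valued potential, hence self-adjoint and bounded below, and $\{e^{-u\widetilde{H}}\}_{u\geq 0}$ is a bounded semigroup. Differentiating $u \mapsto e^{-u\widetilde{H}}e^{-(s-u)H}$ and integrating over $[0,s]$ yields Duhamel's formula in the form
\begin{equation*}
    e^{-s\widetilde{H}} - e^{-sH} = -\int_0^s e^{-u\widetilde{H}}\, M_{V_0}\, e^{-(s-u)H}\,du .
\end{equation*}
Thus it is enough to show that the right-hand side is a convergent ${\mathcal{K}}(L_2({\mathbb{R}}^d))$-valued Bochner integral, and hence in particular a compact operator.

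First I would note that for each $u \in (0,s)$ the operator $M_{V_0}e^{-(s-u)H}$ is compact: this is exactly Lemma~\ref{first_compact_perturbation_lemma} (with ``$s$'' there equal to $s-u>0$). Multiplying on the left by the bounded operator $e^{-u\widetilde{H}}$ shows that the integrand $e^{-u\widetilde{H}}M_{V_0}e^{-(s-u)H}$ lies in ${\mathcal{K}}(L_2({\mathbb{R}}^d))$ for every $u \in (0,s)$, the two endpoints forming a null set. Next, since $-\Delta \geq 0$ we have $\|e^{-u\widetilde{H}}\| \leq e^{u\|V+V_0\|_\infty}$ and $\|e^{-(s-u)H}\| \leq e^{(s-u)\|V\|_\infty}$, so the integrand is uniformly bounded in operator norm for $u \in [0,s]$. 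Finally, for $\xi,\eta \in L_2({\mathbb{R}}^d)$ we have $\anglb{e^{-u\widetilde{H}}M_{V_0}e^{-(s-u)H}\xi,\,\eta} = \anglb{M_{V_0}e^{-(s-u)H}\xi,\, e^{-u\widetilde{H}}\eta}$, and strong continuity of the semigroups $\{e^{-uH}\}$, $\{e^{-u\widetilde{H}}\}$ together with joint continuity of the inner product make this a continuous function of $u$; hence the integrand is weakly continuous, in particular weakly measurable. As ${\mathcal{K}}(L_2({\mathbb{R}}^d))$ is separable, the Bochner theorem then gives that the integral defines an element of ${\mathcal{K}}(L_2({\mathbb{R}}^d))$, as required.

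I do not expect a genuine obstacle here: the argument is essentially a rerun of the proof of Lemma~\ref{first_compact_perturbation_lemma}. The only points that call for mild care are the justification of Duhamel's formula as a norm-convergent operator integral (standard for bounded perturbations of generators of bounded semigroups), and the fact that Lemma~\ref{first_compact_perturbation_lemma} is invoked only for interior times $u \in (0,s)$ --- which is harmless, since the endpoints contribute nothing to the integral.
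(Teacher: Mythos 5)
Your proof is correct and essentially the same as the paper's: both apply Duhamel's formula to compare $e^{-s(H+M_{V_0})}$ with $e^{-sH}$, invoke Lemma~\ref{first_compact_perturbation_lemma} to get compactness of the integrand, and conclude via a ${\mathcal{K}}(L_2({\mathbb{R}}^d))$-valued Bochner integral using separability, uniform boundedness, and weak continuity. The only cosmetic difference is your change of variable $u = s\theta$ in the Duhamel integral; the paper integrates over $\theta \in [0,1]$ and you integrate over $u \in [0,s]$.
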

\begin{proof}
    Using Duhamel's formula:
    \begin{equation*}
        e^{-s(H+M_{V_0})} = e^{-sH}-s\int_0^1 e^{-s\theta (H+M_{V_0})}M_{V_0}e^{-s(1-\theta)H}\,d\theta.
    \end{equation*}
    Lemma \ref{first_compact_perturbation_lemma} implies that the integrand $e^{-s\theta (H+M_{V_0})}M_{V_0}e^{-s(1-\theta)H}$ is compact for each $\theta \in [0,1]$
    and all $s > 0$. Similar reasoning to the proof of Lemma \ref{first_compact_perturbation_lemma} implies that the integral converges as a ${\mathcal{K}}(L_2({\mathbb{R}}^d))$-valued Bochner integral,
    and hence the difference $e^{-s(H+M_{V_0})}-e^{-sH}$ is compact.
\end{proof}

We now prove the equality of the left hand sides of \eqref{F: phi(..)=mu(g)} for the potentials $H$ and $H+M_{V_0}$.
\begin{prop}
    Suppose that $H = -\Delta+M_V$ is a Schr\"odinger operator with bounded real-valued potential $V$, and let $V_0 \in L_\infty({\mathbb{R}}^d)$ be a real-valued potential such that $M_{V_0}e^{s\Delta}$
    is compact for all $s > 0$.
    Then for all Dixmier traces ${\rm Tr}_\omega$ and all $s > 0$ we have:
    \begin{equation*}
        {\rm Tr}_\omega(e^{-sH}(1+M_{|x|}^2)^{-d/2}) = {\rm Tr}_\omega(e^{-s(H+M_{V_0})}(1+M_{|x|}^2)^{-d/2}).
    \end{equation*}
\end{prop}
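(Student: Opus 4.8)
The plan is to use the cyclicity of the Dixmier trace to rewrite $e^{-sH}$ as a ``sandwich'' $e^{-sH/2}\,M^{-d}_{\anglb{x}}\,e^{-sH/2}$, after which the difference between the expressions for $H$ and $H+M_{V_0}$ becomes a sum of two products of a compact operator with a weak trace class operator, on which every Dixmier trace vanishes. To set this up, first I would record the required membership: by Theorem \ref{main_cwikel_estimate}, applied with time $s/2$ to the Schr\"odinger operators $H$ and $H+M_{V_0}$ (both with bounded real-valued potentials), the operators $e^{-sH/2}M^{-d}_{\anglb{x}}$ and $e^{-s(H+M_{V_0})/2}M^{-d}_{\anglb{x}}$ lie in ${\mathcal{L}}_{1,\infty}$; since ${\mathcal{L}}_{1,\infty}$ is invariant under adjoints, so do $M^{-d}_{\anglb{x}}e^{-sH/2}$ and $M^{-d}_{\anglb{x}}e^{-s(H+M_{V_0})/2}$. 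Writing $e^{-sH}=e^{-sH/2}e^{-sH/2}$ and invoking ${\rm Tr}_\omega(AB)={\rm Tr}_\omega(BA)$ for $A\in{\mathcal{L}}_{1,\infty}$ and $B$ bounded, I obtain ${\rm Tr}_\omega\brs{e^{-sH}M^{-d}_{\anglb{x}}}={\rm Tr}_\omega\brs{e^{-sH/2}M^{-d}_{\anglb{x}}e^{-sH/2}}$, and the analogous identity for $H+M_{V_0}$; thus it suffices to compare the two sandwiched traces.

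Next, set $P=e^{-s(H+M_{V_0})/2}$, $Q=e^{-sH/2}$, $R=M^{-d}_{\anglb{x}}=(1+M_{|x|}^2)^{-d/2}$ and $W=P-Q$. By Lemma \ref{second_compact_perturbation_lemma} applied with $s/2$ in place of $s$, the operator $W$ is compact. Since $P=Q+W$, one has the algebraic identity $PRP-QRQ=WRP+QRW$, and therefore
\begin{equation*}
{\rm Tr}_\omega\brs{e^{-s(H+M_{V_0})}M^{-d}_{\anglb{x}}}-{\rm Tr}_\omega\brs{e^{-sH}M^{-d}_{\anglb{x}}}={\rm Tr}_\omega(WRP)+{\rm Tr}_\omega(QRW).
\end{equation*}

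It remains to show that both terms on the right vanish. Here I would use the standard fact that a Dixmier trace annihilates every operator $T\in{\mathcal{L}}_{1,\infty}$ whose singular values satisfy $\mu(n,T)=o(n^{-1})$ (equivalently, it vanishes on the norm closure of the finite rank operators in ${\mathcal{L}}_{1,\infty}$). For $WRP$: the factor $RP=M^{-d}_{\anglb{x}}e^{-s(H+M_{V_0})/2}$ lies in ${\mathcal{L}}_{1,\infty}$ by the first paragraph, so $\mu(n,RP)=O(n^{-1})$, while $W$ is compact, so $\mu(n,W)\to 0$; from $\mu(2n,WRP)\leq\mu(n,W)\mu(n,RP)$ it follows that $WRP\in{\mathcal{L}}_{1,\infty}$ and $n\mu(n,WRP)\to 0$, whence ${\rm Tr}_\omega(WRP)=0$. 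The term $QRW$ is treated in the same way, now with $QR=e^{-sH/2}M^{-d}_{\anglb{x}}\in{\mathcal{L}}_{1,\infty}$ and the compact $W$ on the right. This yields the asserted equality.

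The one point requiring care is the algebra of the second step: the symmetrisation carried out in the first step is precisely what makes the clean splitting $PRP-QRQ=WRP+QRW$ possible, in which each summand is an admissible product of a compact operator with an operator of ${\mathcal{L}}_{1,\infty}$. Without it, the naive expansion of $PRP-QRQ$ produces a term of the shape $WRW$, for which neither membership in ${\mathcal{L}}_{1,\infty}$ nor vanishing of the Dixmier trace is at all obvious; so the main (and fairly modest) obstacle is simply to organise the computation so that every occurrence of $W$ stands next to a genuine weak-trace-class factor.
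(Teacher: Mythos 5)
Your proof is correct and follows essentially the same route as the paper: symmetrise via cyclicity of the Dixmier trace, use Lemma \ref{second_compact_perturbation_lemma} (at time $s/2$) to obtain a compact difference $W$, and then telescope $PRP-QRQ=WRP+QRW$ so that each summand is a compact operator times a weak-trace-class operator and hence lies in the separable part $({\mathcal{L}}_{1,\infty})_0$, which is annihilated by every Dixmier trace. The only minor variance is cosmetic: you spell out the singular-value estimate $\mu(2n,WRP)\le\mu(n,W)\mu(n,RP)$ where the paper simply invokes ${\mathcal{K}}\cdot{\mathcal{L}}_{1,\infty}+{\mathcal{L}}_{1,\infty}\cdot{\mathcal{K}}\subset({\mathcal{L}}_{1,\infty})_0$; also note that the ``naive'' expansion is not actually problematic, since $WRW=W\cdot(RW)$ with $RW=RP-RQ\in{\mathcal{L}}_{1,\infty}$, so that term too would be annihilated.
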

\begin{proof}
    It follows from Theorem \ref{main_cwikel_estimate}, to be proved below, that the operators $e^{-sH}(1+M_{|x|}^2)^{-d/2}$ and $e^{-s(H+M_{V_0})/2}(1+M_{|x|}^2)^{-d/2}$ individually belong to ${\mathcal{L}}_{1,\infty}$
    so each side of the equality is meaningful. Using the cyclic property of the Dixmier trace, we have:
    \begin{equation*}
        {\rm Tr}_\omega(e^{-sH}(1+M_{|x|}^2)^{-d/2}) = {\rm Tr}_\omega(e^{-sH/2}(1+M_{|x|}^2)^{-d/2}e^{-sH/2})
    \end{equation*}
    and similarly
    \begin{equation*}
        {\rm Tr}_\omega(e^{-s(H+M_{V_0}}(1+M_{|x|}^2)^{-d/2}) = {\rm Tr}_\omega(e^{-s(H+M_{V_0})/2}(1+M_{|x|}^2)^{-d/2}e^{-s(H+M_{V_0})/2}).
    \end{equation*}
    For each $s$, we have the identity:
    \begin{align*}
        e^{-s(H+M_{V_0})/2}(1+M_{|x|}^2)^{-d/2}&e^{-s(H+M_{V_0})/2}-e^{-sH/2}(1+M_{|x|}^2)^{-d/2}e^{-sH/2}\\
                                               &= (e^{-s(H+M_{V_0})/2}-e^{-sH/2})(1+M_{|x|}^2)^{-d/2}e^{-s(H+M_{V_0})/2}\\
                                               &\quad +e^{-sH/2}(1+M_{|x|}^2)^{-d/2}(e^{-sH_0/2}-e^{-sH/2}).
    \end{align*}
    The operators $e^{-sH/2}(1+M_{|x|}^2)^{-d/2}$ and $(1+M_{|x|}^2)^{-d/2}e^{-s(H+M_{V_0})}$ belong to ${\mathcal{L}}_{1,\infty}$, and Lemma \ref{second_compact_perturbation_lemma} implies
    that the difference $e^{-s(H+M_{V_0})/2}-e^{-sH/2}$ is compact. Hence,
    $$
        e^{-s(H+M_{V_0})/2}(1+M_{|x|}^2)^{-d/2}e^{-s(H+M_{V_0})/2}-e^{-sH/2}(1+M_{|x|}^2)^{-d/2}e^{-sH/2} \in {\mathcal{K}}\cdot {\mathcal{L}}_{1,\infty}+{\mathcal{L}}_{1,\infty}\cdot {\mathcal{K}}.
    $$
    The product of a compact operator and a weak trace-class operator belongs to $({\mathcal{L}}_{1,\infty})_0$ -- the separable part of the ideal ${\mathcal{L}}_{1,\infty}$ -- 
    and in particular is in the kernel of every Dixmier trace. Thus,
    \begin{equation*}
        {\rm Tr}_\omega(e^{-s(H+M_{V_0})/2}(1+M_{|x|}^2)^{-d/2}e^{-s(H+M_{V_0})/2}) = {\rm Tr}_\omega(e^{-sH/2}(1+M_{|x|}^2)^{-d/2}e^{-sH/2}).
    \end{equation*}
    Once again using the cyclic property of the trace, the result immediately follows.    
\end{proof}

The proof of Theorem \ref{stability_of_dos} now follows immediately from Theorem \ref{T: main} and the uniqueness property of the Laplace transform (see Remark \ref{laplace_uniqueness}), if we assume the existence of the density of states for both $H$ and $H+M_{V_0}$.

\subsection{Asymptotically homogeneous potentials}

As a straightforward combination of the preceding two sections, we can also present a formula for potentials $V\in L_\infty({\mathbb{R}}^d)$ such that there exists a ``uniform radial limit at infinity" in the sense that for all $x \in {\mathbb{R}}^d\setminus \{0\}$, the limit:
    \begin{equation*}
        V_h(x) := \lim_{r\to\infty} V(rx)
    \end{equation*}
    exists, and converges uniformly over $x \in S^{d-1}$. The function $V_h$ is positively homogeneous in the sense
    that $V_h(tx) = V_h(x)$ for all $x\in {\mathbb{R}}^d\setminus \{0\}$ and $t > 0$.
    Then if the DOS measures $\nu_{-\Delta+V}$ and $\nu_{-\Delta+V_h}$ exist, $\nu_{-\Delta+V}$ is given by the formula:
    \begin{equation*}
        \nu_{-\Delta+V} = \frac{1}{\omega_d} \int_{S^{d-1}} \nu_{-\Delta+V_h(\xi)}\,d\xi.
    \end{equation*}
Indeed, the assumption of uniform convergence on $x \in S^{d-1}$ of the limit $V_h(x) = \lim_{r\to\infty} V(rx)$ implies that $V-V_h$ satisfies the assumption of Lemma \ref{tau_compact_lemma}, and hence by Theorem \ref{stability_of_dos} it follows that $V$ and $V_h$ have the same DOS measure. Since $V_h$ is positively homogeneous, the formula for $\nu_{-\Delta+V}$ then follows from Theorem \ref{homogeneous_potential_theorem}.

\section{Preliminaries}
\subsection{Trace ideals}
The following material is standard; for more details we refer the reader to \cite{LSZ2012,Simon1979}.
Let ${\mathcal{H}}$ be a complex separable Hilbert space, and let ${\mathcal{B}}({\mathcal{H}})$ denote the set of all bounded operators on ${\mathcal{H}}$, and let ${\mathcal{K}}({\mathcal{H}})$ denote the ideal of compact operators on ${\mathcal{H}}$. Given $T\in {\mathcal{K}}({\mathcal{H}})$, the sequence of singular values 
$\mu(T) = \{\mu(k,T)\}_{k=0}^\infty$ is defined as:
\begin{equation*}
    \mu(k,T) = \inf\{\|T-R\|\;:\;\mathrm{rank}(R) \leq k\}.
\end{equation*}
Let $p \in (0,\infty).$ The Schatten class ${\mathcal{L}}_p$ is the set of operators $T$ in ${\mathcal{K}}({\mathcal{H}})$ such that $\mu(T)$ is $p$-summable, i.e. in the sequence space $\ell_p$. If $p \geq 1$ then the ${\mathcal{L}}_p$
norm is defined as:
\begin{equation*}
    \|T\|_p := \|\mu(T)\|_{\ell_p} = \left(\sum_{k=0}^\infty \mu(k,T)^p\right)^{1/p}.
\end{equation*}
With this norm ${\mathcal{L}}_p$ is a Banach space, and an ideal of ${\mathcal{B}}({\mathcal{H}})$.

The weak Schatten class ${\mathcal{L}}_{p,\infty}$ is the set of operators $T$ such that $\mu(T)$ is in the weak $L_p$-space $\ell_{p,\infty}$, with quasi-norm:
\begin{equation*}
    \|T\|_{p,\infty} = \sup_{k\geq 0} (k+1)^{1/p}\mu(k,T) < \infty.
\end{equation*}
As with the ${\mathcal{L}}_p$ spaces, ${\mathcal{L}}_{p,\infty}$ is an ideal of ${\mathcal{B}}({\mathcal{H}})$. We also have the following form
of H\"older's inequality,
\begin{equation}\label{weak holder}
    \|TS\|_{r,\infty} \leq c_{p,q}\|T\|_{p,\infty}\|S\|_{q,\infty}
\end{equation}
where $\frac{1}{r}=\frac{1}{p}+\frac{1}{q}$, for some constant $c_{p,q}$. Indeed, this follows from the definition of the weak ${\mathcal{L}}_p$-quasinorms
and the inequality $\mu(2n,TS)\leq \mu(n,T)\mu(n,S)$ for $n\geq 1$ \cite[Proposition 1.6]{Fack1982}, \cite[Corollary 2.2]{GohbergKrein}.

Note that if $r > p$, then we have the inequality:
\begin{equation}\label{zeta_function_inequality}
    \|T\|_{r}^r = \sum_{k=0}^\infty \mu(k,T)^r \leq \sum_{k=0}^\infty (k+1)^{-\frac{r}{p}}\|T\|_{p,\infty}^r = \zeta\left(\frac{r}{p}\right)\|T\|_{p,\infty}^r
\end{equation}
where $\zeta$ is Riemann's zeta function.

For $q \in [1,\infty)$, we also consider the ideal ${\mathcal{L}}_{q,1}$, defined as the set of compact operators $T$ on ${\mathcal{H}}$ satisfying:
\begin{equation*}
    \|T\|_{{\mathcal{L}}_{q,1}} := \sum_{n\geq 0} \frac{\mu(n,T)}{(n+1)^{1-\frac{1}{q}}} < \infty.
\end{equation*}
We have the following H\"older-type inequality: if $\frac{1}{p}+\frac{1}{q} = 1$ then
\begin{equation}\label{weird_holder}
    \|TS\|_1 \leq \|T\|_{p,\infty}\|S\|_{q,1}
\end{equation}
(see e.g. \cite[p. 303]{Connes}).

For this paper, the relevant continuous embeddings between these ideals are
\begin{equation}\label{L_p_embeddings}
    {\mathcal{L}}_{p,\infty}\subset {\mathcal{L}}_{q},\quad{\mathcal{L}}_{p,\infty}\subset {\mathcal{L}}_{q,1},\quad 0 < p < q \leq \infty
\end{equation}
(see e.g. \cite[\S IV.2.$\alpha$, Proposition 1]{Connes}).

Among ideals of particular interest is ${\mathcal{L}}_{1,\infty}$, and we are concerned with traces on this ideal. For more details, see \cite[Section 5.7]{LSZ2012} and \cite{SSUZ2015}. A linear functional $\varphi:{\mathcal{L}}_{1,\infty}\to {\mathbb{C}}$ is called a trace if it is unitarily invariant. That is, for all unitary operators $U$ and for all $T\in {\mathcal{L}}_{1,\infty}$ we have that $\varphi(U^*TU) = \varphi(T)$. It follows that for all bounded operators $B$ we have $\varphi(BT)=\varphi(TB).$   

A Dixmier trace ${\rm Tr}_\omega$ is a trace on ${\mathcal{L}}_{1,\infty}$ defined in terms of an extended limit $\omega \in \ell_\infty(\mathbb{N})^*$ (i.e., a continuous extension of the limit functional to $\ell_\infty(\mathbb{N})$).
Given a positive operator $T \in {\mathcal{L}}_{1,\infty}$, ${\rm Tr}_\omega(T)$ is defined as:
\begin{equation*}
    {\rm Tr}_{\omega}(T) = \omega\left(\left\{\frac{\sum_{k=0}^N\mu(k,T)}{\log(2+N)}\right\}_{N=0}^\infty\right).
\end{equation*}
If $\omega$ is \emph{dilation invariant}, that is, if for all $n\geq 1$ we have $\omega\circ \sigma_n = \omega$, where $\sigma_n$ is the dilation
semigroup $\sigma_n(\{a_j\}_{j=0}^\infty) = \{a_{\lfloor \frac{j}{n}\rfloor}\}_{j=0}^\infty$, then ${\rm Tr}_\omega$ is called a Dixmier trace and extends to a linear functional
on ${\mathcal{L}}_{1,\infty}$.

We note that it can however be proved that ${\rm Tr}_\omega$ extends to a trace on ${\mathcal{L}}_{1,\infty}$ with no extra invariance conditions on $\omega$ (see \cite[Theorem 17]{SS2013})\footnote{Moreover it can be proved that ${\rm Tr}_\omega$ is a Dixmier
trace for every extended limit $\omega$. This will appear in the upcoming second edition of \cite{LSZ2012}.}.

More generally, an extended limit is a bounded linear functional $\omega$ on $L_\infty((0,\infty))$ which extends the limit functional from the subspace of functions having limit at~$\infty$ to all of $L_\infty((0,\infty))$.

\subsection{Double operator integrals}
In this paper we will make brief use of the technique of double operator integrals for unitary operators. See e.g. \cite{AleksandrovPeller,ABF,ACS,BirmanSolomyak1989,BirmanSolomyakSurvey,dPSW}.

Given two unitary operators $U$ and $V$ on ${\mathcal{H}}$, a double operator integral
with symbol $\phi \in L_\infty(\mathbb{T}^2)$ is a linear map $T^{U,V}_{\phi}:{\mathcal{L}}_{2}\to {\mathcal{L}}_{2}$ defined as follows. The operators
$U$ and $V$ also act as unitary operators of left and right multiplication on the Hilbert-Schmidt space ${\mathcal{L}}_{2}$:
\begin{equation*}
    L_UX = UX,\quad R_VX = XV, \quad X \in {\mathcal{L}}_{2}.
\end{equation*}
As linear operators on ${\mathcal{L}}_2$, $L_U$ and $R_V$ are commuting unitary operators and hence there is a joint functional calculus
$\phi\mapsto \phi(L_U,L_V) \in {\mathcal{B}}({\mathcal{L}}_2)$ for $\phi$ a bounded function on the torus $\mathbb{T}^2$. Denote $T^{U,V}_{\phi} := \phi(L_U,R_V)$.
For a Lipschitz class function $f$ on~$\mathbb{T}$, denote by $f^{[1]}$ the divided difference function $f^{[1]}(z,w) = \frac{f(z)-f(w)}{z-w}$ set to an arbitrary value on the diagonal. 

A short computation based on a Fourier decomposition of $f$ (see \cite[Theorem 1.1.3]{AleksandrovPeller}) shows that 
\begin{equation}\label{naive_bound}
    \|T^{U,V}_{f^{[1]}}|_{{\mathcal{L}}_1}\|_{{\mathcal{L}}_1\to {\mathcal{L}}_1} \leq \|\widehat{f'}\|_{\ell_1(\mathbb{Z})} \leq \|f'\|_{L_2(\mathbb{T})}+\|f''\|_{L_2(\mathbb{T})}.
\end{equation}
Provided the above right hand side is finite, we also have that $T^{U,V}_{f^{[1]}}$ extends by duality to ${\mathcal{B}}({\mathcal{H}})$, and an interpolation argument (as described in e.g. \cite[p. 5225]{APPS}) implies that
if $p > 1$ we have
$$\|T^{U,V}_{f^{[1]}}\|_{{\mathcal{L}}_{p,\infty}\to {\mathcal{L}}_{p,\infty}} \leq \|f'\|_{L_2(\mathbb{T})}+\|f''\|_{L_2(\mathbb{T})}$$
and similarly if $q > 1$ we have
\begin{equation}\label{interpolated_naive_bound}
    \|T^{U,V}_{f^{[1]}}\|_{{\mathcal{L}}_{q,1}\to {\mathcal{L}}_{q,1}} \leq \|f'\|_{L_2(\mathbb{T})}+\|f''\|_{L_2(\mathbb{T})}.
\end{equation}

\noindent If $X \in {\mathcal{B}}({\mathcal{H}})$, then we also have the following identity (see \cite[Theorem 8.5]{BirmanSolomyakSurvey}, \cite[Theorem 3.5.4]{AleksandrovPeller} or \cite[Theorem 4.1]{BirmanSolomyak1989} for the related self-adjoint case):
\begin{equation}\label{DK_formula}
    T^{U,U}_{f^{[1]}}([U,X]) = [f(U),X].
\end{equation}

If $g$ is a bounded function on the spectrum of $U$, then it follows from the definition of $T^{U,U}_{f^{[1]}}$ that we also have:

\begin{equation}\label{compatibility}
    g(U)T^{U,U}_{f^{[1]}}(X) = T^{U,U}_{f^{[1]}}(g(U)X),\quad T^{U,U}_{f^{[1]}}(X)g(U) = T^{U,U}_{f^{[1]}}(Xg(U)),\quad X \in {\mathcal{B}}({\mathcal{H}}).
\end{equation}

\section{Cwikel type estimates}\label{cwikel_section}
We will extensively use the notation 
$$
\anglb{x} = (1+\abs{x}^2)^{1/2}
$$
for $x \in \mathbb{R}^d$ and $\abs{x}$ denotes the $\ell_2$-norm of $x$, so that $x\mapsto \anglb{x}^{-1} \in L_{d,\infty}(\mathbb{R}^d)$. Recall that $V$ is a bounded measurable real valued function on $\mathbb{R}^d$, and $H = -\Delta+M_V$ is the Schr\"odinger operator associated to the potential $V$.
We exclusively consider $d\geq 2$.

This section is devoted to a proof of the claim that for integers $p\geq 1$ and $z \in {\mathbb{C}}\setminus {\mathbb{R}}$, we have that $(H+z)^{-p}M^{-p}_{\anglb{x}}$ is in the ideal ${\mathcal{L}}_{d/p,\infty}$.
This is a crucial component to proving that the operator inside the Dixmier trace in Theorem~\ref{T: main} is indeed in the ideal ${\mathcal{L}}_{1,\infty}$. Somewhat similar estimates on the singular values of operators of the form
$f(H)M_g$ are also in \cite[Section B.9]{SimonSemigroups}.

Our proofs are based on the following classical Cwikel estimate (see \cite[Theorem 4.2]{Simon1979} or for the $p=2$ case see the more recent \cite[Corollary 1.2, Theorem 5.6]{LeSZ_cwikel}).

The function spaces $\ell_{2,\infty}(L_4)(\mathbb{R}^d)$ and $\ell_{2,\log}(L_\infty)(\mathbb{R}^d)$ are defined by the norms:  
\begin{align*}
        \|g\|_{\ell_{2,\infty}(L_4)(\mathbb{R}^d)} &= \|\{\|g\|_{L_4(k+[0,1]^d)}\}_{k\in \mathbb{Z}^d}\|_{\ell_{2,\infty}},\\
    \|f\|_{\ell_{2,\log}(L_\infty)(\mathbb{R}^d)} &= \left(\sum_{k\in \mathbb{Z}^d} (1+\log(1+|k|))\|f\|_{L_\infty(k+[0,1]^d)}^2\right)^{1/2} 
\end{align*}

where
$|k|$ denotes the $\ell_2$-norm of $k \in \mathbb{Z}^d$.
\begin{prop}\label{classical_cwikel}
    For $2 < p < \infty,$ if $f \in L_p({\mathbb{R}}^d)$ and $g \in L_{p,\infty}({\mathbb{R}}^d)$, then the operator $M_f g(-i\nabla)$ is in the ideal ${\mathcal{L}}_{p,\infty}$.
   If $g \in \ell_{2,\infty}(L_4)(\mathbb{R}^d)$ and $f \in \ell_{2,\log}(L_\infty)(\mathbb{R}^d),$
    then $M_fg(-i\nabla) \in {\mathcal{L}}_{2,\infty}(\mathbb{R}^d).$ 
\end{prop}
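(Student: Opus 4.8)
\emph{Proof sketch.} The plan is to bootstrap from the Hilbert--Schmidt ($p=2$) case by interpolation, treating the two assertions separately, since the second is a genuine endpoint statement. First I would record the classical Hilbert--Schmidt identity: for $f,g\in L_2({\mathbb{R}}^d)$ the operator $M_fg(-i\nabla)$ has integral kernel $(x,y)\mapsto (2\pi)^{-d/2}f(x)(\mathcal{F}^{-1}g)(x-y)$, so that Plancherel's theorem gives
\begin{equation*}
\|M_fg(-i\nabla)\|_{{\mathcal{L}}_2}=(2\pi)^{-d/2}\|f\|_{L_2}\|g\|_{L_2}.
\end{equation*}
At the other extreme $M_fg(-i\nabla)$ is trivially bounded, with $\|M_fg(-i\nabla)\|_{{\mathcal{B}}(L_2({\mathbb{R}}^d))}\le\|f\|_{L_\infty}\|g\|_{L_\infty}$. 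These are the two endpoints for the bilinear map $(f,g)\mapsto M_fg(-i\nabla)$.

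For the first assertion ($2<p<\infty$) the idea is to interpolate between these endpoints, using that $L_{p,\infty}({\mathbb{R}}^d)$ is a real interpolation space between $L_2$ and $L_\infty$ and that ${\mathcal{L}}_{p,\infty}$ is the corresponding real interpolation space between ${\mathcal{L}}_2$ and ${\mathcal{B}}(L_2({\mathbb{R}}^d))$. Concretely I would decompose $f$ and $g$ by level sets, $f=\sum_j f_j$ and $g=\sum_k g_k$ with $\|f_j\|_{L_\infty}\approx 2^j$, $\|g_k\|_{L_\infty}\approx 2^k$ and $|\{f_j\neq 0\}|\lesssim 2^{-jp}\|f\|_{L_p}^p$, $|\{g_k\neq 0\}|\lesssim 2^{-kp}\|g\|_{L_{p,\infty}}^p$, bound each block $M_{f_j}g_k(-i\nabla)$ either by the Hilbert--Schmidt identity (gaining from the smallness of the supports) or by the trivial operator-norm bound, and reassemble the pieces; this is essentially Cwikel's original argument, organised as a bound on the singular value function $\mu(n,M_fg(-i\nabla))\lesssim (n+1)^{-1/p}\|f\|_{L_p}\|g\|_{L_{p,\infty}}$, which is exactly membership in ${\mathcal{L}}_{p,\infty}$. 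The mild subtlety is that $f\in L_p$ does not split as an element of $L_2\cap L_\infty$, so the level-set bookkeeping on the $f$-side must be done by hand rather than by a naive bilinear interpolation.

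The second assertion is the endpoint $p=2$, and it is the main obstacle. The naive statement ``$f\in L_2$, $g\in L_{2,\infty}$ implies $M_fg(-i\nabla)\in{\mathcal{L}}_{2,\infty}$'' is known to fail, which is precisely why one must replace $L_2$ and $L_{2,\infty}$ by the slightly smaller spaces $\ell_{2,\log}(L_\infty)({\mathbb{R}}^d)$ and $\ell_{2,\infty}(L_4)({\mathbb{R}}^d)$. Here I would follow the cube-decomposition approach of \cite{LeSZ_cwikel}: partition ${\mathbb{R}}^d$ into the unit cubes $\{k+[0,1]^d\}_{k\in{\mathbb{Z}}^d}$, write $f=\sum_{k\in{\mathbb{Z}}^d}\chi_{k+[0,1]^d}f$, and estimate $M_fg(-i\nabla)$ by controlling the contribution of each cube and summing these by an almost-orthogonality argument in ${\mathcal{L}}_{2,\infty}$ (a naive quasi-triangle inequality loses here, since the per-cube norms are only weak-$\ell_2$-summable in $k$, not $\ell_1$-summable). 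The per-cube weak-Schatten bounds are obtained from the Birman--Solomyak technique of piecewise-polynomial approximation of the symbol $g$ on dyadic cubes in frequency space, and the logarithmic weight in $\ell_{2,\log}(L_\infty)$ together with the use of the local $L_4$ norm (rather than local $L_2$) in $\ell_{2,\infty}(L_4)$ are exactly what make the resulting summation converge to a single ${\mathcal{L}}_{2,\infty}$ estimate. Turning that sum of local weak-Schatten estimates into one global ${\mathcal{L}}_{2,\infty}$ bound with the logarithmically corrected norms is the technical heart of the argument.
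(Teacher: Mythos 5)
The paper does not prove this proposition: it is introduced as a ``classical Cwikel estimate'' and the argument is deferred to \cite[Theorem 4.2]{Simon1979} for the range $2<p<\infty$, and to \cite[Corollary 1.2, Theorem 5.6]{LeSZ_cwikel} for the $p=2$ endpoint, so there is no in-paper proof to compare against. Your sketch is consistent with those two sources: for $p>2$ you reproduce Cwikel's level-set decomposition interpolating between the Hilbert--Schmidt identity and the trivial operator-norm bound, and for $p=2$ you outline the unit-cube decomposition and almost-orthogonality summation of Levitina--Sukochev--Zanin; the obstructions you flag (the failure of naive bilinear interpolation on the $L_p$ side, and the failure of the naive $L_2\times L_{2,\infty}$ endpoint, repaired precisely by the logarithmic weight and the local $L_4$ norm) are the right ones. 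One small imprecision to note: in the level-set bookkeeping it is the summed bound $\sum_j 2^{jp}\,|\{f_j\neq 0\}|\lesssim\|f\|_{L_p}^p$, not merely the per-level estimate, that lets the diagonal-block sum close at the rate $\mu(n)\lesssim (n+1)^{-1/p}$; your write-up states only the per-level bound.
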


We begin with a lemma of elementary operator theory, required for the proof of the main result of this section (Theorem~\ref{main_cwikel_estimate}).
\begin{lemma}\label{inclusion lemma} 
Let $A,B,C$ be bounded operators such that $A=B-AC.$ If $B\in\mathcal{L}_{p_0,\infty}$ and $C\in\mathcal{L}_{p_1,\infty},$ for $0 < p_0,p_1 < \infty$ then $A\in\mathcal{L}_{p_0,\infty}.$
\end{lemma}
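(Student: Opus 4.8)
The plan is to iterate the identity $A = B - AC$ so as to write $A$ as a finite sum of operators visibly belonging to $\mathcal{L}_{p_0,\infty}$, together with a remainder of the form (bounded operator)$\,\cdot\, C^N$ with $N$ chosen large.

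First I would rewrite the hypothesis as $A = B + A(-C)$ and substitute it into itself repeatedly, proving by induction on $N$ that for every integer $N \geq 1$ there holds the exact algebraic identity
\[
A = B\sum_{k=0}^{N-1}(-C)^k + A(-C)^N .
\]
Since this is merely an identity among bounded operators, no question of convergence arises. Next I would estimate the two pieces separately. Each summand $B(-C)^k$ is the product of $B \in \mathcal{L}_{p_0,\infty}$ with the bounded operator $(-C)^k$, hence lies in the ideal $\mathcal{L}_{p_0,\infty}$; as $\mathcal{L}_{p_0,\infty}$ is a linear subspace, the finite sum $B\sum_{k=0}^{N-1}(-C)^k$ belongs to $\mathcal{L}_{p_0,\infty}$.

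For the remainder term, I would use the weak H\"older inequality \eqref{weak holder} to deduce from $C \in \mathcal{L}_{p_1,\infty}$ that $C^N \in \mathcal{L}_{p_1/N,\infty}$. Choosing $N$ to be any integer with $N > p_1/p_0$, we get $p_1/N < p_0$, so the embedding \eqref{L_p_embeddings} gives $\mathcal{L}_{p_1/N,\infty} \subset \mathcal{L}_{p_0} \subset \mathcal{L}_{p_0,\infty}$, whence $C^N \in \mathcal{L}_{p_0,\infty}$. Consequently $A(-C)^N = \pm A\,C^N$ is a bounded operator times an element of the ideal $\mathcal{L}_{p_0,\infty}$, hence is itself in $\mathcal{L}_{p_0,\infty}$. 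Combining the two parts, $A$ is a sum of two elements of $\mathcal{L}_{p_0,\infty}$, and therefore $A \in \mathcal{L}_{p_0,\infty}$.

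There is no genuine obstacle in this argument; the only points requiring a moment's attention are the bookkeeping in the inductive iteration and the choice of $N$ large enough that the power $C^N$ lands in a weak Schatten ideal contained in $\mathcal{L}_{p_0,\infty}$.
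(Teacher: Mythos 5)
Your proof is correct and follows essentially the same route as the paper: iterate the identity to get $A = B\sum_{k=0}^{N-1}(-C)^k + A(-C)^N$, observe that the sum lies in the ideal $\mathcal{L}_{p_0,\infty}$, and drive the remainder into $\mathcal{L}_{p_0,\infty}$ by taking $N$ large and applying the weak H\"older inequality. The only cosmetic difference is that you pass through the intermediate embedding $\mathcal{L}_{p_1/N,\infty}\subset\mathcal{L}_{p_0}\subset\mathcal{L}_{p_0,\infty}$, whereas the paper invokes $\mathcal{L}_{p_1/N,\infty}\subset\mathcal{L}_{p_0,\infty}$ directly; both are valid.
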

\begin{proof}
By induction, for each $n \geq 1$ we have:
$$A=\sum_{k=0}^{n-1}(-1)^kBC^k+(-1)^nAC^n.$$
Since ${\mathcal{L}}_{p_0,\infty}$ is an ideal, for all $k \geq 0$ we have 
$BC^k\in\mathcal{L}_{p_0,\infty}.$
Choose $n$ sufficiently large such that $p_1<np_0.$ From H\"older's inequality \eqref{weak holder}, it follows that
$$AC^n\in\mathcal{L}_{\frac{p_1}{n},\infty}\subset \mathcal{L}_{p_0,\infty}.$$
Hence, $A\in\mathcal{L}_{p_0,\infty}.$
\end{proof}

The following lemma contains a crucial piece of the proof of Theorem~\ref{main_cwikel_estimate}, the main result in this section. For uniformity of notations, we set $H_0=-\Delta.$

\begin{lemma}\label{first commutator lemma} 
For all integers $p \geq 0$ and for any ${\varepsilon}>0,$ we have:
$$M^{p}_{\anglb{x}}   \SqBrs{  H_0, M^{-p}_{\anglb{x}}  } (H_0+1)^{-1}\in\mathcal{L}_{d+{\varepsilon},\infty}.$$
\end{lemma}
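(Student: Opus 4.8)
The plan is to compute the commutator $[H_0, M_{\anglb{x}}^{-p}]$ explicitly and then bound the resulting terms using the classical Cwikel estimate of Proposition~\ref{classical_cwikel}. Since $H_0 = -\Delta = -\sum_{j=1}^d \partial_j^2$, for a smooth function $\phi$ on ${\mathbb{R}}^d$ the commutator $[H_0, M_\phi]$ acts on $L_2$ as $M_{-\Delta\phi} - 2\sum_{j=1}^d M_{\partial_j\phi}\partial_j$; that is, as a first-order differential operator with coefficients built from the first and second derivatives of $\phi$. Applying this with $\phi = \anglb{x}^{-p}$, and noting that $\partial_j \anglb{x}^{-p} = -p x_j \anglb{x}^{-p-2}$ decays like $\anglb{x}^{-p-1}$ while $\partial_j\partial_k\anglb{x}^{-p}$ decays like $\anglb{x}^{-p-2}$, we see that
$$
M_{\anglb{x}}^p [H_0, M_{\anglb{x}}^{-p}] = M_{\psi_0} + \sum_{j=1}^d M_{\psi_j} \partial_j,
$$
where $\psi_0 = \anglb{x}^p(-\Delta \anglb{x}^{-p})$ is bounded and decays like $\anglb{x}^{-2}$, and each $\psi_j = -2\anglb{x}^p(\partial_j \anglb{x}^{-p})$ is bounded and decays like $\anglb{x}^{-1}$.

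Next I would bring in the resolvent factor. Write $\partial_j (H_0+1)^{-1} = D_j(-i\nabla)(1-\Delta)^{-1}$ where $D_j(-i\nabla)$ is the $j$-th component of $-i\nabla$ composed with $(1-\Delta)^{-1}$; more precisely $\partial_j(H_0+1)^{-1}$ is the Fourier multiplier with symbol $\frac{i\xi_j}{1+|\xi|^2}$, which lies in $L_{d,\infty}({\mathbb{R}}^d) \cap L_\infty$ — indeed $\frac{|\xi_j|}{1+|\xi|^2} \lesssim \anglb{\xi}^{-1} \in L_{d,\infty}$. Thus $M_{\anglb{x}}^p[H_0,M_{\anglb{x}}^{-p}](H_0+1)^{-1}$ is a sum of two types of terms: $M_{\psi_0}(H_0+1)^{-1}$, and $M_{\psi_j} \cdot [\partial_j (H_0+1)^{-1}]$ for $j = 1,\dots,d$. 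The second type has the form $M_{\psi_j} g_j(-i\nabla)$ with $g_j(\xi) = \frac{i\xi_j}{1+|\xi|^2} \in L_{d,\infty}({\mathbb{R}}^d)$, and $\psi_j$ bounded but only decaying like $\anglb{x}^{-1}$, so $\psi_j \notin L_d$ (it is borderline: $\psi_j \in L_{d,\infty}$ at best). This is exactly why one needs an $\varepsilon$-loss: for any $\varepsilon > 0$, $\psi_j \in L_{d+\varepsilon}({\mathbb{R}}^d)$ since $\anglb{x}^{-1} \in L_{d+\varepsilon}$, and likewise $g_j \in L_{d+\varepsilon,\infty}({\mathbb{R}}^d)$ (as $L_{d,\infty} \subset L_{d+\varepsilon,\infty}$ on a space of finite... no — rather, directly $\anglb{\xi}^{-1} \in L_{d,\infty} \subset L_{d+\varepsilon}$, hmm). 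Let me instead simply apply the first clause of Proposition~\ref{classical_cwikel} with $p$ replaced by $d + \varepsilon > 2$: since $\psi_j \in L_{d+\varepsilon}({\mathbb{R}}^d)$ and $g_j \in L_{d+\varepsilon,\infty}({\mathbb{R}}^d)$ (because $\anglb{\xi}^{-1}\in L_{d,\infty}\subset L_{d+\varepsilon,\infty}$ up to the bounded factor $\frac{|\xi_j|}{\anglb\xi}\le 1$), we conclude $M_{\psi_j} g_j(-i\nabla) \in {\mathcal{L}}_{d+\varepsilon,\infty}$.

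For the term $M_{\psi_0}(H_0+1)^{-1} = M_{\psi_0} h(-i\nabla)$ with $h(\xi) = \frac{1}{1+|\xi|^2} \in L_{d/2,\infty} \subset L_{d+\varepsilon,\infty}$ and $\psi_0 \in L_\infty \cap L_{d+\varepsilon}$ (again using $\anglb{x}^{-2} \in L_{d+\varepsilon}$ once $d+\varepsilon$ is large enough, and in any case $\psi_0$ decays at least as fast as $\anglb{x}^{-1}$ for $d\ge 2$... actually $\anglb{x}^{-2}\in L_{d+\varepsilon}$ for all $d\ge 1,\varepsilon>0$ anyway), the same clause of Proposition~\ref{classical_cwikel} gives $M_{\psi_0}(H_0+1)^{-1} \in {\mathcal{L}}_{d+\varepsilon,\infty}$. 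Summing the $d+1$ terms and using that ${\mathcal{L}}_{d+\varepsilon,\infty}$ is a linear space completes the proof. \textbf{The main obstacle} I anticipate is purely bookkeeping: carefully justifying the commutator identity $M_{\anglb{x}}^p[H_0, M_{\anglb{x}}^{-p}] = M_{\psi_0} + \sum_j M_{\psi_j}\partial_j$ at the level of operators (domains, closability) rather than just formally — one should check it on the Schwartz class or on $C_c^\infty$, which is a core for all operators involved, and verify that both sides extend to the same bounded operator after composing with $(H_0+1)^{-1}$. The case $p = 0$ is trivial since the commutator vanishes, so one only needs $p \geq 1$. There is no genuine analytic difficulty beyond the (expected and unavoidable) $\varepsilon$-loss coming from the borderline decay $\anglb{x}^{-1} \notin L_d$.
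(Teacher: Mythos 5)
Your proposal is correct and follows essentially the same route as the paper's proof: expand $[H_0,M_{\anglb{x}^{-p}}]$ via the Leibniz rule into a zeroth-order piece and a sum of $M_{\psi_j}\partial_j$ pieces, pull out a $\anglb{x}^{-1}$ (resp.\ $\anglb{x}^{-2}$) decay from the multiplier coefficients and a matching $\anglb{\xi}^{-1}$ (resp.\ $\anglb{\xi}^{-2}$) decay from the Fourier symbol $\partial_j(H_0+1)^{-1}$ (resp.\ $(H_0+1)^{-1}$), and feed these into Proposition~\ref{classical_cwikel}, paying an $\varepsilon$-loss because $\anglb{x}^{-1}\notin L_d$. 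Your brief hesitation about $L_{d,\infty}\subset L_{d+\varepsilon,\infty}$ resolves correctly (what one actually needs and uses, both here and in the paper, is $L_{d,\infty}\cap L_\infty\subset L_{d+\varepsilon,\infty}$), so the argument as you finally state it is sound.
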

\begin{proof} 
 By definition, on the space ${\mathcal{S}}({\mathbb{R}}^d)$ of Schwartz-class functions, we have
$$H_0=-\sum_{m=1}^d\partial_m^2.$$
If $f \in C^\infty({\mathbb{R}}^d)$ is bounded, with all derivatives up to all order bounded, then a straightforward calculation shows that 
\begin{equation*} 
    \begin{split}
    [\partial_m^2,M_f] 
                        =  2M_{\partial_mf}\partial_m+M_{\partial_m^2f},\quad 1\leq m\leq d
    \end{split}
\end{equation*}
which gives 
\begin{equation}\label{leibniz}
    [H_0, M_f] = -2 \sum_{m=1}^d  M_{\partial_m f} \partial_m - M_{H_0 f}.
\end{equation}
 which is also valid on ${\mathcal{S}}({\mathbb{R}}^d)$.

\noindent
Define the function $f_p$ by
$f_p(x) : =\anglb{x}^{-p}, \ x\in\mathbb{R}^d.$
    For $1\leq m\leq d$, we have (here $x=(x_1,\cdots,x_d)\in\mathbb{R}^d$):
\begin{align}
    \partial_m f_p(x) &= -px_m\anglb{x}^{-p-2} \label{derivative_computations},\\
    \partial_m^2f_p(x) &= p(p+2)x_m^2\anglb{x}^{-p-4}- p\anglb{x}^{-p-2}\nonumber,\\
            H_0f_p(x) &= - p(p+2-d)\anglb{x}^{-p-2}\nonumber. 
\end{align}
Using \eqref{leibniz} (once again on the domain ${\mathcal{S}}({\mathbb{R}}^d)$), we write
\begin{align} 
        M^p_{\anglb{x}}  [H_0,M^{-p}_{\anglb{x}}]  &= M_{f_p^{-1}}[H_0,M_{f_p}]\nonumber\\ 
                                                &= M_{f_p^{-1}}(-2\sum_{m=1}^d M_{\partial_mf_p}\partial_m - M_{H_0 f_p})\nonumber\\
                                                &= - 2\sum_{m=1}^dM_{f_p^{-1}\partial_mf_p}\partial_m - M_{f_p^{-1}H_0f_p}.\label{F: aaa(1)}
\end{align}

Since for any ${\varepsilon}>0$ 
$$ 
    f_p^{-1}\partial_mf_p(x) = - p x_m \anglb{x}^{-2} \in L_{d,\infty}({\mathbb{R}}^d) \cap L_\infty({\mathbb{R}}^d) \subset L_{d+{\varepsilon}}({\mathbb{R}}^d)
$$ 
and  $\partial_m (H_0+1)^{-1} = g(-i\nabla)$ with 
$$
g(t) = \frac{-it_m}{1+|t|^2} \in L_{d,\infty}({\mathbb{R}}^d) \cap L_\infty({\mathbb{R}}^d) \subset L_{d+{\varepsilon},\infty}({\mathbb{R}}^d) 
$$ 
it follows from Proposition~\ref{classical_cwikel}
that for any ${\varepsilon}>0$
\begin{align} \label{F: aaa(2)}
    \sum_{m=1}^pM_{f_p^{-1}\partial_mf_p}\partial_m (H_0+1)^{-1} \in {\mathcal{L}}_{d+{\varepsilon},\infty}.
\end{align}
Since 
$$ 
    f_p^{-1} H_0 f_p = - p(p+2-d)\anglb{x}^{-2} \in L_{d/2,\infty}({\mathbb{R}}^d) \cap L_\infty({\mathbb{R}}^d)
$$
and  $(H_0+1)^{-1} = g(-i\nabla)$ with $g(t) = \frac{1}{1+|t|^2} \in L_{d/2,\infty}({\mathbb{R}}^d)\cap L_\infty({\mathbb{R}}^d),$ it follows from Proposition~\ref{classical_cwikel}
that for any ${\varepsilon}>0$
\begin{align} \label{F: aaa(3)}
    M_{f_p^{-1}H_0f_p}  (H_0+1)^{-1} \in  {\mathcal{L}}_{d/2+{\varepsilon},\infty}.
\end{align}
Combining (\ref{F: aaa(1)}), (\ref{F: aaa(2)}) and (\ref{F: aaa(3)}) gives for any ${\varepsilon}>0$
$$
M^p_{\anglb{x}}  [H_0,M^{-p}_{\anglb{x}}]  (H_0+1)^{-1}  \in  {\mathcal{L}}_{d+{\varepsilon},\infty}.
$$
\end{proof}

We now present the main Cwikel estimate of this section:
\begin{thm}\label{main_cwikel_estimate}
For any $z\in\mathbb{C}\setminus\mathbb{R}$ and $p=1,2,\ldots$
$$(H+z)^{-p}M^{-p}_{\anglb{x}}\in \mathcal{L}_{d/p,\infty}.$$
\end{thm}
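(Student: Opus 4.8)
The strategy is to reduce the assertion about $(H+z)^{-p}M^{-p}_{\anglb{x}}$ to the corresponding (essentially classical) assertion about $(H_0+z)^{-p}M^{-p}_{\anglb{x}}$, using the resolvent identity together with Lemma~\ref{inclusion lemma} to absorb the error terms coming from the potential $V$. First I would handle the free case: for $p=1$, write $(H_0+z)^{-1}M^{-1}_{\anglb{x}} = (H_0+z)^{-1}(H_0+1)\cdot (H_0+1)^{-1}M^{-1}_{\anglb{x}}$, observe that $(H_0+z)^{-1}(H_0+1)$ is bounded, and that $(H_0+1)^{-1}M^{-1}_{\anglb{x}}$ is the product $g(-i\nabla)M^{-1}_{\anglb{x}}$ with $g(t)=(1+|t|^2)^{-1}\in L_{d/2,\infty}\cap L_\infty$ and $\anglb{x}^{-1}\in L_{d,\infty}$, so by Proposition~\ref{classical_cwikel} (the first statement, applied after noting $\anglb{x}^{-1}\in L_{d}$ locally, or directly via the weak-$L_p$ Cwikel estimate) together with weak H\"older \eqref{weak holder} it lies in ${\mathcal{L}}_{d,\infty}$. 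Actually the cleanest route for the free case of general $p$ is to note $M^{-p}_{\anglb{x}}(H_0+1)^{-p}$ is a product of $p$ factors each in ${\mathcal{L}}_{d,\infty}$ after commuting the $M_{\anglb{x}}$ powers past resolvents (the commutators being controlled by the computations in the proof of Lemma~\ref{first commutator lemma}), giving membership in ${\mathcal{L}}_{d/p,\infty}$ by \eqref{weak holder}; then multiply by the bounded operator $(H_0+1)^p(H_0+z)^{-p}$.

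Next I would pass from $H_0$ to $H$. The key algebraic identity is obtained from the second resolvent equation $(H+z)^{-1}=(H_0+z)^{-1}-(H_0+z)^{-1}M_V(H+z)^{-1}$. Setting $A = (H+z)^{-1}M^{-1}_{\anglb{x}}$, one massages this into the form $A = B - AC$ where $B$ involves $(H_0+z)^{-1}M^{-1}_{\anglb{x}}$ (and possibly a bounded factor and a commutator of $M^{-1}_{\anglb{x}}$ with the resolvent, again estimated via Lemma~\ref{first commutator lemma}) and $C$ is something like $M_{\anglb{x}}M_V(H+z)^{-1}M^{-1}_{\anglb{x}}$ or a related operator that, after again commuting $M_{\anglb{x}}$ past $(H+z)^{-1}$ using Lemma~\ref{first commutator lemma} (with $H$ in place of $H_0$, which works since $[H,M_f]=[H_0,M_f]$ as $V$ is a multiplier), is seen to lie in some ${\mathcal{L}}_{q,\infty}$ with $q<\infty$. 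With $B\in{\mathcal{L}}_{d,\infty}$ and $C\in{\mathcal{L}}_{q,\infty}$, Lemma~\ref{inclusion lemma} yields $A=(H+z)^{-1}M^{-1}_{\anglb{x}}\in{\mathcal{L}}_{d,\infty}$, proving the case $p=1$.

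For general $p$ I would argue by induction on $p$, writing $(H+z)^{-p}M^{-p}_{\anglb{x}} = (H+z)^{-1}\cdot (H+z)^{-(p-1)}M^{-(p-1)}_{\anglb{x}}\cdot M^{-1}_{\anglb{x}}$ and then moving one factor of $M^{-1}_{\anglb{x}}$ leftwards through $(H+z)^{-(p-1)}$ using Lemma~\ref{first commutator lemma} repeatedly; each such commutation produces a main term of the shape $(H+z)^{-1}M^{-1}_{\anglb{x}}$ times $(H+z)^{-(p-1)}M^{-(p-1)}_{\anglb{x}}$ (in ${\mathcal{L}}_{d,\infty}\cdot{\mathcal{L}}_{d/(p-1),\infty}\subset{\mathcal{L}}_{d/p,\infty}$ by \eqref{weak holder} and the inductive hypothesis) plus lower-order error terms that sit in a better (smaller) ideal because each extra commutator gains an inverse power of $(H_0+1)$ and hence, via Lemma~\ref{first commutator lemma}, an extra $\frac1d$ worth of summability. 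Bookkeeping all such terms and invoking \eqref{weak holder} and the embeddings \eqref{L_p_embeddings} gives $(H+z)^{-p}M^{-p}_{\anglb{x}}\in{\mathcal{L}}_{d/p,\infty}$.

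\textbf{Main obstacle.} The principal difficulty is the bookkeeping in the commutation step: $M^{-p}_{\anglb{x}}$ does not commute with the resolvents, so one must repeatedly apply Lemma~\ref{first commutator lemma} and verify that every commutator term lands in an ideal at least as small as ${\mathcal{L}}_{d/p,\infty}$, paying attention to the fact that Lemma~\ref{first commutator lemma} only gives ${\mathcal{L}}_{d+\varepsilon,\infty}$ (not ${\mathcal{L}}_{d,\infty}$), so one must track the $\varepsilon$-losses and make sure they are harmless (e.g. by choosing $\varepsilon$ small and using that commutator terms always come with at least one extra resolvent factor, hence extra summability to spare). The interplay between this $\varepsilon$ and the exponent arithmetic in \eqref{weak holder} is the delicate point; everything else is a routine application of the classical Cwikel estimate, the resolvent identity, and weak H\"older.
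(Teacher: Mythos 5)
Your general game plan -- reduce to the free resolvent via the second resolvent equation, control commutators with Lemma~\ref{first commutator lemma}, and absorb error terms with Lemma~\ref{inclusion lemma} -- is in the right spirit, and you correctly identify the $\varepsilon$-loss in Lemma~\ref{first commutator lemma} as the delicate point. But as written your decomposition has a genuine gap at exactly that point, and your base case also does not close.

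For $p=1$: multiplying the second resolvent equation on the right by $M^{-1}_{\anglb{x}}$ gives
$A = (H_0+z)^{-1}M^{-1}_{\anglb{x}} - (H_0+z)^{-1}M_V\,A,$
which is of the form $A = B - C'A$ with $C' = (H_0+z)^{-1}M_V$ merely \emph{bounded}, not compact (for a generic bounded $V$ with no decay), so Lemma~\ref{inclusion lemma} cannot be applied. Your suggested fix of $C \approx M_{\anglb{x}}M_V(H+z)^{-1}M^{-1}_{\anglb{x}}$ fails for the same reason: that operator is at best bounded, and it cannot lie in any $\mathcal{L}_{p_1,\infty}$ with $p_1<\infty$ unless $V$ decays at infinity. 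In fact Lemma~\ref{inclusion lemma} is unnecessary for the base case. The resolvent identity gives $(H+z)^{-1}(H_0+z)=1-(H+z)^{-1}M_V\in\mathcal{B}(\mathcal{H})$, and one simply writes $A_1(z)=(H+z)^{-1}(H_0+z)\cdot(H_0+z)^{-1}M^{-1}_{\anglb{x}}\in\mathcal{B}(\mathcal{H})\cdot\mathcal{L}_{d,\infty}$, with the second factor controlled by Proposition~\ref{classical_cwikel} (with a separate treatment of $d=2$).

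For the inductive step your decomposition is to commute a single $M^{-1}_{\anglb{x}}$ leftwards through $p-1$ resolvent factors, estimate the main term by H\"older, and hope that the commutator error terms land in $\mathcal{L}_{d/p,\infty}$. They do not. Already for $p=2$ the error term
$-(H+z)^{-2}[H_0,M^{-1}_{\anglb{x}}](H+z)^{-1}M^{-1}_{\anglb{x}}$
factors (by inserting $M^{-1}_{\anglb{x}}M_{\anglb{x}}$) into a product of something in $\mathcal{L}_{d,\infty}$ with something controlled by Lemma~\ref{first commutator lemma}, i.e. $\mathcal{L}_{d+\varepsilon,\infty}$, and weak H\"older then only gives $\mathcal{L}_{q,\infty}$ with $q = d(d+\varepsilon)/(2d+\varepsilon) > d/2$ for every $\varepsilon>0$. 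The $\varepsilon$-loss is therefore fatal, not "harmless," with this decomposition, because the error term appears on its own and must be estimated directly.

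The paper's inductive step sidesteps this by commuting $M^{-p-1}_{\anglb{x}}$ past just one resolvent rather than $M^{-1}_{\anglb{x}}$ past $p-1$ of them. Writing $A_{p+1}(z)=(H+z)^{-p}\cdot(H+z)^{-1}M^{-p-1}_{\anglb{x}}$ and using \eqref{favourite_identity} once yields the single clean identity
$A_{p+1}(z)-B_{p+1}(z)=-A_{p+1}(z)\,C_{p+1}(z),$
where $B_{p+1}(z)=(H+z)^{-p}M^{-p-1}_{\anglb{x}}(H+z)^{-1}=A_p(z)A_1(\bar z)^*$ and $C_{p+1}(z)=M^{p+1}_{\anglb{x}}[H_0,M^{-p-1}_{\anglb{x}}](H+z)^{-1}$. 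This hits the form $A=B-AC$ required by Lemma~\ref{inclusion lemma} exactly: $B_{p+1}\in\mathcal{L}_{d/(p+1),\infty}$ by the inductive hypothesis, the $p=1$ case, and weak H\"older; and $C_{p+1}\in\mathcal{L}_{2d,\infty}$ by Lemma~\ref{first commutator lemma}. Crucially, Lemma~\ref{inclusion lemma} only needs $C$ to lie in \emph{some} $\mathcal{L}_{p_1,\infty}$ with $p_1<\infty$, so the $\varepsilon$-loss in the commutator estimate is completely harmless. That is the structural trick your decomposition misses.
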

\begin{proof} 
We introduce the notation:
\begin{align*}
    A_p(z) &= (H+z)^{-p}M^{-p}_{\anglb{x}},\\
    B_p(z) &= (H+z)^{1-p}M^{-p}_{\anglb{x}}(H+z)^{-1},\\
    C_p(z) &= M^{p}_{\anglb{x}}[H_0,M^{-p}_{\anglb{x}}](H+z)^{-1}.
\end{align*}
We prove the assertion by induction on $p$, our goal being to prove that $A_p(z) \in {\mathcal{L}}_{d/p,\infty}$ for all $p\geq 1$.
The resolvent identity gives
$$(H+z)^{-1}(H_0+z)=1-(H+z)^{-1}M_V\in{\mathcal{B}}({\mathcal{H}}).$$
Consider the base case $p=1$. If $d > 2$, then since $x \mapsto \anglb{x}^{-1} \in L_{d,\infty}({\mathbb{R}}^d)$ and $y \mapsto (y^2+z)^{-1} \in L_{d/2,\infty}({\mathbb{R}}^d) \cap L_\infty({\mathbb{R}}^d) \subset L_d({\mathbb{R}}^d),$
the Fourier dual of Proposition~\ref{classical_cwikel} yields:
$$
    (H_0+z)^{-1}M^{-1}_{\anglb{x}} \in {\mathcal{L}}_{d,\infty}.
$$
On the other hand, if $d=2$, then we shall verify that $x \mapsto \anglb{x}^{-1} \in \ell_{2,\infty}(L_4)(\mathbb{R}^2)$,
and that $y \mapsto (|y|^2+z)^{-1} \in \ell_{2,\log}(L_\infty)(\mathbb{R}^2)$. 
There is a constant $C_d$ such that for $k \in \mathbb{Z}^2$ we have:
\begin{equation*}   
    \|\anglb{x}^{-1}\|_{L_4(k+[0,1]^d)} \leq C_d\anglb{k}^{-1}
\end{equation*}
and therefore,
\begin{equation*}
    \|\anglb{x}^{-1}\|_{\ell_{2,\infty}(L_4)(\mathbb{R}^2)} \leq C_d\|\{\anglb{k}^{-1}\}_{k\in \mathbb{Z}^2}\|_{\ell_{2,\infty}} < \infty.
\end{equation*}
Similarly, for $k \in \mathbb{Z}^2$, there is a constant $C_{d,z}$ such that:
\begin{equation*}
    \|(|y|^2+z)^{-1}\|_{L_\infty(k+[0,1]^d)} \leq C_{d,z}\anglb{k}^{-4}
\end{equation*}
and therefore,
\begin{equation*}
    \|(|y|^2+z)^{-1}\|_{\ell_{2,\log}(L_\infty)(\mathbb{R}^2)} \leq C_{d,z}\left(\sum_{k \in \mathbb{Z}^2} (1+\log(1+|k|))\anglb{k}^{-4}\right)^{1/2} < \infty.
\end{equation*}

It then follows from Proposition~\ref{classical_cwikel} that $(H_0+z)^{-1}M^{-1}_{\anglb{x}} \in {\mathcal{L}}_{2,\infty}(L_2(\mathbb{R}^2))$
when $d = 2$. So in all cases $d\geq 2$, we have $(H_0+z)^{-1}M^{-1}_{\anglb{x}} \in {\mathcal{L}}_{d,\infty}$.

Hence,
$$A_1(z)=(H+z)^{-1}(H_0+z)\cdot (H_0+z)^{-1}M^{-1}_{\anglb{x}}  \in  {\mathcal{B}}({\mathcal{H}})\cdot\mathcal{L}_{d,\infty}=\mathcal{L}_{d,\infty}.$$
This proves the $p=1$ case.

Suppose the estimate holds for $p\geq 1.$ Let us prove it for $p+1.$ If $X$ and $Y$ are linear operators where $Y$ and $X^{-1}$ are bounded and $Y$ preserves the domain of $X$, then we have the identity:
\begin{equation}\label{favourite_identity}
[X^{-1},Y]=-X^{-1}[X,Y]X^{-1}.
\end{equation}
This identity applies with $Y = M^{-p-1}_{\anglb{x}}$ and $X = H+z$, since the domain of $X$ is equal to the domain of $H_0$ \cite[Theorem 8.8]{Schmudgen},
and clearly $Y$ preserves the domain of $H_0$.
Thus we may write
\begin{align*}
    A_{p+1}(z)-B_{p+1}(z)&=(H+z)^{-p}\cdot [(H+z)^{-1},M^{-p-1}_{\anglb{x}}]\\
                            &=-(H+z)^{-p-1}[H_0,M^{-p-1}_{\anglb{x}}](H+z)^{-1}\\
                            &=-A_{p+1}(z)C_{p+1}(z).
\end{align*}
We now verify that the operators $A_{p+1}(z),$ $B_{p+1}(z)$ and $C_{p+1}(z)$ satisfy the assumptions in Lemma~\ref{inclusion lemma}.

With the identity
$$B_{p+1}(z)=A_p(z)A_1(\bar{z})^{\ast},$$
H\"older's inequality \eqref{weak holder} and the inductive assumption yield
$$B_{p+1}(z)\in\mathcal{L}_{\frac{d}{p},\infty}\cdot\mathcal{L}_{d,\infty}\subset\mathcal{L}_{\frac{d}{p+1},\infty}.$$
Also,
$$C_{p+1}(z)=M^{p+1}_{\anglb{x}}[H_0,M^{-p-1}_{\anglb{x}}](H_0+z)^{-1}\cdot (H_0+z)(H+z)^{-1}.$$
Lemma~\ref{first commutator lemma} states that:
$$M^{p+1}_{\anglb{x}}[H_0,M^{-p-1}_{\anglb{x}}](H_0+z)^{-1}\in\mathcal{L}_{2d,\infty}.$$
Since
$$(H_0+z)(H+z)^{-1}=1-M_V(H+z)^{-1}\in{\mathcal{B}}({\mathcal{H}})$$
it follows that
$$C_{p+1}(z)\in\mathcal{L}_{2d,\infty}\cdot{\mathcal{B}}({\mathcal{H}})=\mathcal{L}_{2d,\infty}.$$
That is, $B_{p+1}(z) \in {\mathcal{L}}_{d/(p+1),\infty}$ and $C_{p+1}(z) \in {\mathcal{L}}_{2d,\infty}$, so applying Lemma~\ref{inclusion lemma} to the operators $A_{p+1}(z),$ $B_{p+1}(z)$ and $C_{p+1}(z)$
yields $A_{p+1}(z) \in {\mathcal{L}}_{d/(p+1),\infty}$, so the assertion follows by induction on $p$.
\end{proof}

As a useful corollary, we also include the following:
\begin{prop}\label{doi_corollary}
Let ${\varepsilon} > 0$. Then
\begin{equation*}
    \limsup_{r\downarrow d}\|[M^{-r}_{\anglb{x}},e^{-{\varepsilon} (r-1) H}]\|_{1} < \infty
\end{equation*}
and
\begin{equation*}
    \limsup_{r\downarrow d}\|[M^{1-r}_{\anglb{x}},e^{-{\varepsilon} (r-1)H}]\|_{\frac{d}{d-1},1} < \infty.
\end{equation*}
\end{prop}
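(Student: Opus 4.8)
The plan is to reduce both commutator estimates to the double operator integral machinery from the Preliminaries together with the Cwikel estimate of Theorem~\ref{main_cwikel_estimate}. For a self-adjoint operator $H$, the natural object to use is the commutator identity \eqref{DK_formula} in the form appropriate to resolvents: instead of working with $e^{-sH}$ directly, I would first relate the commutator $[M^{-r}_{\anglb{x}}, e^{-sH}]$ to the commutator $[M^{-r}_{\anglb{x}}, (H+z)^{-1}]$ for a suitable $z\in\mathbb{C}\setminus\mathbb{R}$, since the resolvent commutator is exactly what Theorem~\ref{main_cwikel_estimate} controls after multiplying by powers of $\anglb{x}$. Concretely, $[M^{-r}_{\anglb{x}},(H+z)^{-1}] = -(H+z)^{-1}[M^{-r}_{\anglb{x}},H](H+z)^{-1} = -(H+z)^{-1}[M^{-r}_{\anglb{x}},H_0](H+z)^{-1}$ (the potential $M_V$ commutes with $M^{-r}_{\anglb{x}}$), and this can be rewritten, using the algebra of Lemma~\ref{first commutator lemma}, as a product of operators one of which lies in $\mathcal{L}_{d/\lceil r\rceil,\infty}$-type ideals with $\anglb{x}$-weights absorbed into $A_p(z)$, $C_p(z)$ of the Cwikel proof.

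The passage from resolvents to the semigroup $e^{-sH}$ is where I would use the double operator integral. Writing $U = (H+i)^{-1}$ (a bounded, in fact unitary-after-Cayley, operator) or more directly applying the self-adjoint analogue of \eqref{DK_formula} referenced in the text, we have $[M^{-r}_{\anglb{x}}, e^{-sH}] = T^{H,H}_{\phi^{[1]}}([M^{-r}_{\anglb{x}},H])$ for $\phi(t)=e^{-st}$, but since $[M^{-r}_{\anglb{x}},H]$ is unbounded I would instead factor through the resolvent: $e^{-sH} = \psi(H)$ with $\psi(t) = e^{-st}$, and write $e^{-sH} = h((H+i)^{-1})$ for an appropriate $h$, so that $[M^{-r}_{\anglb{x}}, e^{-sH}] = T^{U,U}_{h^{[1]}}([M^{-r}_{\anglb{x}},U])$ via \eqref{DK_formula}, where $U=(H+i)^{-1}$. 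The function $h$ is smooth on the spectrum of $U$ (a bounded set), so $\|h'\|_{L_2}+\|h''\|_{L_2}<\infty$ and the boundedness estimates \eqref{naive_bound}--\eqref{interpolated_naive_bound} apply. Thus it suffices to show $[M^{-r}_{\anglb{x}}, (H+i)^{-1}] \in \mathcal{L}_1$ with norm bounded as $r\downarrow d$, and $[M^{1-r}_{\anglb{x}},(H+i)^{-1}]\in\mathcal{L}_{d/(d-1),1}$ similarly; the DOI bounds then transfer these into $\mathcal{L}_1$ and $\mathcal{L}_{d/(d-1),1}$ respectively for the semigroup commutator (the parameter is $\varepsilon(r-1)$ rather than $s$, but this only rescales $h$).

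For the $\mathcal{L}_1$ bound on $[M^{-r}_{\anglb{x}},(H+i)^{-1}]$: expand as $-(H+i)^{-1}(M^r_{\anglb{x}})^{-1}\cdot M^r_{\anglb{x}}[M^{-r}_{\anglb{x}},H_0]\cdot(H_0+i)^{-1}\cdot(H_0+i)(H+i)^{-1}$ — wait, more carefully, $[M^{-r}_{\anglb{x}},(H+i)^{-1}] = -(H+i)^{-1}[H_0,M^{-r}_{\anglb{x}}](H+i)^{-1}$ and I insert $M^{\mp r}_{\anglb{x}}$: this equals $-(H+i)^{-1}M^{-r}_{\anglb{x}}\cdot\big(M^{r}_{\anglb{x}}[H_0,M^{-r}_{\anglb{x}}](H_0+i)^{-1}\big)\cdot\big((H_0+i)(H+i)^{-1}\big)$. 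The first factor is in $\mathcal{L}_{d/r,\infty}$ (a non-integer version of Theorem~\ref{main_cwikel_estimate}, or interpolation from $p=\lfloor r\rfloor,\lceil r\rceil$), the middle factor is in $\mathcal{L}_{d+\varepsilon',\infty}$ by Lemma~\ref{first commutator lemma} (with non-integer $p=r$; the derivative computations \eqref{derivative_computations} go through verbatim for real $p$), and the last is bounded. By weak Hölder \eqref{weak holder}, $\tfrac{r}{d}+\tfrac{1}{d} = \tfrac{r+1}{d} > 1$ for $r>d-1$, so the product lies in $\mathcal{L}_{1}$; the sharper version of \eqref{L_p_embeddings}/\eqref{weird_holder} handles the endpoint and gives the uniform $\limsup$ as $r\downarrow d$. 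For the second estimate, replace $M^{-r}_{\anglb{x}}$ by $M^{1-r}_{\anglb{x}}$: the first factor is now in $\mathcal{L}_{d/(r-1),\infty}$, and with $\tfrac{r-1}{d}+\tfrac{1}{d}=\tfrac{r}{d}>1$ for $r>d-1$ we land in $\mathcal{L}_{d/(d-1),1}$ using the $\mathcal{L}_{q,1}$ embedding \eqref{L_p_embeddings} as $r\downarrow d$. The main obstacle I anticipate is making the non-integer Cwikel estimate rigorous — Theorem~\ref{main_cwikel_estimate} is stated for integer $p$, so I would either reprove it for real $p$ (the induction in its proof does not obviously adapt) or, more safely, interpolate: $M^{-r}_{\anglb{x}}=M^{-\lceil r\rceil}_{\anglb{x}}\cdot M^{\lceil r\rceil - r}_{\anglb{x}}$ with the second factor bounded, giving membership in $\mathcal{L}_{d/\lceil r\rceil,\infty}\subset\mathcal{L}_{d/(d-1),\infty}$ for $r$ near $d$ — slightly weaker but, combined carefully with the middle-factor exponent, still sufficient for the stated conclusions, and this requires tracking exponents near the threshold to confirm the $\limsup$ is finite.
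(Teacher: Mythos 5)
Your proposal correctly identifies the overall strategy — reduce the semigroup commutator to a resolvent commutator via a double operator integral — but it discards a crucial piece of structure when it claims that it suffices to show $[M^{-r}_{\anglb{x}},(H+i)^{-1}]\in\mathcal{L}_1$ with uniformly bounded norm. That claim is false in general. The paper's Cwikel estimate, Theorem~\ref{main_cwikel_estimate}, is of the balanced form $(H+z)^{-p}M^{-p}_{\anglb{x}}\in\mathcal{L}_{d/p,\infty}$: the power of the resolvent must match the power of the weight. You invoke it to get $(H+i)^{-1}M^{-r}_{\anglb{x}}\in\mathcal{L}_{d/r,\infty}$, but with only one resolvent and $r$ powers of $\anglb{x}^{-1}$ the theorem (or the trivial split $M^{-r}_{\anglb{x}}=M^{-1}_{\anglb{x}}\cdot M^{1-r}_{\anglb{x}}$) gives at best $\mathcal{L}_{d,\infty}$. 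Pairing that with the middle factor in $\mathcal{L}_{d+\varepsilon',\infty}$ from Lemma~\ref{first commutator lemma} lands you in $\mathcal{L}_{q,\infty}$ with $\tfrac{1}{q}=\tfrac{1}{d}+\tfrac{1}{d+\varepsilon'}<\tfrac{2}{d}\leq 1$, i.e.\ $q>1$. This is not trace class, and the gap widens as $d$ grows; a pseudodifferential count shows $[M^{-r}_{\anglb{x}},(H+i)^{-1}]$ is genuinely not in $\mathcal{L}_1$ for $d\geq 3$. The ``interpolation'' fallback you mention at the end runs into exactly the same wall: a single power of the resolvent cannot provide more than $\mathcal{L}_{d,\infty}$ decay, regardless of how many powers of $\anglb{x}^{-1}$ are attached.

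What the paper does differently — and what your plan is missing — is to keep a semigroup factor \emph{inside} the transformer. After splitting $e^{-\varepsilon(r-1)H}=e^{-\frac12\varepsilon(r-1)H}e^{-\frac12\varepsilon(r-1)H}$ and Leibniz, the compatibility property \eqref{compatibility} is used to push $e^{-\frac12\varepsilon(r-1)H}$ through the transformer $T^{U,U}_{\phi_r^{[1]}}$, so that the trace-class object to be bounded is $e^{-\frac12\varepsilon(r-1)H}[M^{-r}_{\anglb{x}},U]$ (and its mirror), \emph{not} the bare $[M^{-r}_{\anglb{x}},U]$. The semigroup supplies arbitrary-order resolvent decay: $(H+i)^{N}e^{-\frac12\varepsilon(d-1)H}$ is bounded for all $N$, so after pulling off bounded factors $e^{-\varepsilon(r-d)H/2}$ and $M^{d-r}_{\anglb{x}}$, the estimate reduces to $\|(H+i)^{-(d+1)}M^{-(d+1)}_{\anglb{x}}\|_1$, where Theorem~\ref{main_cwikel_estimate} with $p=d+1$ gives $\mathcal{L}_{d/(d+1),\infty}\subset\mathcal{L}_1$ — a balanced estimate which your resolvent-only reduction cannot access. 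The transformer-norm estimates \eqref{transformer_L_1_bound}--\eqref{transformer_L_d_1_bound} with the $r$-dependent symbol $\phi_r$ are also a necessary ingredient to get the $\limsup$ uniform as $r\downarrow d$; your plan of using a fixed $h$ for $(H+i)^{-1}$ would also need such uniformity in the rescaled symbol, which you note in passing but do not check. In short: the double operator integral and Cayley transform idea is right, but without retaining the semigroup factor inside the transformer and using \eqref{compatibility}, the argument does not close.
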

\begin{proof}   
Let $U$ be the unitary operator $\frac{H+i}{H-i}$, and let $\phi_{{\varepsilon}}$ be a smooth function on the unit circle such that for $t \geq -\|V\|_\infty$ we have
\begin{equation*}
    \phi_{r}\left(\frac{t+i}{t-i}\right) = e^{-\frac{1}{2} {\varepsilon} (r-1)t}.
\end{equation*}
Since $\phi_r$ is smooth, the transformer $T^{U,U}_{\phi_r^{[1]}}$ is bounded from ${\mathcal{L}}_{1}$ to ${\mathcal{L}}_1$ (see \eqref{naive_bound}), and one can compute that the $L_2(\mathbb{T})$ norms 
of $\phi_r'$ and $\phi_r''$ are bounded above by a constant multiple of $r-1$ and $(r-1)^2$ respectively, so in particular:
\begin{equation}\label{transformer_L_1_bound}
    \limsup_{r\downarrow d} \|T^{U,U}_{\phi_r^{[1]}}\|_{{\mathcal{L}}_{1}\to {\mathcal{L}}_1} < \infty.
\end{equation}
Similarly, \eqref{interpolated_naive_bound} yields:
\begin{equation}\label{transformer_L_d_1_bound}
    \limsup_{r\downarrow d} \|T^{U,U}_{\phi_{r}^{[1]}}\|_{{\mathcal{L}}_{\frac{d}{d-1},1}\to {\mathcal{L}}_{\frac{d}{d-1},1}} < \infty.
\end{equation}
Using the semigroup property of $e^{-{\varepsilon} (r-1)H}$ and the Leibniz rule, we have:
\begin{equation*}
    [M^{-r}_{\anglb{x}},e^{-{\varepsilon} (r-1) H}] = e^{-\frac{1}{2} {\varepsilon} (r-1) H}[M^{-r}_{\anglb{x}},e^{-\frac{1}{2} {\varepsilon} (r-1)H}]+[M^{-r}_{\anglb{x}},e^{-\frac{1}{2} {\varepsilon} (r-1)H}]e^{-\frac{1}{2} {\varepsilon} (r-1)H}.
\end{equation*}
Since $e^{-\frac{1}{2} {\varepsilon} (r-1)H} = \phi_r(U)$, from \eqref{DK_formula}, we have:
\begin{equation*}
    [M^{-r}_{\anglb{x}},e^{-\frac{1}{2} {\varepsilon} (r-1)H}] = T^{U,U}_{\phi_r^{[1]}}([M^{-r}_{\anglb{x}},U]).
\end{equation*}
Combining the preceding two displays, from \eqref{compatibility} it follows that
\begin{equation*}
    [M^{-r}_{\anglb{x}},e^{-{\varepsilon} (r-1)H}] = T^{U,U}_{\phi_r^{[1]}}(e^{-\frac{1}{2} {\varepsilon} (r-1)H}[M^{-r}_{\anglb{x}},U] + [M^{-r}_{\anglb{x}},U]e^{-\frac{1}{2} {\varepsilon} (r-1)H}).
\end{equation*}
Now \eqref{transformer_L_1_bound} implies that there is a constant $C_{d,{\varepsilon}}$ such that:
\begin{equation*}
    \|[M^{-r}_{\anglb{x}},e^{-{\varepsilon} (r-1)H}]\|_{1} \leq C_{d,{\varepsilon}}(\|e^{-\frac{1}{2} {\varepsilon} (r-1)H}[M^{-r}_{\anglb{x}},U]\|_1+\|[M^{-r}_{\anglb{x}},U]e^{-\frac{1}{2} {\varepsilon} (r-1)H}\|_1).
\end{equation*}
An identical argument yields:
\begin{equation*}
    \|[M^{1-r}_{\anglb{x}},e^{-{\varepsilon} (r-1)H}]\|_{\frac{d}{d-1},1} \leq C_{d,{\varepsilon}}(\|e^{-\frac{1}{2} {\varepsilon} (r-1)H}[M^{1-r}_{\anglb{x}},U]\|_{\frac{d}{d-1},1}+\|e^{-\frac{1}{2} {\varepsilon} (r-1)H}[M^{1-r}_{\anglb{x}},U]\|_{\frac{d}{d-1},1})
\end{equation*}
for a possibly larger constant $C_{d,{\varepsilon}}$.
We now concentrate on determining the ${\mathcal{L}}_{1}$ and ${\mathcal{L}}_{\frac{d}{d-1},1}$ norms of $e^{-\frac{1}{2} {\varepsilon} (r-1)H}[M^{-r}_{\anglb{x}},U]$ and $e^{-\frac{1}{2} {\varepsilon} (r-1)H}[M^{1-r}_{\anglb{x}},U]$
respectively. Identical arguments will control the ${\mathcal{L}}_{1}$ and ${\mathcal{L}}_{\frac{d}{d-1},1}$ norms of $[M^{-r}_{\anglb{x}},U]e^{-\frac{1}{2} {\varepsilon} (r-1)H}$ and $[M^{1-r}_{\anglb{x}},U]e^{-\frac{1}{2} {\varepsilon} (r-1)H}$ respectively.

Using \eqref{favourite_identity}  (once again justified by the fact that $M^{-r}_{\anglb{x}}$ preserves the domain of $H$), we have the following computations for the commutator $[M^{-r}_{\anglb{x}},U]$:
\begin{align*}
    [M^{-r}_{\anglb{x}},\frac{H+i}{H-i}] &= [M^{-r}_{\anglb{x}},1+2i(H-i)^{-1}]\\
                                            &= 2i[M^{-r}_{\anglb{x}},(H-i)^{-1}]\\
                                            &= -2i(H-i)^{-1}[M^{-r}_{\anglb{x}},H](H-i)^{-1}\\
                                            &= -2i(H-i)^{-1}[M^{-r}_{\anglb{x}},H_0](H-1)^{-1}.
\end{align*}
In view of \eqref{leibniz} and the computations leading to \eqref{F: aaa(2)}, we have:
\begin{align*}
        &(H-i)^{-1}[M^{-r}_{\anglb{x}},H_0](H-i)^{-1}\\
        &=  2r\sum_{m=1}^d(H-i)^{-1}M_{x_m\anglb{x}^{-r-2}}\partial_m(H-i)^{-1} +r(r+2-d)(H-i)^{-1}M_{\anglb{x}^{-r-2}}(H-i)^{-1}.
\end{align*}
It follows now from the triangle inequality that:
\begin{align*}
    \|e^{-\frac{1}{2} {\varepsilon} (r-1)H}[M^{-r}_{\anglb{x}},U]\|_1 &\leq C_{d,{\varepsilon}}(r\sum_{m=1}^d \|(H-i)^{-1}e^{-\frac{1}{2} {\varepsilon} (r-1)H}M^{-r-1}_{\anglb{x}}M_{x_m\anglb{x}^{-1}}\partial_m(H-i)^{-1}\|_1\\
                                                                &\quad+ r(r+d+2)\|(H-i)^{-1}e^{-\frac{1}{2} {\varepsilon} (r-1)H}M_{\anglb{x}^{-r-2}}(H-i)^{-1}\|_1).
\end{align*}
Using the fact that $x_m\anglb{x}^{-1}$, $\partial_m(H-i)^{-1}$ and $(H-i)^{-1}$ are bounded, we arrive at the bound:
\begin{equation}\label{mid_L_1_bound}
    \|e^{-\frac{1}{2} {\varepsilon} (r-1)H}[M^{-r}_{\anglb{x}},U]\|_1 \leq C_{d,{\varepsilon}}r^2\|e^{-\frac{1}{2} {\varepsilon} (r-1)H}M^{-r-1}_{\anglb{x}}\|_1.
\end{equation}
Here, once again the size of the constant may have increased.
An identical argument, replacing the ${\mathcal{L}}_{1}$ norm by the ${\mathcal{L}}_{\frac{d}{d-1},1}$ and using \eqref{transformer_L_d_1_bound} leads to the bound:
\begin{equation}\label{mid_L_d_1_bound}
    \|e^{-\frac{1}{2} {\varepsilon} (r-1)H}[M^{1-r}_{\anglb{x}},U]\|_{\frac{d}{d-1},1} \leq C_{d,{\varepsilon}}r^2\|e^{-\frac{1}{2} {\varepsilon} (r-1)H}M^{-r}_{\anglb{x}}\|_{\frac{d}{d-1},1}.
\end{equation}
Since $r > d$, \eqref{mid_L_1_bound} yields
\begin{equation*}
    \|e^{-\frac{1}{2} {\varepsilon} (r-1)H}[M^{-r}_{\anglb{x}},U]\|_1 \leq C_{d,{\varepsilon}}r^2\|e^{-\frac{{\varepsilon}(r-d)}{2}H}\|_{\infty}\|M^{d-r}_{\anglb{x}}\|_\infty\|e^{-\frac{1}{2} {\varepsilon} (d-1)H}M^{-d-1}_{\anglb{x}}\|_{1}.
\end{equation*}
and \eqref{mid_L_d_1_bound} yields
\begin{equation*}
    \|e^{-\frac{1}{2} {\varepsilon} (r-1)H}[M^{1-r}_{\anglb{x}},U]\|_{\frac{d}{d-1},1} \leq C_{d,{\varepsilon}}r^2\|e^{-\frac{{\varepsilon}(r-d)}{2}H}\|_{\infty}\|M^{d-r}_{\anglb{x}}\|_\infty\|e^{-\frac{1}{2} {\varepsilon} (d-1)H}M^{-d}_{\anglb{x}}\|_{\frac{d}{d-1},1}.
\end{equation*}
Using the fact that $(H+i)^{N}e^{-\frac{1}{2} {\varepsilon} (d-1)H}$ is bounded for any $N\geq 0$, we have:
\begin{equation*}
    \|e^{-\frac{1}{2} {\varepsilon} (d-1)H}M^{-d-1}_{\anglb{x}}\|_1 \leq C_{d,{\varepsilon}}\|(H+i)^{-(d+1)}M^{-(d+1)}_{\anglb{x}}\|_1.
\end{equation*}   
for a potentially different constant $C_{d,{\varepsilon}}$.
Theorem~\ref{main_cwikel_estimate} now provides the desired bounds, since ${\mathcal{L}}_{\frac{d}{d+1},\infty}\subset {\mathcal{L}}_1$.
\noindent Similarly,
\begin{equation*}
    \|e^{-\frac{1}{2} {\varepsilon} (d-1)H}M^{-d}_{\anglb{x}}\|_{\frac{d}{d-1},1} \leq C_{d,{\varepsilon}}\|(H+i)^{-d}M^{-d}_{\anglb{x}}\|_{\frac{d}{d-1},1}.
\end{equation*}    
Since ${\mathcal{L}}_{1,\infty}\subset {\mathcal{L}}_{\frac{d}{d-1},1}$, Theorem~\ref{main_cwikel_estimate} again yields the desired bound.

\end{proof}

\section{A residue formula}
We now proceed to the proof of the claim that:
\begin{equation*}
    {\rm Tr}_{\omega}(e^{-sH} M^{-d}_{\anglb{x}}) = \lim_{r\downarrow 1}(r-1){\rm Tr}(e^{-sH} M^{-dr}_{\anglb{x}} ).
\end{equation*}
That the left hand side makes sense is ensured by Theorem~\ref{main_cwikel_estimate}; indeed, it implies that $(H+i)^{-d}M^{-d}_{\anglb{x}} \in {\mathcal{L}}_{1,\infty}$, and since the operator $e^{-sH}(H+i)^d$ has bounded extension
it follows that $e^{-sH}M^{-d}_{\anglb{x}} \in {\mathcal{L}}_{1,\infty}$. That the right hand side makes sense will be a consequence of the arguments in this section.
The proofs of this section are achieved with some recently developed techniques in operator integration, developed originally in \cite{CSZ} and later extended in \cite[Section 5.2]{SZ_asterisque}.    

\subsection{Abstract result}
The following is an abstract operator theoretic result. It is similar to \cite[Theorem B.1]{CLMSZ}, although the assumptions are slightly different and the result is stated with an explicit norm bound.
\begin{thm}\label{main abstract thm} Let $d\geq 2$, $r>1$ let $p > \frac{d}{2}\geq 1$ and select $q \in (1,\infty)$ such that $\frac{1}{p}+\frac{1}{q} \leq 1$. Let $A,B\in B(H)$ be positive operators satisfying the following four conditions:
\begin{enumerate}[{\rm (i)}]
\item $[BA^{\frac{1}{2}},A^{\frac{1}{2}}]\in\mathcal{L}_{p,\infty};$
\item $B^{r-1}A^{r-1}\in\mathcal{L}_{q,1};$
\item $B^{r-1}[B,A^{r-1}]A\in\mathcal{L}_1;$ 
\item $(A^{1/2}BA^{1/2})^{r-1}\in\mathcal{L}_{q,1};$ 
\end{enumerate}
Then $B^rA^r-(A^{1/2}BA^{1/2})^r\in {\mathcal{L}}_1$ and
for some constants $c_{r,p,d}>0$, we have
\begin{align*}
\|B^rA^r-(A^{\frac{1}{2}}B&A^{\frac{1}{2}})^r\|_1 \leq c_{r,p,d}\Big(\|[BA^{\frac{1}{2}},A^{\frac{1}{2}}]\|_{p,\infty}\|(A^{1/2}BA^{1/2})^{r-1}\|_{q,1}\\
&\quad +\|B^{r-1}A^{r-1}\|_{q,1}\|[BA^{\frac{1}{2}},A^{\frac{1}{2}}]\|_{p,\infty}+\|B^{r-1}[BA^{\frac{1}{2}},A^{r-1}]A^{\frac12}\|_1\Big).
\end{align*}
\end{thm}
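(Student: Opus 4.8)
\emph{Strategy.} I would write $B^rA^r-(A^{1/2}BA^{1/2})^r$ as a sum of operators, each visibly in $\mathcal{L}_1$ because of one of the four hypotheses, keeping track of norms via the H\"older-type inequalities \eqref{weak holder}, \eqref{weird_holder} and the double operator integral bound \eqref{interpolated_naive_bound}. Throughout put $C:=A^{1/2}BA^{1/2}$ and $E:=[BA^{1/2},A^{1/2}]$, so that $E\in\mathcal{L}_{p,\infty}$ by (i) and $BA=C+E$; note also that $A^{1/2}$ commutes with $A^{r-1}$, whence $[BA^{1/2},A^{r-1}]A^{1/2}=[B,A^{r-1}]A$, so that (iii) says precisely $B^{r-1}[BA^{1/2},A^{r-1}]A^{1/2}\in\mathcal{L}_1$.

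\emph{Algebraic reduction.} The Leibniz identity $B^rA^r=B^{r-1}A^{r-1}(BA)+B^{r-1}[B,A^{r-1}]A$, together with $BA=C+E$, gives the exact identity
$$
B^rA^r-C^r=\bigl(B^{r-1}A^{r-1}-C^{r-1}\bigr)C+B^{r-1}A^{r-1}E+B^{r-1}[B,A^{r-1}]A .
$$
Since $\tfrac1p+\tfrac1q\le1$ one has $\mathcal{L}_{q,1}\cdot\mathcal{L}_{p,\infty}\subset\mathcal{L}_1$ (embed $\mathcal{L}_{q,1}\subset\mathcal{L}_{q',1}$ with $\tfrac1p+\tfrac1{q'}=1$ and apply \eqref{weird_holder}); hence by (ii) the second summand lies in $\mathcal{L}_1$ with trace norm $\le\|B^{r-1}A^{r-1}\|_{q,1}\|E\|_{p,\infty}$, and the third is in $\mathcal{L}_1$ by (iii) with trace norm $\le\|B^{r-1}[BA^{1/2},A^{r-1}]A^{1/2}\|_1$. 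The theorem is thereby reduced to proving that $(B^{r-1}A^{r-1}-C^{r-1})C\in\mathcal{L}_1$, with trace norm bounded by a constant (depending only on $r$) times $\|E\|_{p,\infty}\|C^{r-1}\|_{q,1}$ plus the quantities already controlled.

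\emph{Analytic core (the hard part).} The operator $B^{r-1}A^{r-1}-C^{r-1}$ vanishes whenever $B$ and $A$ commute, i.e.\ whenever $E=0$; the task is to make this quantitative and linear in $E$, and this is where the operator-integration techniques enter. I would use the intertwining coming from $(C-z)A^{1/2}=A^{1/2}(BA-z)$: this yields $\sigma(BA)=\sigma(C)\subseteq[0,\infty)$, $A^{1/2}(BA-z)^{-1}=(C-z)^{-1}A^{1/2}$, and hence $\psi(C)A^{1/2}=A^{1/2}\psi(BA)$ for $\psi(t)=t^{r-1}$ through the functional calculus (both sides annihilating $\ker A$). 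If $A$ is not invertible one first runs the argument with $A$ replaced by $A+\varepsilon>0$, arranges all constants to be independent of $\varepsilon$, and passes to the limit $\varepsilon\downarrow0$ in the final, $\varepsilon$-free inequality using lower semicontinuity of the trace norm (equivalently, one works on $\overline{\operatorname{ran}A}$). Expanding $\psi(BA)=\psi(C+E)$ about $C$ by the exact resolvent identity represents $\psi(C+E)-\psi(C)$ as a double operator integral of the divided difference $\psi^{[1]}$ applied to $E$; combining this with the elementary rearrangement of $B^{r-1}A^{r-1}$ against $(BA)^{r-1}$ into $[B,A]$-commutators and with the identities of the previous step, one obtains a finite-sum representation of $(B^{r-1}A^{r-1}-C^{r-1})C$ in which each term is a double operator integral transform of $E$ sandwiched between bounded operators and carrying exactly one surviving spectral factor $C^{r-1}$. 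The transforms $T^{C,C}_{\psi^{[1]}}$ are bounded on $\mathcal{L}_{p,\infty}$ and on $\mathcal{L}_{q,1}$ by \eqref{interpolated_naive_bound} once the power $t\mapsto t^{r-1}$ is transported to the unit circle via $t\mapsto\frac{t+i}{t-i}$, exactly as in the proof of Proposition~\ref{doi_corollary}, with norms depending only on $r$; applying this with \eqref{weak holder} and \eqref{weird_holder} produces the required $\mathcal{L}_1$ bound.

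\emph{Conclusion and obstacle.} Summing the three estimates and absorbing the H\"older and double-operator-integral constants --- which depend only on $r$, on $p$, and through $p>d/2$ on $d$ --- yields $B^rA^r-(A^{1/2}BA^{1/2})^r\in\mathcal{L}_1$ together with the stated inequality. The main obstacle is precisely the analytic core: producing an \emph{exact}, \emph{first-order-in-$E$} representation of $(B^{r-1}A^{r-1}-C^{r-1})C$ that carries exactly one factor of $C^{r-1}$ (so that the final bound is linear in both $\|E\|_{p,\infty}$ and $\|C^{r-1}\|_{q,1}$), while correctly handling the possibly non-invertible $A^{1/2}$ in the intertwining step and the fact that $BA$ is only similar to, not equal to, the positive operator $C$. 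Everything else is routine bookkeeping with the H\"older inequalities and the double-operator-integral estimates already established in the paper.
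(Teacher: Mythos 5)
Your algebraic decomposition
$$
B^rA^r-C^r=\bigl(B^{r-1}A^{r-1}-C^{r-1}\bigr)C+B^{r-1}A^{r-1}E+B^{r-1}[B,A^{r-1}]A,\qquad C:=A^{1/2}BA^{1/2},\ E:=[BA^{1/2},A^{1/2}],
$$
is correct, and your identifications of the second and third summands with hypotheses (ii) and (iii) are right (you are also correct that $[BA^{1/2},A^{r-1}]A^{1/2}=[B,A^{r-1}]A$). However, the proposal has a genuine gap, which you yourself flag: everything has been shuffled into the single term $(B^{r-1}A^{r-1}-C^{r-1})C$, and you do not actually produce the ``exact, first-order-in-$E$'' representation of it carrying one factor of $C^{r-1}$. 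That term is not simpler than the original difference; it is the whole difficulty, and without controlling it nothing has been proved.

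The paper's proof supplies precisely this missing ingredient, by a route you did not find: it invokes the integral formula of \cite[Theorem 5.2.1]{SZ_asterisque},
$$
B^rA^r-Y^r \;=\; T_r(0)-\int_{\mathbb{R}} T_r(s)\,\widehat{g}_r(s)\,ds,
$$
where $T_r(s)=B^{r-1+is}[BA^{1/2},A^{r-1/2+is}]Y^{-is}+B^{is}[BA^{1/2},A^{1/2+is}]Y^{r-1-is}$ and $g_r$ is an explicit Schwartz function built from $t\mapsto 1-\sinh(rt)/(2\sinh(t)\cosh((r-1)t))$. Expanding $T_r(0)$ with the Leibniz rule shows it equals $B^{r-1}A^{r-1}E+B^{r-1}[B,A^{r-1}]A+EC^{r-1}$ --- your two easy summands plus the single term $EC^{r-1}$; the remaining integral is then the trace-class correction to your problematic $(B^{r-1}A^{r-1}-C^{r-1})C$, and the paper controls $\|T_r(s)\|_1$ uniformly in $s$ (up to $(1+|s|)$) by the Potapov--Sukochev commutator estimate (Lemma~\ref{psacta main}) together with \eqref{weird_holder}, then integrates against $\widehat{g}_r$. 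This is a genuinely non-elementary input that your proposal replaces with a sketch.

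Two further problems with the sketch of the ``analytic core.'' First, the double operator integral step you propose would compare $\psi(C)$ with $\psi(C+E)=\psi(BA)$, but $BA$ is not self-adjoint (it is only similar to the positive operator $C$), so the self-adjoint/unitary DOI machinery and the Cayley transform $t\mapsto(t+i)/(t-i)$ do not apply to $BA$; $(BA+i)(BA-i)^{-1}$ is not unitary, and there is no $\phi^{[1]}(L_U,R_V)$ transformer available. You note this as an obstacle, but it is not a routine technicality --- it is fatal to that particular route. Second, a small but real error: your remark that $\tfrac1p+\tfrac1q\le1$ lets you ``embed $\mathcal{L}_{q,1}\subset\mathcal{L}_{q',1}$ with $\tfrac1p+\tfrac1{q'}=1$'' goes the wrong way, since $\mathcal{L}_{q,1}\subset\mathcal{L}_{q',1}$ only for $q\le q'$, whereas $\tfrac1p+\tfrac1q\le1$ forces $q\ge q'$; the correct embedding direction for making \eqref{weird_holder} applicable requires $\tfrac1p+\tfrac1q\ge1$ (compare with the choice $p>d/2$, $q>d/(d-1)$ actually used in Lemma~\ref{verification lemma}, where the sum is close to $1+1/d$).
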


The proof of Theorem~\ref{main abstract thm} is based on the formula given in \cite[Section 5.2]{SZ_asterisque}, stated in terms of a mapping $T_r:{\mathbb{R}}\to {\mathcal{B}}({\mathcal{H}})$ defined as follows:
\begin{defn} 
Let $A$ and $B$ be positive bounded operators, and let 
$$Y := A^{1/2}BA^{1/2}.$$
We define the mapping $T_r:\mathbb{R}\to {\mathcal{B}}({\mathcal{H}})$ by,
\begin{align*}
T_r(0) &:= B^{r-1}[BA^{\frac{1}{2}},A^{r-\frac{1}{2}}]+[BA^{\frac{1}{2}},A^{\frac{1}{2}}]Y^{r-1},\\
T_r(s) &:= B^{r-1+is}[BA^{\frac{1}{2}},A^{r-\frac{1}{2}+is}]Y^{-is}+B^{is}[BA^{\frac{1}{2}},A^{\frac{1}{2}+is}]Y^{r-1-is},\quad s \neq 0.
\end{align*}
\end{defn}
We now collect some auxiliary results for the proof of Theorem~\ref{main abstract thm}.
\begin{lemma}\label{psacta main} 
Let $p$ and $d$ be as in the statement of Theorem~\ref{main abstract thm}. Suppose that $A$ and $B$ are bounded positive operators such that $[BA^{\frac12},A^{\frac12}] \in {\mathcal{L}}_{\frac{d}{2},\infty}$. Then for all $s \in {\mathbb{R}}$,
$[BA^{\frac12},A^{\frac12+is}]\in {\mathcal{L}}_{p,\infty}$, and we have:
$$\|[BA^{\frac{1}{2}},A^{\frac{1}{2}+is}]\|_{p,\infty}\leq c_p(1+|s|)\|[BA^{\frac{1}{2}},A^{\frac{1}{2}}]\|_{p,\infty}.$$
\end{lemma}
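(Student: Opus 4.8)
The plan is to reduce the lemma to a single operator-Lipschitz estimate for the function $t\mapsto t^{1+2is}$. First I would set $S:=A^{1/2}$ and, for $s\in{\mathbb{R}}$, introduce $f_s(t):=t^{1+2is}$ on $[0,\infty)$, so that $f_s(S)=A^{1/2+is}$ and $f_0(S)=A^{1/2}$. The algebraic point is that $BA^{1/2}=BS$ is bounded with
$$[BA^{\frac12},A^{\frac12}]=[BS,S]=-[S,BS].$$
Applying the Birman--Solomyak commutator formula for a bounded self-adjoint operator (the self-adjoint analogue of \eqref{DK_formula}; see \cite[Theorem 4.1]{BirmanSolomyak1989} and \cite{AleksandrovPeller}) to the Lipschitz function $f_s$ gives
$$[BA^{\frac12},A^{\frac12+is}]=[BS,f_s(S)]=T^{S,S}_{f_s^{[1]}}\brs{[BS,S]}=T^{S,S}_{f_s^{[1]}}\brs{[BA^{\frac12},A^{\frac12}]}.$$
Thus the whole assertion reduces to bounding the transformer $T^{S,S}_{f_s^{[1]}}$ on ${\mathcal{L}}_{p,\infty}$ by $c_p(1+|s|)$.

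For the transformer bound I would argue as follows. On $(0,\infty)$ one has $f_s'(t)=(1+2is)t^{2is}$, hence $\|f_s'\|_{L_\infty(0,\infty)}=|1+2is|\leq 1+2|s|\leq 2(1+|s|)$, so $f_s$ is Lipschitz on $[0,\infty)$ with constant at most $2(1+|s|)$. By the operator-Lipschitz theory (see \cite{AleksandrovPeller} and the operator integration techniques of \cite{CSZ}, \cite[Section 5.2]{SZ_asterisque}), for every $p\in(1,\infty)$ the map $T^{S,S}_{f_s^{[1]}}$ is bounded on ${\mathcal{L}}_p$ with $\|T^{S,S}_{f_s^{[1]}}\|_{{\mathcal{L}}_p\to{\mathcal{L}}_p}\leq c_p\|f_s'\|_{L_\infty}$; since ${\mathcal{L}}_{p,\infty}$ is an interpolation space for the couple $({\mathcal{L}}_{p_0},{\mathcal{L}}_{p_1})$ with $p_0<p<p_1$, the same bound (with a possibly larger $c_p$) passes to ${\mathcal{L}}_{p,\infty}$. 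Since $p>\frac d2\geq 1$ in the statement, $p\in(1,\infty)$ and this is applicable.

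It remains to assemble the pieces. Because $p>\frac d2$ the embedding ${\mathcal{L}}_{\frac d2,\infty}\subset{\mathcal{L}}_{p,\infty}$ is continuous, so the hypothesis $[BA^{\frac12},A^{\frac12}]\in{\mathcal{L}}_{\frac d2,\infty}$ already places this commutator in ${\mathcal{L}}_{p,\infty}$; feeding it into the identity of the first paragraph gives $[BA^{\frac12},A^{\frac12+is}]\in{\mathcal{L}}_{p,\infty}$ together with
$$\|[BA^{\frac12},A^{\frac12+is}]\|_{p,\infty}\leq\|T^{S,S}_{f_s^{[1]}}\|_{{\mathcal{L}}_{p,\infty}\to{\mathcal{L}}_{p,\infty}}\,\|[BA^{\frac12},A^{\frac12}]\|_{p,\infty}\leq c_p(1+|s|)\,\|[BA^{\frac12},A^{\frac12}]\|_{p,\infty},$$
which is exactly the claimed inequality.

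The step I expect to be the main obstacle is the transformer estimate: one genuinely needs the operator-Lipschitz theorem in its weak-ideal (${\mathcal{L}}_{p,\infty}$) form, and with a constant that is \emph{linear} in $\|f_s'\|_{L_\infty}$. The smooth-symbol double operator integral bounds \eqref{naive_bound}--\eqref{interpolated_naive_bound} recorded in the preliminaries, and the usual passage to the unitary picture via a Cayley or exponential transform, are not directly available here, since $0$ typically belongs to the spectrum of $S=A^{1/2}$ while the higher derivatives of $t\mapsto t^{1+2is}$ blow up at $0$. One could alternatively use $f_s^{[1]}(t,u)=(1+2is)\int_0^1(\theta t+(1-\theta)u)^{2is}\,d\theta$ to express $T^{S,S}_{f_s^{[1]}}$ through imaginary powers of $\theta L_S+(1-\theta)R_S$ on ${\mathcal{L}}_{p,\infty}$, but that route only yields an $e^{c|s|}$-type bound a priori; the linear dependence on $|s|$ is precisely what the Lipschitz-constant form of the operator-Lipschitz theorem (or the techniques of \cite{CSZ,SZ_asterisque}) supplies.
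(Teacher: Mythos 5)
Your proof is correct and follows essentially the same route as the paper: set $X=BA^{1/2}$, $Y=A^{1/2}$, apply the operator-Lipschitz estimate $\|[X,f(Y)]\|_{p}\leq C_p\|f'\|_\infty\|[X,Y]\|_p$ to $f(t)=t^{1+2is}$ (the paper uses $|t|^{1+2is}$ on $\mathbb{R}$, which agrees on the spectrum of $A^{1/2}$), and interpolate to ${\mathcal{L}}_{p,\infty}$. The one remark worth making is that the paper's reference for the decisive transformer bound is Potapov--Sukochev \cite{PS_acta}, which is the cleanest source for the linear dependence on $\|f'\|_\infty$ that you rightly flag as the crux; citing that directly would tighten your last paragraph, but the argument is the same.
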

\begin{proof} 

The main result in \cite{PS_acta} asserts that for every Lipschitz continuous function $f$ on ${\mathbb{R}}$ and every $1<p' < \infty$, we have a constant $C_{p'}$ such that:
$$\|[X,f(Y)]\|_{p'}\leq C_{p'}\|f'\|_{\infty}\|[X,Y]\|_{p'},$$
whenever $Y$ is a self-adjoint operator and $X$ is a bounded operator such that $[X,Y] \in {\mathcal{L}}_{p'}$. The method of proof in \cite{PS_acta} was to construct a linear operator $T^{Y,Y}_{f^{[1]}}$ which is bounded
from ${\mathcal{L}}_{p'}$ to ${\mathcal{L}}_{p'}$ for every $1< p'< \infty$ and such that $T^{Y,Y}_{f^{[1]}}([X,Y]) = [X,f(Y)]$. 
Since $p > 1$, the ideal ${\mathcal{L}}_{p,\infty}$ is an interpolation space between ${\mathcal{L}}_{p'}$ and ${\mathcal{L}}_{p''}$ for suitable $1 < p' < p < p'' < \infty$. An interpolation argument (see e.g. \cite[p. 5225]{APPS}) further implies that $T^{Y,Y}_{f^{[1]}}$ is bounded from ${\mathcal{L}}_{p,\infty}$ to ${\mathcal{L}}_{p,\infty}$,
and we have the inequality:
$$\|[X,f(Y)]\|_{p,\infty} \leq c_{p}\|f'\|_\infty \|[X,Y]\|_{p,\infty}$$
for some constant $c_p$.

Take $X=BA^{\frac12},$ $Y=A^{\frac12}$ and $f(t)=|t|^{1+2is},$ $t\in\mathbb{R}.$ Since $\|f'\|_{\infty}\leq 1+2|s|,$ the assertion follows.
\end{proof}

\begin{lemma}\label{main abstract estimate}
Let $A,$ $B,$ $d,$ $r,$ $p$ and $q$ be as in Theorem~\ref{main abstract thm}. 
Then the map ${\mathbb{R}} \ni s \mapsto T_r(s)$ takes values in the trace class, and     
there is a constant $c_p > 0$ such that for all $s \in {\mathbb{R}}$:
\begin{align*}
    \|T_r(s)\|_1 &\leq c_p(1+|s|)\|[BA^{\frac{1}{2}},A^{\frac{1}{2}}]\|_{p,\infty}\|Y^{r-1}\|_{q,1}\\
                &\quad +c_p(1+|s|)\|B^{r-1}A^{r-1}\|_{q,1}\|[BA^{\frac{1}{2}},A^{\frac{1}{2}}]\|_{p,\infty}\\
                &\quad +\|B^{r-1}[BA^{\frac{1}{2}},A^{r-1}]A^{\frac12}\|_1.
\end{align*}
\end{lemma}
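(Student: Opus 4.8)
The plan is to expand $T_r(s)$ into three summands by applying the Leibniz rule to the commutator $[BA^{\frac12},A^{r-\frac12+is}]$, to peel off the contractive imaginary powers $A^{is}$, $B^{is}$, $Y^{-is}$, to recognize the leading summand as a rewriting of the operator appearing in hypothesis (iii) of Theorem~\ref{main abstract thm}, and to estimate the remaining two summands with H\"older's inequality using hypotheses (ii), (iv) and Lemma~\ref{psacta main}.

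First I would record the elementary identities used throughout. For bounded positive operators the fractional-power functional calculus gives $A^{r-\frac12+is}=A^{r-1}A^{\frac12+is}$, $A^{\frac12+is}=A^{\frac12}A^{is}$, $B^{r-1+is}=B^{is}B^{r-1}$ and $Y^{r-1-is}=Y^{r-1}Y^{-is}$ (with the convention $0^{is}=0$), while $A^{is}$, $B^{is}$ and $Y^{-is}$ all have norm at most $1$. The Leibniz rule gives $[BA^{\frac12},A^{r-\frac12+is}]=[BA^{\frac12},A^{r-1}]A^{\frac12+is}+A^{r-1}[BA^{\frac12},A^{\frac12+is}]$. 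Finally, since $BA^{\frac12}A^{r-1}=BA^{r-1}A^{\frac12}$, one has $[BA^{\frac12},A^{r-1}]=[B,A^{r-1}]A^{\frac12}$, whence $B^{r-1}[BA^{\frac12},A^{r-1}]A^{\frac12}=B^{r-1}[B,A^{r-1}]A\in\mathcal{L}_1$ by hypothesis (iii).

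Substituting the Leibniz identity into the definition of $T_r(s)$ and regrouping with the identities above, I would obtain, for every $s\in\mathbb{R}$ (the case $s=0$ being the obvious limit of this expression, which coincides with the defining formula for $T_r(0)$):
\begin{align*}
T_r(s)&=B^{is}\bigl(B^{r-1}[BA^{\frac12},A^{r-1}]A^{\frac12}\bigr)A^{is}Y^{-is}+B^{is}\bigl(B^{r-1}A^{r-1}\bigr)[BA^{\frac12},A^{\frac12+is}]Y^{-is}\\
&\quad+B^{is}[BA^{\frac12},A^{\frac12+is}]Y^{r-1}Y^{-is}=:P_1(s)+P_2(s)+P_3(s).
\end{align*}
For $P_1(s)$ the three outer factors are contractions and the middle factor lies in $\mathcal{L}_1$ by (iii), so $\|P_1(s)\|_1\le\|B^{r-1}[BA^{\frac12},A^{r-1}]A^{\frac12}\|_1$. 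For $P_2(s)$ and $P_3(s)$, Lemma~\ref{psacta main} (whose hypothesis is supplied by (i)) gives $[BA^{\frac12},A^{\frac12+is}]\in\mathcal{L}_{p,\infty}$ with $\|[BA^{\frac12},A^{\frac12+is}]\|_{p,\infty}\le c_p(1+|s|)\|[BA^{\frac12},A^{\frac12}]\|_{p,\infty}$; combining this with hypotheses (ii) and (iv), the fact that $B^{is}$ and $Y^{-is}$ are contractions, and the H\"older inequality \eqref{weird_holder} together with the ideal inclusions \eqref{L_p_embeddings}, I get
$$\|P_2(s)\|_1\le c_p(1+|s|)\|B^{r-1}A^{r-1}\|_{q,1}\|[BA^{\frac12},A^{\frac12}]\|_{p,\infty},\qquad\|P_3(s)\|_1\le c_p(1+|s|)\|[BA^{\frac12},A^{\frac12}]\|_{p,\infty}\|Y^{r-1}\|_{q,1}.$$
In particular each $P_j(s)$, hence $T_r(s)$, lies in $\mathcal{L}_1$, and summing the three bounds gives precisely the asserted inequality.

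The difficulty is bookkeeping rather than conceptual. One must track the complex fractional powers carefully so that the Leibniz expansion and the factorizations in the second paragraph are literally valid (routine for bounded positive operators, once the convention at $0$ is fixed), and --- the real point --- arrange the two H\"older estimates so that the product of an $\mathcal{L}_{p,\infty}$ operator with an $\mathcal{L}_{q,1}$ operator is genuinely trace class; this is exactly where the numerical relation between $p$ and $q$ in Theorem~\ref{main abstract thm} and the inclusions among the ${\mathcal{L}}_{q,1}$ ideals in \eqref{L_p_embeddings} are used. The $s$-dependence is harmless: it enters only through the factor $1+|s|$, which comes from the bound $\|f'\|_\infty\le 1+2|s|$ for $f(t)=|t|^{1+2is}$ in Lemma~\ref{psacta main}.
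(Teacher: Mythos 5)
Your proposal is correct and takes essentially the same route as the paper: the paper first splits $\|T_r(s)\|_1$ by the triangle inequality into two terms $(I)$ and $(II)$ (peeling off the contractive imaginary powers at the trace-norm level), then expands $(I)$ via the Leibniz rule $[BA^{\frac12},A^{r-\frac12+is}]=A^{r-1}[BA^{\frac12},A^{\frac12+is}]+[BA^{\frac12},A^{r-1}]A^{\frac12+is}$, and estimates exactly the same three pieces with \eqref{weird_holder} and Lemma~\ref{psacta main}. Your rewriting as $P_1+P_2+P_3$ is only a cosmetic reordering of these same steps; the one small imprecision is that $T_r(0)$ is defined separately in the paper rather than being literally the $s\to 0$ limit of the $s\neq 0$ formula (which could differ by range projections if $B$ or $Y$ had nontrivial kernel), but since you verify the bound for $T_r(0)$ by the identical argument this is harmless.
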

\begin{proof} 

We prove this for $s\neq 0$, the proof for $s = 0$ is identical. By the triangle inequality, we have
\begin{align*}
    \|T_r(s)\|_1 &\leq \|B^{r-1+is}[BA^{\frac{1}{2}},A^{r-\frac{1}{2}+is}]Y^{-is}\|_1+\|B^{is}[BA^{\frac{1}{2}},A^{\frac{1}{2}+is}]Y^{r-1-is}\|_1\\
                &\leq \|B^{r-1}[BA^{\frac{1}{2}},A^{r-\frac{1}{2}+is}]\|_1+\|[BA^{\frac{1}{2}},A^{\frac{1}{2}+is}]Y^{r-1}\|_1 \\
                & =: (I) + (II).
\end{align*}

Using the Leibniz rule, we have
$$[BA^{\frac{1}{2}},A^{r-\frac{1}{2}+is}] = A^{r-1}[BA^{\frac{1}{2}},A^{\frac{1}{2}+is}] + [BA^{\frac{1}{2}},A^{r-1}]A^{\frac12+is}.$$
Therefore, using \eqref{weird_holder} we get 

\begin{align*}
   (I)  & =  \|B^{r-1}[BA^{\frac{1}{2}},A^{r-\frac{1}{2}+is}]\|_1\\
      &\leq\|B^{r-1}A^{r-1}[BA^{\frac{1}{2}},A^{\frac{1}{2}+is}]\|_1+\|B^{r-1}[BA^{\frac{1}{2}},A^{r-1}]A^{\frac12+is}\|_1\\
                                                            &\leq\|B^{r-1}A^{r-1}\|_{q,1}\|[BA^{\frac{1}{2}},A^{\frac{1}{2}+is}]\|_{p,\infty}\\
                                                            &\quad +\|B^{r-1}[BA^{\frac{1}{2}},A^{r-1}]A^{\frac12}\|_1.
\end{align*}
Using Lemma~\ref{psacta main}, we have
\begin{align*}
     (I) &\leq c_p(1+|s|)\|B^{r-1}A^{r-1}\|_{q,1}\|[BA^{\frac{1}{2}},A^{\frac{1}{2}}]\|_{p,\infty}\\
                                                            &\quad+\|B^{r-1}[BA^{\frac{1}{2}},A^{r-1}]A^{\frac12}\|_1.
\end{align*}

\smallskip
On the other hand, using \eqref{weird_holder}, we have:
$$ (II) =  \|[BA^{\frac{1}{2}},A^{\frac{1}{2}+is}]Y^{r-1}\|_1\leq \|[BA^{\frac{1}{2}},A^{\frac{1}{2}+is}]\|_{p,\infty}\|Y^{r-1}\|_{q,1}.$$
Again applying Lemma~\ref{psacta main}, it follows that:
$$
 (II)  \leq c_d(1+|s|)\|[BA^{\frac{1}{2}},A^{\frac{1}{2}}]\|_{p,\infty}\|Y^{r-1}\|_{q,1}.
$$

Hence, $T_r(s) \in {\mathcal{L}}_1$ with the appropriate norm bound.

\end{proof}

By means of Lemma~\ref{main abstract estimate} and an integral formula from \cite[Theorem 5.2.1]{SZ_asterisque}, we obtain a proof of Theorem~\ref{main abstract thm}.
\begin{proof}[Proof of Theorem~\ref{main abstract thm}] 
Define the function $g_r:\mathbb{R}\to \mathbb{R}$ by
\begin{align*}
    g_r(0)  &:= 1-\frac{r}{2},\\
    g_r(2t) &:= 1-\frac{ \sinh(rt) }  { 2\sinh(t) \cosh((r-1)t) },\quad t\neq0.
\end{align*} 
The function $g_r$ is Schwartz class on ${\mathbb{R}}$ (see \cite[Remark 5.2.2]{SZ_asterisque}).
According to \cite[Theorem 5.2.1]{SZ_asterisque}, the mapping $T_r$ is continuous in the weak operator topology and
\begin{equation}\label{CSZ_integral}
    B^rA^r-Y^r = T_r(0)-\int_{\mathbb{R}} T_r(s)\widehat{g}_r(s)\,ds,
\end{equation}
where the integral on the left converges in the weak operator topology, 
and $\widehat{g}_r$ is the Fourier transform of $g_r$, scaled so that $g_r(t) = \int_{{\mathbb{R}}} \widehat{g}_r(s)e^{its}\,ds.$
Our result is based on the estimate:
\begin{equation}\label{integral_triangle_inequality}
    \|B^rA^r-Y^r\|_1 \leq \|T_r(0)\|_1+\int_{\mathbb{R}}\|T_r(s)\|_1|\widehat{g}_r(s)|ds.
\end{equation}
and a corresponding implication that if the right hand side of \eqref{integral_triangle_inequality} is finite then $B^rA^r-Y^r \in {\mathcal{L}}_1$.
This result does not immediately follow from the integral formula \eqref{CSZ_integral} since the integral only converges in the weak operator topology, however it can be justified by the argument presented in \cite[Lemma 2.3.2]{SZ_asterisque}.
Granted \eqref{integral_triangle_inequality}, we apply the bound on $\|T_r(s)\|_1$ from Lemma~\ref{main abstract estimate} to obtain:

$$\|B^rA^r-Y^r\|_1\leq RHS(0)+RHS(0)\cdot\int_{\mathbb{R}}(1+|s|)|\widehat{g}_r(s)|ds,$$
where $RHS(s)$ is the right hand side of the inequality in Lemma~\ref{main abstract estimate}.
As $g_r$ is a Schwartz class function, then so is the Fourier transform $\widehat{g}_r.$ Hence, the integral on the right hand side converges and the assertion follows.
\end{proof}

\subsection{The main residue formula}
To apply Theorem~\ref{main abstract thm} to the problem at hand, we need to verify its assumptions for the relevant operators.
\begin{lemma}\label{verification lemma} Let $A=e^{-{\varepsilon} H}$ and $B = M^{-1}_{\anglb{x}},$ where ${\varepsilon} >0$. Let $Y = A^{\frac12}BA^{\frac12}$. 
    Let $p > \frac{d}{2}$ and $q > \frac{d}{d-1}$. We have the following:
    \begin{enumerate}[{\rm (i)}]
        \item\label{vera} $[BA^{\frac{1}{2}},A^{\frac{1}{2}}]\in\mathcal{L}_{p,\infty}$ and $Y\in\mathcal{L}_{d,\infty};$
        \item\label{verb} $\|B^{r-1}A^{r-1}\|_{q,1}=O(1)$ as $r\downarrow d;$
        \item\label{verc} $\|B^{r-1}[B,A^{r-1}]A\|_1=O(1)$ as $r\downarrow d;$
        \item\label{verd} If $r > d$, we have $(A^{1/2}BA^{1/2})^{r-1} \in {\mathcal{L}}_{q,1}$. 
    \end{enumerate}
\end{lemma}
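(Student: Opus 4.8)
The plan is to verify the four conditions with $A=e^{-{\varepsilon}H}$ and $B=M^{-1}_{\anglb{x}}$, both bounded positive operators (for $A$ one uses $H\geq -\|V\|_\infty$), so that $BA^{\frac12}=M^{-1}_{\anglb{x}}e^{-{\varepsilon}H/2}$ and $Y=e^{-{\varepsilon}H/2}M^{-1}_{\anglb{x}}e^{-{\varepsilon}H/2}$. The inclusion $Y\in\mathcal{L}_{d,\infty}$ in (i) is the quickest point: Theorem~\ref{main_cwikel_estimate} with $p=1$ gives $(H+i)^{-1}M^{-1}_{\anglb{x}}\in\mathcal{L}_{d,\infty}$, and since $e^{-{\varepsilon}H/2}(H+i)$ is bounded by the spectral theorem, $e^{-{\varepsilon}H/2}M^{-1}_{\anglb{x}}\in\mathcal{L}_{d,\infty}$, whence $Y\in\mathcal{L}_{d,\infty}$ after right multiplication by the bounded operator $e^{-{\varepsilon}H/2}$. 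Condition (iv) then follows immediately, since $\mu(k,Y^{r-1})=\mu(k,Y)^{r-1}=O((k+1)^{-(r-1)/d})$ gives $Y^{r-1}\in\mathcal{L}_{d/(r-1),\infty}$, and for $r>d$ one has $d/(r-1)<d/(d-1)<q$, so $Y^{r-1}\in\mathcal{L}_{q,1}$ by \eqref{L_p_embeddings}.

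For (ii), I would factor, for $d\leq r\leq d+1$,
\begin{equation*}
    B^{r-1}A^{r-1}=M^{1-r}_{\anglb{x}}e^{-{\varepsilon}(r-1)H}=M^{d-r}_{\anglb{x}}\cdot\bigl(M^{1-d}_{\anglb{x}}e^{-{\varepsilon}(d-1)H}\bigr)\cdot e^{-{\varepsilon}(r-d)H}
\end{equation*}
by commutativity of the multiplication operators and the semigroup property: the two outer factors have operator norm at most $e^{{\varepsilon}\|V\|_\infty}$ uniformly in $r$, and the fixed middle factor lies in $\mathcal{L}_{d/(d-1),\infty}\subset\mathcal{L}_{q,1}$ by Theorem~\ref{main_cwikel_estimate} with $p=d-1$ (take adjoints and use boundedness of $(H-i)^{d-1}e^{-{\varepsilon}(d-1)H}$) together with \eqref{L_p_embeddings}; hence $\|B^{r-1}A^{r-1}\|_{q,1}=O(1)$ as $r\downarrow d$. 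For (iii), I would reduce to Proposition~\ref{doi_corollary} by means of the identity
\begin{equation*}
    B^{r-1}[B,A^{r-1}]A=\bigl([M^{-r}_{\anglb{x}},e^{-{\varepsilon}(r-1)H}]-[M^{1-r}_{\anglb{x}},e^{-{\varepsilon}(r-1)H}]M^{-1}_{\anglb{x}}\bigr)e^{-{\varepsilon}H},
\end{equation*}
which one checks directly using $B^r=M^{-r}_{\anglb{x}}$: the $\mathcal{L}_1$-norm of the first summand is at most $e^{{\varepsilon}\|V\|_\infty}\|[M^{-r}_{\anglb{x}},e^{-{\varepsilon}(r-1)H}]\|_1=O(1)$ by the first assertion of Proposition~\ref{doi_corollary}, while that of the second is at most $\|M^{-1}_{\anglb{x}}e^{-{\varepsilon}H}\|_{d,\infty}\,\|[M^{1-r}_{\anglb{x}},e^{-{\varepsilon}(r-1)H}]\|_{\frac{d}{d-1},1}$ by \eqref{weird_holder} applied to adjoints, the two factors being bounded by Theorem~\ref{main_cwikel_estimate} and the second assertion of Proposition~\ref{doi_corollary} respectively.

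The remaining inclusion $[BA^{\frac12},A^{\frac12}]\in\mathcal{L}_{p,\infty}$ in (i) is the crux, and I would argue as in Proposition~\ref{doi_corollary}. Since $[BA^{\frac12},A^{\frac12}]=[M^{-1}_{\anglb{x}},e^{-{\varepsilon}H/2}]e^{-{\varepsilon}H/2}$, it suffices to show $[M^{-1}_{\anglb{x}},e^{-{\varepsilon}H/2}]\in\mathcal{L}_{p,\infty}$ for every $p>d/2$. Putting $U=\tfrac{H+i}{H-i}$ and picking $\phi$ smooth on $\mathbb{T}$ with $\phi\bigl(\tfrac{t+i}{t-i}\bigr)=e^{-{\varepsilon}t/2}$ for $t\geq-\|V\|_\infty$, \eqref{DK_formula} gives $[M^{-1}_{\anglb{x}},e^{-{\varepsilon}H/2}]=T^{U,U}_{\phi^{[1]}}([M^{-1}_{\anglb{x}},U])$, and $T^{U,U}_{\phi^{[1]}}$ is bounded on $\mathcal{L}_{p,\infty}$ for every $p>1$ by \eqref{naive_bound} and interpolation, so the matter reduces to $[M^{-1}_{\anglb{x}},U]\in\mathcal{L}_{p,\infty}$. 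By \eqref{favourite_identity} and \eqref{leibniz}, $[M^{-1}_{\anglb{x}},U]=-2i(H-i)^{-1}[M^{-1}_{\anglb{x}},H_0](H-i)^{-1}$, where $[M^{-1}_{\anglb{x}},H_0]$ is, up to constants, a sum of the terms $M_{x_m\anglb{x}^{-3}}\partial_m$ ($1\leq m\leq d$) and $M_{\anglb{x}^{-3}}$. Splitting $\anglb{x}^{-3}=\anglb{x}^{-1}\cdot\anglb{x}^{-1}\cdot\anglb{x}^{-1}$ between the two resolvents and using $(H-i)^{-1}M^{-1}_{\anglb{x}}\in\mathcal{L}_{d,\infty}$ (Theorem~\ref{main_cwikel_estimate}), its adjoint, and \eqref{weak holder}, the $M_{\anglb{x}^{-3}}$-term lies in $\mathcal{L}_{d/2,\infty}$. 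For the $M_{x_m\anglb{x}^{-3}}\partial_m$-term I would write
\begin{equation*}
    (H-i)^{-1}M_{x_m\anglb{x}^{-3}}\partial_m(H-i)^{-1}=\bigl((H-i)^{-1}M^{-1}_{\anglb{x}}\bigr)\bigl(M_{x_m\anglb{x}^{-2}}\partial_m(H_0-i)^{-1}\bigr)\bigl((H_0-i)(H-i)^{-1}\bigr),
\end{equation*}
the last factor being bounded by the resolvent identity; since $\abs{x_m}\anglb{x}^{-2}\leq\anglb{x}^{-1}\in L_{d+\delta}(\mathbb{R}^d)$ and the Fourier multiplier of $\partial_m(H_0-i)^{-1}$ lies in $L_{d,\infty}(\mathbb{R}^d)\cap L_\infty(\mathbb{R}^d)$, Proposition~\ref{classical_cwikel} places the middle factor in $\mathcal{L}_{d+\delta,\infty}$, and \eqref{weak holder} then puts the whole term in $\mathcal{L}_{\frac{d(d+\delta)}{2d+\delta},\infty}$. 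As $\frac{d(d+\delta)}{2d+\delta}\downarrow\frac d2$ when $\delta\downarrow0$, choosing $\delta$ small enough gives $[M^{-1}_{\anglb{x}},U]\in\mathcal{L}_{p,\infty}$ for any prescribed $p>d/2$.

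The step I expect to be the main obstacle is this last one. The obvious factorisations of $(H-i)^{-1}M_{x_m\anglb{x}^{-3}}\partial_m(H-i)^{-1}$ reach only $\mathcal{L}_{d,\infty}$, and obtaining an exponent arbitrarily close to the sharp value $d/2$ forces one to split the weight $\anglb{x}^{-3}$ carefully among the two resolvents and to use the integrability $\anglb{x}^{-1}\in L_{d+\delta}$ in place of the merely weak bound $\anglb{x}^{-1}\in L_{d,\infty}$; the delicate two-dimensional case of the Cwikel estimate (Proposition~\ref{classical_cwikel}) enters only indirectly, through the already-established Theorem~\ref{main_cwikel_estimate}.
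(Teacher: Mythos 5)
Your proof is correct, and it follows the paper's general strategy for parts (ii)--(iv), but takes a noticeably different route through part (i); here is the comparison. For $Y\in\mathcal{L}_{d,\infty}$ you factor $Y=(e^{-\varepsilon H/2}M^{-1}_{\anglb{x}})e^{-\varepsilon H/2}$ and invoke Theorem~\ref{main_cwikel_estimate} directly, which is cleaner than the paper's route: the paper writes $Y=BA-[BA^{1/2},A^{1/2}]$ and so must first establish the commutator estimate before it can conclude $Y\in\mathcal{L}_{d,\infty}$. For the commutator estimate itself, both you and the paper pass through a double operator integral against the Cayley transform $U=(H+i)/(H-i)$ and then reduce to a Schatten bound on a commutator of $M^{-1}_{\anglb{x}}$ with a resolvent; the difference is in which commutator is estimated and how. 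You peel off the right-hand $A^{1/2}$ first, so the object to control is $[M^{-1}_{\anglb{x}},U]=-2i(H-i)^{-1}[M^{-1}_{\anglb{x}},H_0](H-i)^{-1}$, sandwiched between two resolvents; you then get into $\mathcal{L}_{r,\infty}$ with $r$ arbitrarily close to $d/2$ by combining $(H-i)^{-1}M^{-1}_{\anglb{x}}\in\mathcal{L}_{d,\infty}$ with a strong-$L_{d+\delta}$ Cwikel bound on $M_{x_m\anglb{x}^{-2}}\partial_m(H_0-i)^{-1}$. The paper instead keeps $BA^{1/2}=M^{-1}_{\anglb{x}}e^{-\varepsilon H/2}$ intact and estimates the one-sided quantity $[M^{-1}_{\anglb{x}},H_0]e^{-\varepsilon H/2}$ directly, getting $\mathcal{L}_{d/2,\infty}$ on the nose by inserting high powers of $(H_0+i)$ and $(H+i)$ against the exponential. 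Your variant trades the sharp endpoint $\mathcal{L}_{d/2,\infty}$ for $\mathcal{L}_{r,\infty}$ with $r\downarrow d/2$ (which suffices, since only $p>d/2$ is needed), and in exchange it only ever uses $(H+i)e^{-\varepsilon H/2}$ and $(H_0-i)(H-i)^{-1}$ bounded, avoiding the paper's appeal to boundedness of the operator $(H_0+i)^3e^{-\varepsilon H/2}$ (a point which, for merely bounded $V$, is more delicate than it may appear since $D(H^k)$ need not coincide with $D(H_0^k)$ for $k\geq 2$). Parts (ii) and (iii) agree with the paper up to a harmless rearrangement of the algebraic identity and the choice of which bounded factors are absorbed into constants, and part (iv) is identical.
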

\begin{proof} 

First we verify \eqref{vera}. We begin by noting that:
\begin{equation*}
[M_{\anglb{x}}^{-1}e^{-\frac{1}{2} {\varepsilon} H},H] = [M_{\anglb{x}}^{-1},H]e^{-\frac{1}{2} {\varepsilon} H} = [M_{\anglb{x}}^{-1},H_0]e^{-\frac{1}{2} {\varepsilon} H}.
\end{equation*}
As in the proof of Lemma~\ref{first commutator lemma},  specifically \eqref{leibniz}, 
we have:
\begin{equation*}
[M^{-1}_{\anglb{x}},H_0] = 2\sum_{m=1}^d M_{\partial_m (\anglb{x}^{-1})}\partial_m+M_{H_0(\anglb{x}^{-1})}.
\end{equation*}
We may evaluate each $\partial_m(\anglb{x}^{-1})$ and $H_0(\anglb{x}^{-1})$ using \eqref{derivative_computations}, to arrive at:
\begin{equation*}
[M^{-1}_{\anglb{x}},H_0] = -2\left(\sum_{m=1}^d M_{x_m\anglb{x}^{-1}}M_{\anglb{x}}^{-2}\partial_m\right) - (3-d)M_{\anglb{x}}^{-3}.
\end{equation*} 
It follows that
\begin{align*}
[M_{\anglb{x}}^{-1}e^{-\frac{1}{2} {\varepsilon} H},H] &= [M_{\anglb{x}}^{-1},H_0]e^{-\frac{1}{2} {\varepsilon} H}\\
                                                &= -2\left(\sum_{m=1}^d M_{x_m\anglb{x}^{-1}}M_{\anglb{x}^{-2}}\partial_m\right)e^{-\frac{1}{2} {\varepsilon} H}-(3-d)M_{\anglb{x}}^{-3}e^{-\frac{1}{2} {\varepsilon} H}.
\end{align*}
The latter term is in ${\mathcal{L}}_{\frac{d}{3},\infty}$, since $(H+i)^3e^{-\frac{1}{2} {\varepsilon} H}$ is bounded
and Theorem~\ref{main_cwikel_estimate} with $p=3$ applies. For the former term, we instead use the fact that $(H_0+i)^3e^{-\frac{1}{2} {\varepsilon} H}$
is bounded, and then Theorem~\ref{main_cwikel_estimate} with $p=2$ implies that:
\begin{equation*}
[M_{\anglb{x}}^{-1}e^{-\frac{1}{2} {\varepsilon} H},H] \in {\mathcal{L}}_{\frac{d}{2},\infty} \subset {\mathcal{L}}_{p,\infty}.
\end{equation*}
Let $U = \frac{H+i}{H-i}$. It follows that
\begin{equation*}
    [M_{\anglb{x}}^{-1}e^{-\frac{1}{2} {\varepsilon} H},U] = 2i[M_{\anglb{x}}^{-1}e^{-\frac{1}{2} {\varepsilon} H},(H-i)^{-1}] = -2i(H-i)^{-1}[M_{\anglb{x}}^{-1}e^{-\frac{1}{2} {\varepsilon} H},H](H-i)^{-1} \in {\mathcal{L}}_{p,\infty}.
\end{equation*}

We now use the smooth function $\phi_{2}$ introduced in the proof of Proposition~\ref{doi_corollary} and \eqref{DK_formula} to arrive at:
\begin{equation*}
[M_{\anglb{x}}^{-1}e^{-\frac{1}{2} {\varepsilon} H},e^{-\frac{1}{2} {\varepsilon} H}] = T^{U,U}_{\phi_{2}^{[1]}}([M_{\anglb{x}}^{-1}e^{-\frac{1}{2} {\varepsilon} H},U]).
\end{equation*}
Since $p > 1$, the ideal ${\mathcal{L}}_{p,\infty}$ is an interpolation space between ${\mathcal{L}}_{1}$ and ${\mathcal{L}}_{\infty}$, and hence \eqref{naive_bound}
implies the boundedness of $T^{U,U}_{\phi_2^{[1]}}$ from ${\mathcal{L}}_{p,\infty}$ to ${\mathcal{L}}_{p,\infty}$.

Thus,
\begin{equation*}
[M_{\anglb{x}}^{-1}e^{-\frac{1}{2} {\varepsilon} H},e^{-\frac{1}{2} {\varepsilon} H}] = [BA^{\frac{1}{2}},A^{\frac{1}{2}}] \in {\mathcal{L}}_{p,\infty}. 
\end{equation*}
This yields the first part of \eqref{vera}. To see the second part (that $Y \in {\mathcal{L}}_{d,\infty}$), simply note that:
$AB=e^{-{\varepsilon} H}(H+i)\cdot (H+i)^{-1}M^{-1}_{\anglb{x}},$
so that Theorem~\ref{main_cwikel_estimate} yields $AB\in\mathcal{L}_{d,\infty}$, and:
\begin{equation*}
Y = A^{\frac12}BA^{\frac12} = BA-[BA^{\frac12},A^{\frac12}].
\end{equation*}
By taking $\frac{d}{2} < p < d$, we have already proved that $[BA^{1/2},A^{1/2}] \in {\mathcal{L}}_{d,\infty}$. Hence $Y \in {\mathcal{L}}_{d,\infty}$.

\noindent Now we prove \eqref{verb}. If $r > d$, we have:
\begin{align*}
\|B^{r-1}A^{r-1}\|_{q,1} &\leq \|B^{r-d}\|_{\infty}\|A^{r-d}\|_{\infty}\|B^{d-1}A^{d-1}\|_{q,1}.
\end{align*}
Then
$$\|B^{d-1}A^{d-1}\|_{q,1}\leq\|e^{-{\varepsilon}(d-1)H}(H+i)^{d-1}\|_{\infty}\| M^{1-d}_{\anglb{x}}  (H+i)^{1-d}\|_{q,1}.$$
Since $q > \frac{d}{d-1}$, we have:
\begin{equation*}
\|M^{1-d}_{\anglb{x}}(H+i)^{1-d}\|_{q,1} \leq \|M^{1-d}_{\anglb{x}}(H+i)^{1-d}\|_{\frac{d}{d-1},\infty}
\end{equation*}
The assertion \eqref{verb} follows now from Theorem~\ref{main_cwikel_estimate}.

\noindent Next we prove \eqref{verc}. We write $B^{r-1}[B,A^{r-1}]A$ as
$$B^{r-1}[B,A^{r-1}]A=[B^r,A^{r-1}]\cdot A-[B^{r-1},A^{r-1}]\cdot BA.$$

Thus, using \eqref{weird_holder} and our previous observation that $BA \in {\mathcal{L}}_{d,\infty}$, we have
$$\|B^{r-1}[B,A^{r-1}]A\|_1\leq\|[B^r,A^{r-1}]\|_1\|A\|_\infty+\|[B^{r-1},A^{r-1}]\|_{\frac{d}{d-1},1}\|BA\|_{d,\infty}.$$
Since $r > d$, Proposition~\ref{doi_corollary} yields the desired uniform control on $\|[B^r,A^{r-1}]\|_1$ and $\|[B^{r-1},A^{r-1}]\|_{\frac{d}{d-1},1}$ as $r\downarrow d$.

\noindent Finally, we prove \eqref{verd}. We have already proved in \eqref{vera} that $Y \in {\mathcal{L}}_{d,\infty}$. It follows that $Y^{r-1} \in {\mathcal{L}}_{\frac{d}{r-1},\infty}$. If $r > d$, then ${\mathcal{L}}_{\frac{d}{r-1},\infty}\subset {\mathcal{L}}_{\frac{d}{d-1},\infty}$,
and by our assumption on~$q$, \eqref{L_p_embeddings} implies the inclusion ${\mathcal{L}}_{\frac{d}{d-1},\infty}\subset {\mathcal{L}}_{q,1}$. This establishes \eqref{verd}.

\end{proof}
 The following lemma is essential to our results in this section, and its proof is the ultimate purpose of Lemmas \ref{main abstract thm} and \ref{verification lemma}. The importance of this result is that it allows us to replace the limit of $e^{isH}M^{-r}_{\anglb{x}}$ with the limit of $e^{-(s-{\varepsilon} d)H}Y^r$, where $Y$ is a certain compact operator. This allows the application of zeta-function residue techniques.
\begin{lemma}\label{reduction lemma} Let $s>0$ and $0 <{\varepsilon}<\frac{s}{2d}.$ Denoting $Y = e^{-\frac{\varepsilon}{2}H}M_{\anglb{x}}^{-1}e^{-\frac{\varepsilon}{2}H}$ as in Lemma~\ref{verification lemma}, we have
$$\Big\|e^{-sH}M^{-r}_{\anglb{x}}-e^{-(s-{\varepsilon} d)H}Y^r\Big\|_1=O(1),\quad r\in[d,\frac{s}{2{\varepsilon}}]$$
\end{lemma}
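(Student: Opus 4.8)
The plan is to route everything through the abstract comparison theorem, Theorem~\ref{main abstract thm}, applied with $A=e^{-{\varepsilon} H}$ and $B=M^{-1}_{\anglb{x}}$, so that $Y=A^{\frac12}BA^{\frac12}$ is exactly the operator in the statement, and then to dispose of the mismatch in the heat-semigroup exponents. Fix $p>\frac d2$ and $q>\frac{d}{d-1}$ with $\frac1p+\frac1q\le1$ (possible since $d\ge2$). The starting point is the algebraic identity
\[
    e^{-sH}M^{-r}_{\anglb{x}}=e^{-(s-{\varepsilon} r)H}A^rB^r ,
\]
which holds because $A^rB^r=e^{-{\varepsilon} rH}M^{-r}_{\anglb{x}}$; the hypotheses ${\varepsilon}<\frac{s}{2d}$ and $r\le\frac{s}{2{\varepsilon}}$ guarantee $\frac s2\le s-{\varepsilon} r\le s-{\varepsilon} d<s$, so both $e^{-(s-{\varepsilon} r)H}$ and $e^{-(s-{\varepsilon} d)H}$ are bounded with norm at most $e^{s\|V\|_\infty}$, uniformly in $r$. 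Rewriting
\[
    e^{-sH}M^{-r}_{\anglb{x}}-e^{-(s-{\varepsilon} d)H}Y^r=e^{-(s-{\varepsilon} r)H}\bigl(A^rB^r-Y^r\bigr)+\bigl(e^{-(s-{\varepsilon} r)H}-e^{-(s-{\varepsilon} d)H}\bigr)Y^r
\]
reduces the lemma to two uniform $\mathcal{L}_1$-bounds over $r\in[d,\frac{s}{2{\varepsilon}}]$.

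For the first summand, Lemma~\ref{verification lemma} supplies assumptions (i)--(iv) of Theorem~\ref{main abstract thm} for this $A,B$ and every $r\ge d$ (observe that $Y\in\mathcal{L}_{d,\infty}$ forces $Y^{r-1}\in\mathcal{L}_{\frac{d}{r-1},\infty}\subseteq\mathcal{L}_{q,1}$ as soon as $q>\frac{d}{d-1}$). Since $A^rB^r-Y^r=(B^rA^r-Y^r)^{\ast}$ we have $\|A^rB^r-Y^r\|_1=\|B^rA^r-Y^r\|_1$, so it suffices to bound the right-hand side of Theorem~\ref{main abstract thm} uniformly on the compact interval $[d,\frac{s}{2{\varepsilon}}]$. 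This is the technical core of the argument: the estimates of Lemma~\ref{verification lemma}, and of Proposition~\ref{doi_corollary} on which part~(iii) relies, are only stated as ``$O(1)$ as $r\downarrow d$'', but their proofs produce bounds in which the $r$-dependence enters solely through the harmless factors $\|Y\|_\infty^{\,r-d}$, $\|e^{-{\varepsilon} H}\|_\infty^{\,r-d}\le e^{{\varepsilon}(r-d)\|V\|_\infty}$ and powers of $r$ (all bounded on a bounded $r$-interval), multiplied by fixed $r$-independent quantities coming from Theorem~\ref{main_cwikel_estimate}; moreover the implied constant $c_{r,p,d}$ is, up to a dimensional constant, $1+\int_{\mathbb{R}}(1+|\sigma|)|\widehat{g}_r(\sigma)|\,d\sigma$, which is bounded for $r$ in any compact subinterval of $(1,\infty)$ because the Schwartz seminorms of $g_r$ vary continuously with $r$. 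Hence $\sup_{r\in[d,\frac{s}{2{\varepsilon}}]}\|A^rB^r-Y^r\|_1<\infty$, and multiplication by $\|e^{-(s-{\varepsilon} r)H}\|_\infty\le e^{s\|V\|_\infty}$ controls the first summand.

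For the second summand, use the ideal property followed by crude estimates: $\|(e^{-(s-{\varepsilon} r)H}-e^{-(s-{\varepsilon} d)H})Y^r\|_1\le\|e^{-(s-{\varepsilon} r)H}-e^{-(s-{\varepsilon} d)H}\|_\infty\,\|Y^r\|_1$. Writing $e^{-aH}-e^{-bH}=\int_a^b He^{-wH}\,dw$ with $a=s-{\varepsilon} r\le b=s-{\varepsilon} d$ and both in $[\frac s2,s)$, and using that $\lambda\mapsto\lambda e^{-w\lambda}$ is bounded on $\sigma(H)\subseteq[-\|V\|_\infty,\infty)$ uniformly for $w\ge\frac s2$, we obtain $\|e^{-(s-{\varepsilon} r)H}-e^{-(s-{\varepsilon} d)H}\|_\infty\le C_{s,V}\,{\varepsilon}(r-d)$. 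On the other hand $Y\in\mathcal{L}_{d,\infty}$ gives $\|Y^r\|_1=\sum_k\mu(k,Y)^r\le\zeta\!\left(\tfrac rd\right)\|Y\|_{d,\infty}^{\,r}$ for $r>d$, so the second summand is at most $C_{s,V}\,{\varepsilon}\,(r-d)\,\zeta\!\left(\tfrac rd\right)\|Y\|_{d,\infty}^{\,r}$. Since $(r-d)\zeta\!\left(\tfrac rd\right)\to d$ as $r\downarrow d$ and is continuous for $r>d$, while $\|Y\|_{d,\infty}^{\,r}$ is continuous, this is bounded on $[d,\frac{s}{2{\varepsilon}}]$ (and vanishes identically at $r=d$). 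Combining the two bounds yields $\|e^{-sH}M^{-r}_{\anglb{x}}-e^{-(s-{\varepsilon} d)H}Y^r\|_1=O(1)$ on $[d,\frac{s}{2{\varepsilon}}]$.

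The main obstacle is thus not the algebra but the uniformity in the second paragraph: upgrading the near-$d$ estimates of Lemma~\ref{verification lemma} and Proposition~\ref{doi_corollary}, together with the implied constant of Theorem~\ref{main abstract thm}, to genuinely uniform bounds over the whole interval $[d,\frac{s}{2{\varepsilon}}]$. Everything else is routine bookkeeping with H\"older's inequality in the weak ideals and elementary spectral properties of $e^{-tH}$.
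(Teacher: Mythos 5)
Your proof is correct and follows essentially the same route as the paper: the identical decomposition $e^{-sH}M^{-r}_{\anglb{x}}-e^{-(s-{\varepsilon} d)H}Y^r = e^{-(s-{\varepsilon} r)H}(A^rB^r-Y^r)+(e^{-(s-{\varepsilon} r)H}-e^{-(s-{\varepsilon} d)H})Y^r$, with the first term handled by Theorem~\ref{main abstract thm} via Lemma~\ref{verification lemma}, and the second by extracting a factor of $(r-d)$ from the semigroup difference and invoking the $\zeta$-function bound \eqref{zeta_function_inequality}. You supply more detail than the paper on two points it leaves tacit — the adjoint step $\|A^rB^r-Y^r\|_1=\|B^rA^r-Y^r\|_1$ and the uniformity of the Theorem~\ref{main abstract thm} bound over the full interval $[d,\tfrac{s}{2{\varepsilon}}]$ rather than merely as $r\downarrow d$ — and both elaborations are accurate.
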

\begin{proof} 
Let $A$ and $B$ as in Lemma~\ref{verification lemma}. 
Splitting up $e^{-sH}$ as $e^{-(s-{\varepsilon} r)H}A^r$, we have
\begin{align*}
e^{-sH}M^{-r}_{\anglb{x}} &= e^{-(s-{\varepsilon} r)H}A^rB^r\\
                                        &= e^{-(s-{\varepsilon} r)H}\cdot (A^rB^r-Y^r) + (e^{-(s-{\varepsilon} r)H}-e^{-(s-{\varepsilon} d)H})Y^r \\
                                        & \qquad  + e^{-(s-{\varepsilon} d)H}Y^r.
\end{align*}
From the triangle inequality,
\begin{align*}
&\Big\|e^{-sH}M^{-r}_{\anglb{x}}-e^{-(s-{\varepsilon} d)H}Y^r\Big\|_1\\
&\leq\|e^{-(s-{\varepsilon} r)H}\|_{\infty}\|A^rB^r-Y^r\|_1+\|e^{-(s-{\varepsilon} r)H}-e^{-(s-{\varepsilon} d)H}\|_{\infty}\|Y^r\|_1.
\end{align*}
Consider the first summand. In view of Lemma~\ref{verification lemma}, Theorem~\ref{main abstract thm} is applicable
and, moreover, the right hand side in Theorem~\ref{main abstract thm} is bounded in $r\in[d,\frac{s}{2{\varepsilon}}].$ So, the first summand is bounded for $r\in[d,\frac{s}{2{\varepsilon}}].$ The second summand vanishes at $r=d,$ and for $r>d,$ it is estimated
via \eqref{zeta_function_inequality} as:
\begin{align*}
&\Big\|\frac{e^{-(s-{\varepsilon} r)H}-e^{-(s-{\varepsilon} d)H}}{r-d}\Big\|_{\infty}\cdot (r-d)\|Y^r\|_1\\
&\leq\sup_{t>-\|V\|_{\infty}}\Big|\frac{e^{-(s-{\varepsilon} r)t}-e^{-(s-{\varepsilon} d)t}}{r-d}\Big|\cdot (r-d)\zeta\brs{\frac rd} \|Y\|_{d,\infty}^r.
\end{align*}
Since $(r-d)\zeta(\frac{r}{d})$ is bounded in the interval $(d,\frac{s}{2{\varepsilon}}]$, the result follows.
\end{proof}

 The next theorem is the main result in this section. The proof relies on the notion of a zeta-function residue \cite[Definition 8.6.1]{LSZ2012}, a proximate notion to a Dixmier trace. If $\omega$ is an extended limit
    on $L_{\infty}(0,\infty)$, and $0 \leq A \in {\mathcal{L}}_{1,\infty}$ and $B$ is an arbitrary bounded linear operator, then the functionals $\zeta_{\omega}$ and $\zeta_{\omega,B}$
    are defined as
    \begin{equation*}
        \zeta_\omega(A) := \omega(\{t^{-1}{\rm Tr}(A^{1+t^{-1}})\}_{t > 0}),\quad \zeta_{\omega,B}(A) := \omega(\{t^{-1}{\rm Tr}(A^{1+t^{-1}}B)\}_{t > 0}).
    \end{equation*}
    It is not obvious that $\zeta_\omega$ is additive, but in fact $\zeta_\omega$
    extends to a linear functional on ${\mathcal{L}}_{1,\infty}$ and is a trace (so that $\zeta_\omega(AB) = \zeta_\omega(BA)$) \cite[Theorem 8.6.4 and Lemma 2.7.4]{LSZ2012}.
    Theorem 8.6.5 of \cite{LSZ2012} states that $\zeta_{\omega,B}(A) = \zeta_{\omega}(AB)$.
    
    The key result to which we refer is \cite[Theorem 9.3.1]{LSZ2012}, which is a zeta-function residue formula for the Dixmier trace. In particular, 
    the result implies that for a linear operator $0 \leq A \in {\mathcal{L}}_{1,\infty}$, the following are equivalent:
    \begin{enumerate}[{\rm (i)}]
        \item{} ${\rm Tr}_\omega(A) = c$ for all dilation invariant extended limits $\omega$,
        \item{} There exists a limit
                \begin{equation*}
                    c = \lim_{t\to \infty} t^{-1}{\rm Tr}(A^{1+t^{-1}}).
                \end{equation*}
    \end{enumerate}
    In particular, the theorem (combined with Theorem 8.6.5 of \cite{LSZ2012} mentioned above) entails that if $0\leq A \in {\mathcal{L}}_{1,\infty}$ and $B$ is bounded, and the limit
    \begin{equation*}
        c = \lim_{t\to\infty} t^{-1}{\rm Tr}(A^{1+t^{-1}}B)
    \end{equation*}
    exists, then ${\rm Tr}_\omega(AB) = c$ for all dilation-invariant extended limits $\omega$. We would like to apply this result to the following theorem, with $A = M^{-d}_{\anglb{x}}$
    and $B = e^{-sH}$, but the theorem does not directly apply since $M^{-1}_{\anglb{x}}$ is not even compact, let alone in ${\mathcal{L}}_{1,\infty}$. To overcome this difficulty, we apply Lemma \ref{reduction lemma}.
    The full details are as follows.

\begin{thm}\label{main_residue_formula}
 Assume that there exists the limit
\begin{equation*}
    E = d^{-1}\lim_{r\downarrow d}(r-d){\rm Tr}(e^{-sH}M^{-r}_{\anglb{x}})
\end{equation*}
then for any Dixmier trace ${\rm Tr}_\omega$ and all $s > 0$ we have
$${\rm Tr}_{\omega}(e^{-sH}M^{-d}_{\anglb{x}}) = E.$$
\end{thm}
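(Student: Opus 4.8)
The plan is to reduce the identity, by means of Lemma~\ref{reduction lemma}, to a zeta-function residue computation for a genuine positive element of ${\mathcal{L}}_{1,\infty}$, and then to invoke the residue characterisation of the Dixmier trace recalled just above the statement (the consequence of \cite[Theorem 9.3.1 and Theorem 8.6.5]{LSZ2012}).

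Fix $\varepsilon$ with $0 < \varepsilon < \frac{s}{2(d+1)}$, put $s' := s - \varepsilon d$ (so that $\frac{s}{2} < s' < s$), and let $Y := e^{-\frac{\varepsilon}{2}H}M^{-1}_{\anglb{x}}e^{-\frac{\varepsilon}{2}H}$ as in Lemma~\ref{verification lemma}. Since $V$ is bounded, $H$ is bounded below, so $e^{-\frac{\varepsilon}{2}H}$ is a positive operator and hence $Y$ is positive; by Lemma~\ref{verification lemma}~\eqref{vera}, $Y \in {\mathcal{L}}_{d,\infty}$, so that $Y^d$ is a positive operator in ${\mathcal{L}}_{1,\infty}$. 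Note also that for $r > d$ one has $Y^r \in {\mathcal{L}}_{d/r,\infty}\subset{\mathcal{L}}_1$, whence $e^{-s'H}Y^r\in{\mathcal{L}}_1$ and, by Lemma~\ref{reduction lemma}, $e^{-sH}M^{-r}_{\anglb{x}}\in{\mathcal{L}}_1$ as well; in particular ${\rm Tr}(e^{-sH}M^{-r}_{\anglb{x}})$ makes sense for $r\in(d,d+1]$.

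\textbf{Step 1 (replacing $M^{-d}_{\anglb{x}}$ by $Y^d$).} Lemma~\ref{reduction lemma} with $r=d$ gives $e^{-sH}M^{-d}_{\anglb{x}}-e^{-s'H}Y^d\in{\mathcal{L}}_1$, and since a Dixmier trace annihilates ${\mathcal{L}}_1$ and ${\rm Tr}_\omega$ is a trace,
\[
{\rm Tr}_\omega\big(e^{-sH}M^{-d}_{\anglb{x}}\big)={\rm Tr}_\omega\big(e^{-s'H}Y^d\big)={\rm Tr}_\omega\big(Y^d e^{-s'H}\big).
\]

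\textbf{Step 2 (identifying the residue).} I would apply the residue formula quoted before the statement with $A=Y^d\in{\mathcal{L}}_{1,\infty}$ positive and $B=e^{-s'H}$ bounded: it asserts that if $c:=\lim_{t\to\infty}t^{-1}{\rm Tr}\big((Y^d)^{1+t^{-1}}e^{-s'H}\big)$ exists, then ${\rm Tr}_\omega(Y^d e^{-s'H})=c$ for every dilation-invariant extended limit $\omega$, that is, for every Dixmier trace. To evaluate $c$, substitute $r=d(1+t^{-1})$, so that $t^{-1}=\frac{r-d}{d}$ and $t\to\infty$ corresponds to $r\downarrow d$; using $(Y^d)^{1+t^{-1}}=Y^r$ and cyclicity of the ordinary trace (legitimate because $Y^r\in{\mathcal{L}}_1$ for $r>d$),
\[
t^{-1}{\rm Tr}\big((Y^d)^{1+t^{-1}}e^{-s'H}\big)=\frac{r-d}{d}\,{\rm Tr}\big(e^{-s'H}Y^r\big).
\]
By Lemma~\ref{reduction lemma}, valid for $r\in[d,d+1]$ because $\varepsilon<\frac{s}{2(d+1)}$, we have $\|e^{-sH}M^{-r}_{\anglb{x}}-e^{-s'H}Y^r\|_1=O(1)$ as $r\downarrow d$, so
\[
(r-d)\,{\rm Tr}\big(e^{-s'H}Y^r\big)=(r-d)\,{\rm Tr}\big(e^{-sH}M^{-r}_{\anglb{x}}\big)+O(r-d).
\]
Letting $r\downarrow d$ and using the hypothesis $\lim_{r\downarrow d}(r-d){\rm Tr}(e^{-sH}M^{-r}_{\anglb{x}})=dE$ shows that $c$ exists and equals $\frac{1}{d}\cdot dE=E$. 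Hence ${\rm Tr}_\omega(Y^d e^{-s'H})=E$, and combining with Step~1 yields ${\rm Tr}_\omega(e^{-sH}M^{-d}_{\anglb{x}})=E$.

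The genuinely substantive step is Step~1: $M^{-1}_{\anglb{x}}$ is not even compact, so the zeta-residue machinery cannot be applied to $M^{-d}_{\anglb{x}}$ directly, and the whole purpose of Lemma~\ref{reduction lemma} (fed by Theorem~\ref{main abstract thm} and Lemma~\ref{verification lemma}) is precisely to supply the compact surrogate $Y^d\in{\mathcal{L}}_{1,\infty}$ together with trace-norm control of $e^{-sH}M^{-r}_{\anglb{x}}-e^{-s'H}Y^r$ that is uniform for $r$ near $d$. Granted that, the remaining work is only the change of variables $r=d(1+t^{-1})$ in the zeta integral and a direct appeal to the cited residue theorem; the one point needing care is to fix $\varepsilon$ small enough from the outset that the admissible range $r\in[d,\frac{s}{2\varepsilon}]$ in Lemma~\ref{reduction lemma} contains a full right-neighbourhood of $d$.
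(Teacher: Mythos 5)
Your proof is correct and essentially identical to the paper's: both reduce to the compact surrogate $Y = e^{-\varepsilon H/2}M^{-1}_{\anglb{x}}e^{-\varepsilon H/2}$ via Lemma~\ref{reduction lemma} and then apply the zeta-residue characterisation of the Dixmier trace recalled just before the statement. The only cosmetic difference is that you invoke the combined form of Theorems~9.3.1 and~8.6.5 of \cite{LSZ2012} as a black box with $A=Y^d$, $B=e^{-(s-\varepsilon d)H}$, whereas the paper unpacks it by passing explicitly through $\zeta_\omega$ and the symmetrised positive operator $e^{-sH/2}M^{-d}_{\anglb{x}}e^{-sH/2}$ before applying Theorem~9.3.1.
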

\begin{proof}
Let $Y = e^{-\frac{\varepsilon}{2}H}M_{\anglb{x}}^{-1}e^{-\frac{\varepsilon}{2}H}$ (as in Lemmas \ref{verification lemma} and \ref{reduction lemma}).
Lemma~\ref{reduction lemma} implies 
$$
    E  = d^{-1} \lim_{r\downarrow d}(r-d){\rm Tr}(e^{-(s-{\varepsilon} d)H}Y^r).
$$
Hence for every extended limit $\omega(f) = \omega - \lim_{t \to \infty} f(t)$ 
on $L_\infty(0,\infty)$ , we have (taking $\frac{r}{d}-1=\frac1t$)
$$
E = \omega\Big(\frac1t{\rm Tr}(e^{-(s-{\varepsilon} d)H}(Y^d)^{1+\frac1t}\Big).
$$
According to \cite[Theorem 8.6.5]{LSZ2012} (with $A=Y^d$ and $B=e^{-(s-{\varepsilon} d)H}$), it follows that
$$ 
    E = \zeta_{\omega}(e^{-(s-{\varepsilon} d)H}Y^d).
$$
where $\zeta_\omega \colon {\mathcal{L}}_{1,\infty} \to {\mathbb{C}}$ is the zeta function associated with the extended limit $\omega$ (see \cite[Definition 8.6.1 and Theorem 8.6.4]{LSZ2012}).
Appealing to Lemma~\ref{reduction lemma} with $r=d$ and taking into account that $\zeta_{\omega}$ vanishes on $\mathcal{L}_1,$ we obtain
$$
    E = \zeta_{\omega}(e^{-sH}M^{-d}_{\anglb{x}}).
$$
Since $\zeta_{\omega}$ is a trace (by \cite[Theorem 8.6.4 and Lemma 2.7.4]{LSZ2012}), 
it follows that 

$$
    E = \zeta_{\omega}(e^{-\frac12sH}M^{-d}_{\anglb{x}}e^{-\frac12sH}).
$$
Combining this with \cite[Theorem 9.3.1]{LSZ2012} gives 

$$
E = {\rm Tr}_{\omega}(e^{-\frac12sH}M^{-d}_{\anglb{x}}e^{-\frac12sH}).
$$
It follows that
$
    E = {\rm Tr}_{\omega}(e^{-sH}M^{-d}_{\anglb{x}}).
$
\end{proof}

\section{Formula for the density of states}

Our next step is to show that for all $s > 0$ we have
\begin{equation*}
    \lim_{R\to\infty} \frac{1}{|B(0,R)|}{\rm Tr}(e^{-sH}M_{\chi_{B(0,R)}}) = \frac{d}{\omega_d}\lim_{r\downarrow 1} (r-1){\rm Tr}(e^{-sH}M_{\anglb{x}}^{-dr}).
\end{equation*}
For each $s>0$, the operator $e^{-sH}$ is an integral operator \cite[Corollary 25.9]{simon_functional}, denote its kernel by $K_{s,V}$.
We shall prove the equivalent statement that
\begin{equation*}
    \lim_{R\to\infty} \frac{1}{|B(0,R)|} \int_{B(0,R)} K_{s,V}(x,x)\,dx = \frac{d}{\omega_d}\lim_{r\downarrow 1} \,(r-1)\int_{{\mathbb{R}}^d} {\anglb{x}}^{-dr}K_{s,V}(x,x)\,dx.
\end{equation*}
\noindent Although the kernel $K_{s,V}$ is only \emph{a priori} defined pointwise-almost everywhere, we understand the meaning of $K_{s,V}(x,x)$ in a Lebesgue averaged sense, as 
justified by Brislawn's theorem \cite[Theorem 3.1]{Brislawn-trace-class-1988}.
The following is a routine abelian theorem.
\begin{lemma}\label{abelian_lemma}
    Let $F$ be a bounded measurable function on ${\mathbb{R}}^d$ and assume that there is $c \in {\mathbb{C}}$ such that:
    \begin{equation*}
        \int_{B(0,R)} F(t)\,dt = cR^d + o(R^{d}),\quad R\to\infty.
    \end{equation*}
    Then:
    \begin{equation*}
        \int_{{\mathbb{R}}^d} {\anglb{t}}^{-dr}F(t)\,dt = \frac{c}{r-1}+o\brs{\frac{1}{r-1}},\quad r\downarrow 1.
    \end{equation*}
    More concisely, we have:
    \begin{equation*}
        \lim_{R\to\infty} \frac{1}{|B(0,R)|}\int_{B(0,R)} F(t)\,dt = \frac{1}{|B(0,1)|}\lim_{r\downarrow 1}\,(r-1)\int_{{\mathbb{R}}^d} {\anglb{t}}^{-dr}F(t)\,dt,
    \end{equation*}
    whenever the left hand side exists.
\end{lemma}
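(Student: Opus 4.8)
The plan is to collapse the $d$-dimensional integral into a one-dimensional radial integral, and then to recognise the latter as a Beta integral whose behaviour as $r\downarrow 1$ is elementary. Set $G(\rho):=\int_{B(0,\rho)}F(t)\,dt$; by hypothesis $G(\rho)=c\rho^d+E(\rho)$ with $E(\rho)=o(\rho^d)$, and $E$ is bounded on every bounded subinterval of $[0,\infty)$ because $F\in L_\infty$. Since $F$ is bounded and $dr>d$ whenever $r>1$, the integral $\int_{{\mathbb{R}}^d}\anglb{t}^{-dr}F(t)\,dt$ is absolutely convergent for $r>1$.

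First I would use the elementary identity $\anglb{t}^{-dr}=(1+\abs{t}^2)^{-dr/2}=\int_{\abs{t}}^{\infty}dr\,\rho\,(1+\rho^2)^{-\frac{dr}{2}-1}\,d\rho$, which is just the fundamental theorem of calculus applied to the function $\rho\mapsto(1+\rho^2)^{-dr/2}$ vanishing at $+\infty$. Substituting this and interchanging the order of integration by Fubini's theorem --- legitimate since $\int_{{\mathbb{R}}^d}\anglb{t}^{-dr}\abs{F(t)}\,dt<\infty$ --- gives
\[
    \int_{{\mathbb{R}}^d}\anglb{t}^{-dr}F(t)\,dt=\int_0^\infty dr\,\rho\,(1+\rho^2)^{-\frac{dr}{2}-1}\,G(\rho)\,d\rho.
\]
Inserting $G(\rho)=c\rho^d+E(\rho)$, the leading contribution is $c$ times $\int_0^\infty dr\,\rho^{d+1}(1+\rho^2)^{-\frac{dr}{2}-1}\,d\rho$, which after the substitution $u=\rho^2$ equals $\frac{dr}{2}\,\frac{\Gamma(\frac d2+1)\,\Gamma(\frac{d(r-1)}{2})}{\Gamma(\frac{dr}{2}+1)}$ by the standard formula $\int_0^\infty u^{a-1}(1+u)^{-a-b}\,du=\Gamma(a)\Gamma(b)/\Gamma(a+b)$. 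Using $\Gamma(\varepsilon)=\varepsilon^{-1}+O(1)$ as $\varepsilon\downarrow0$ together with continuity of $\Gamma$ at $\frac d2+1$, this is $\frac{1}{r-1}+O(1)$ as $r\downarrow1$, so the leading term equals $\frac{c}{r-1}+O(1)$.

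It then remains to show that the error $\int_0^\infty dr\,\rho\,(1+\rho^2)^{-\frac{dr}{2}-1}E(\rho)\,d\rho$ is $o\brs{\frac{1}{r-1}}$. Given $\delta>0$, choose $R_0$ so that $\abs{E(\rho)}\le\delta\rho^d$ for $\rho\ge R_0$; on $[0,R_0]$ the integrand is dominated by $dr\,\rho\,\|E\|_{L_\infty[0,R_0]}$, contributing an $O(1)$ term, while on $[R_0,\infty)$ it is bounded in modulus by $\delta\int_0^\infty dr\,\rho^{d+1}(1+\rho^2)^{-\frac{dr}{2}-1}\,d\rho=\delta\brs{\frac{1}{r-1}+O(1)}$ by the computation just made. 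Hence $\limsup_{r\downarrow1}(r-1)\,\abs{\int_0^\infty dr\,\rho\,(1+\rho^2)^{-\frac{dr}{2}-1}E(\rho)\,d\rho}\le\delta$ for every $\delta>0$, so the error is $o(1/(r-1))$. Combining the two pieces gives the first displayed claim; the ``more concise'' reformulation is immediate on dividing by $\abs{B(0,1)}$ and using $\abs{B(0,R)}=\abs{B(0,1)}R^d$, which identifies $c$ with $\abs{B(0,1)}\lim_{R\to\infty}\frac{1}{\abs{B(0,R)}}\int_{B(0,R)}F$ whenever that limit exists. I expect the only point needing genuine care to be the uniformity in $r$ near $1$ of the error estimate --- in particular checking that the implied constants in the Beta-function asymptotics stay bounded as $r\downarrow1$ --- the Fubini step and the Beta integral itself being entirely routine.
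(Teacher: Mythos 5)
Your proof is correct and follows essentially the same strategy as the paper's. The only cosmetic difference is in how the one-dimensional integral formula
\[
\int_{{\mathbb{R}}^d}\anglb{t}^{-dr}F(t)\,dt \;=\; \int_0^\infty dr\,\rho\,(1+\rho^2)^{-\frac{dr}{2}-1}\,G(\rho)\,d\rho,\qquad G(\rho)=\int_{B(0,\rho)}F,
\]
is obtained: you use the fundamental theorem of calculus for $\rho\mapsto(1+\rho^2)^{-dr/2}$, while the paper writes $\anglb{t}^{-dr}$ as an integral of indicator functions and then substitutes $\theta=\anglb{R}^{-dr}$, but these yield the identical intermediate expression, and the remaining steps --- Beta/Gamma asymptotics for the main term and the compact-support/small-tail split for the error --- coincide.
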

\begin{proof}
    Write ${\anglb{t}}^{-dr}$ as an integral of an indicator function:
    \begin{align*}
        {\anglb{t}}^{-dr} = \int_0^{{\anglb{t}}^{-dr}}\,d\theta &= \int_{0}^1 \chi_{[0,{\anglb{t}}^{-dr})}(\theta)\,d\theta\\
                                                                &= \int_{0}^1 \chi_{[0,(\theta^{-\frac{2}{dr}}-1)^{1/2})}(|t|)\,d\theta.
    \end{align*}
    Thus by Fubini's theorem:
    \begin{align*}
        \int_{{\mathbb{R}}^d} {\anglb{t}}^{-dr}F(t)\,dt &= \int_0^1 \int_{{\mathbb{R}}^d} \chi_{[0,(\theta^{-\frac{2}{dr}}-1)^{1/2})}(|t|)F(t)\,dtd\theta\\
                                                &= \int_0^1 \left(\int_{B(0,(\theta^{-\frac{2}{dr}}-1)^{1/2})} F(t)\,dt\right)d\theta. 
    \end{align*}
    With the change of variable $\theta = \anglb{R}^{-dr},$ we have:
    \begin{align*}
        \int_{{\mathbb{R}}^d} {\anglb{t}}^{-dr}F(t)\,dt &= \int_0^\infty \frac{drR}{{\anglb{R}}^{dr+2}}\int_{B(0,R)}F(t)\,dt dR\\
                                                &= dr\int_{0}^\infty \frac{R^{d+1}}{{\anglb{R}}^{dr+2}}\left(\frac{1}{R^d}\int_{B(0,R)} F(t)\,dt\right)\,dR.
    \end{align*}
    Our assumption is that:
    \begin{equation*}
        \frac{1}{R^d}\int_{B(0,R)}F(t)\,dt = c + \rho(R)
    \end{equation*}
    where $\rho(R) = o(1)$ as $R\to \infty$.
    Therefore:
    \begin{equation*}
        \int_{{\mathbb{R}}^d} {\anglb{t}}^{-dr}F(t)\,dt = drc\int_0^\infty \frac{R^{d+1}}{{\anglb{R}}^{dr+2}}\,dR + dr\int_0^\infty \frac{R^{d+1}}{{\anglb{R}}^{dr+2}}\rho(R)\,dR.
    \end{equation*}
    The former integral evaluates to $\frac{1}{2}\frac{\Gamma(d(r-1)/2)\Gamma(1+d/2)}{\Gamma(1+dr/2)}$. Since the gamma function has a pole with residue $1$ at $0$\footnote{This follows from the identity $\Gamma(z) = \frac{1}{z}\Gamma(z+1)$}, we have:
    \begin{equation*}
        \int_{{\mathbb{R}}^d} {\anglb{t}}^{-dr}F(t)\,dt = \frac{c}{r-1}+O(1) + dr\int_0^\infty \frac{R^{d+1}}{{\anglb{R}}^{dr+2}}\rho(R)\,dR,\quad r\downarrow 1.
    \end{equation*}
    It remains to show that the last summand is $o\brs{\frac 1{r-1}}.$    
    
It is enough to prove that 
\begin{equation} \label{F: (*)}
 \qquad  \lim_{r\to 1^+} (r-1) \int_1^\infty \frac{R^{d+1}}{\anglb{R}^{dr+2}} \rho(R)\,dR = 0.
\end{equation}
    By assumption, $\rho(R)\to 0$ as $R\to\infty$. 
If $\rho$ has compact support, then \eqref{F: (*)} is automatically true by the dominated convergence theorem. 
Note that if $\abs{\rho(R)} \leq a,$ then:
$$
    (r-1)\int_1^\infty \frac{R^{d+1}}{\anglb{x}^{dr+1}}|\rho(R)| dR \leq a/d.
$$
Choose any ${\varepsilon}>0$ and 
write $\rho$ as 
$
   \rho = \rho_{c} + \rho_{s}
$
where $\rho_{c}$ has compact support and $\abs{\rho_{s}}$ is bounded by ${\varepsilon} d.$
Then by the triangle inequality:
$$
    |(r-1)\int_1^\infty \frac{R^{d+1}}{\anglb{R}^{dr+2}} \rho(R) dR| \leq (r-1)\int_1^\infty R^{d-rd-1} |\rho_{c}(R)| dR + {\varepsilon}.
$$
Choose $r$ sufficiently close to $1$ such that the first term is less than ${\varepsilon}ilon.$ We can do this, since \eqref{F: (*)} holds for $\rho_c.$ But then:
$$
|(r-1)\int_1^\infty R^{-r} \rho(R) dR| \leq 2{\varepsilon}
$$
and this gives the result.
    
\end{proof}

\begin{cor}\label{abelian_corollary}
    For all $s > 0$, if the density of states measure $\nu_H$ exists, then:
    \begin{equation*}
        \int_{{\mathbb{R}}} e^{-s\lambda}\,d\nu_H(\lambda) = \frac{1}{|B(0,1)|}\lim_{r\downarrow 1}\,(r-1)\int_{{\mathbb{R}}^d} {\anglb{x}}^{-dr}K_{s,V}(x,x)\,dx.
    \end{equation*}
    That is,
    \begin{equation*}
        \int_{{\mathbb{R}}} e^{-s\lambda}\,d\nu_H(\lambda) = \frac{1}{|B(0,1)|}\lim_{r\downarrow 1}\,(r-1){\rm Tr}( e^{-sH}  M^{-dr}_{\anglb{x}} ).
    \end{equation*}
    whenever $\nu_H$ exists.
\end{cor}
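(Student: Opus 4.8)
The plan is to apply the Abelian Lemma~\ref{abelian_lemma} to the function $F(x) := K_{s,V}(x,x)$, where $K_{s,V}$ is the integral kernel of $e^{-sH}$ and the diagonal value $K_{s,V}(x,x)$ is understood in the Lebesgue-averaged sense of Brislawn's theorem \cite[Theorem 3.1]{Brislawn-trace-class-1988}. The first thing to check is the standing hypothesis of Lemma~\ref{abelian_lemma}, that $F$ is bounded and measurable: the Trotter product formula gives $0\leq K_{s,V}(x,y)\leq e^{s\|V\|_\infty}K_{s,0}(x,y)$, and since $K_{s,0}(x,x) = (4\pi s)^{-d/2}$ (Example~\ref{E: 1}) we obtain $0\leq F(x)\leq e^{s\|V\|_\infty}(4\pi s)^{-d/2}$.

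Next I would identify the constant $c$ in the Abelian hypothesis $\int_{B(0,R)}F(x)\,dx = cR^d + o(R^d)$. The operator $M_{\chi_{B(0,R)}}e^{-sH}$ is the integral operator with kernel $\chi_{B(0,R)}(x)K_{s,V}(x,y)$, so by Brislawn's theorem ${\rm Tr}(M_{\chi_{B(0,R)}}e^{-sH}) = \int_{B(0,R)}F(x)\,dx$ (exactly as in the computation of Example~\ref{E: 1}). Since we are assuming the density of states $\nu_H$ exists in the sense of \eqref{def_of_DOS}, \cite[Proposition C.7.2]{SimonSemigroups} guarantees that $\lim_{R\to\infty}|B(0,R)|^{-1}{\rm Tr}(M_{\chi_{B(0,R)}}e^{-sH})$ exists and equals $\int_{\mathbb{R}}e^{-s\lambda}\,d\nu_H(\lambda)$. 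Thus the hypothesis of Lemma~\ref{abelian_lemma} holds with $c = |B(0,1)|\int_{\mathbb{R}}e^{-s\lambda}\,d\nu_H(\lambda)$, and its conclusion reads
$$\int_{\mathbb{R}}e^{-s\lambda}\,d\nu_H(\lambda) = \frac{1}{|B(0,1)|}\lim_{r\downarrow 1}(r-1)\int_{\mathbb{R}^d}\anglb{x}^{-dr}K_{s,V}(x,x)\,dx,$$
which is the first assertion of the corollary.

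Finally I would rewrite the right-hand integral as a trace. Fix ${\varepsilon}>0$ small enough (relative to $s$) that, for $r$ near $1$, the quantity $dr$ lies in the interval $[d,\frac{s}{2{\varepsilon}}]$ appearing in Lemma~\ref{reduction lemma}. Applying that lemma with its parameter set to $dr$ shows that $e^{-sH}M^{-dr}_{\anglb{x}}$ differs from the operator $e^{-(s-{\varepsilon} d)H}Y^{dr}$ by an operator of finite trace norm; since $Y\in{\mathcal{L}}_{d,\infty}$ (Lemma~\ref{verification lemma}) forces $Y^{dr}\in{\mathcal{L}}_{1/r,\infty}\subset{\mathcal{L}}_1$ for $r>1$, it follows that $e^{-sH}M^{-dr}_{\anglb{x}}\in{\mathcal{L}}_1$. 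As this operator is the integral operator with kernel $(x,y)\mapsto K_{s,V}(x,y)\anglb{y}^{-dr}$, Brislawn's theorem once more gives ${\rm Tr}(e^{-sH}M^{-dr}_{\anglb{x}}) = \int_{\mathbb{R}^d}\anglb{x}^{-dr}K_{s,V}(x,x)\,dx$, and combined with the previous display this is the second assertion.

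I expect the only genuinely delicate step to be the invocation of \cite[Proposition C.7.2]{SimonSemigroups}: it is there that one needs the fact that the measure-theoretic definition \eqref{def_of_DOS} of the density of states forces the Abel/Laplace limit $\lim_{R\to\infty}|B(0,R)|^{-1}{\rm Tr}(M_{\chi_{B(0,R)}}e^{-sH})$ to exist and to reproduce $\int e^{-s\lambda}\,d\nu_H$. Everything else amounts to assembling Brislawn's theorem, the Gaussian domination of the heat kernel, and Lemmas~\ref{abelian_lemma} and~\ref{reduction lemma}.
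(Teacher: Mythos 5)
Your proposal is correct and follows essentially the same route as the paper: identify the averaged trace over $B(0,R)$ with $\int_{B(0,R)}K_{s,V}(x,x)\,dx$, invoke \cite[Proposition C.7.2]{SimonSemigroups} to see that the definition \eqref{def_of_DOS} produces $\int_{\mathbb{R}}e^{-s\lambda}\,d\nu_H(\lambda)$ as the limit, and then apply Lemma~\ref{abelian_lemma}. The only cosmetic differences are that the paper cites \cite[Corollary 25.9]{simon_functional} for the essential boundedness of $x\mapsto K_{s,V}(x,x)$ rather than deriving it from Trotter-type Gaussian domination, and it leaves the trace-class check behind the second display implicit rather than routing through Lemma~\ref{reduction lemma} as you do; neither change is substantive.
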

\begin{proof}
    By definition (c.f. \eqref{def_of_DOS}), we have:
    \begin{align*}
        \int_{{\mathbb{R}}} e^{-s\lambda}\,d\nu_H(\lambda) &= \lim_{R\to\infty} \frac{1}{|B(0,R)|}{\rm Tr}(M_{\chi_{B(0,R)}}e^{-sH})\\
                                                    &= \lim_{R\to\infty} \frac{1}{|B(0,R)|}\int_{B(0,R)} K_{s,V}(x,x)\,dx.
    \end{align*}
    We now conclude the proof by an application of Lemma~\ref{abelian_lemma}. The only condition of Lemma~\ref{abelian_lemma} which needs to be checked is that $x\mapsto K_{s,V}(x,x)$ is essentially bounded on ${\mathbb{R}}^d$. This is \cite[Corollary 25.9]{simon_functional}.
\end{proof}

\begin{remark}\label{laplace_uniqueness}
    Our proof crucially relies on the following well-known property of the Laplace transform of measures \cite[Theorem II.6.3]{Widder1941}: if $\nu$ and $\mu$ are complex Borel measures supported on some semiaxis $[-C,\infty)$
    such that:
    \begin{equation*}
        \int_{{\mathbb{R}}} e^{-st} \,d\nu(t) = \int_{{\mathbb{R}}} e^{-st}\,d\mu(t)
    \end{equation*}
    for all $s > 0$ (in particular, both integrals as Lebesgue integrals for all $s > 0$), then $\nu = \mu$.
    
    An easy way to see this is as a consequence of the Stone-Weierstrass theorem \cite[\S 5.7]{Rudin_functional}. Without loss of generality, $C = 0$ and $\mu = 0$. Then we have a measure $\nu$ on $[0,\infty)$ such
    that $\int_{0}^\infty e^{-st}\,d\nu(t) = 0$ for all $s > 0$. It follows that $\int_{0}^\infty g(t)\,d\nu(t) =0$ for all functions $g$ which are a finite linear span of functions in $\{e^{-st}\}_{s > 0}$.
    
    However the linear span of $\{e^{-st}\}_{s > 0}$ is a subalgebra of the set $C_0([0,\infty))$ which separates points, hence every $f \in C_0([0,\infty))$ is a uniform limit
    of functions in the linear span of $\{e^{-st}\}_{s > 0}$. Let $f \in C_c([0,\infty))$ be a continuous compactly supported function, and select a sequence $\{g_n\}_{n\geq 0}$ of functions in the linear span of $\{e^{-st}\}_{s>0}$
    which uniformly approximate the continuous compactly supported function $t\mapsto f(t)e^t$ as $n\to\infty$.
    Thus we have:
    \begin{equation*}
        \left|\int_0^\infty f\,d\nu-\int_0^\infty g_n(t)e^{-t}\,d\nu(t)\right|\leq \sup_{t \geq 0} |e^tf(t)-g_n(t)|\int_0^\infty e^{-t}\,d|\nu|(t).
    \end{equation*}
    Since each $\int_{0}^\infty g_n(t)e^{-t}\,d\nu$ vanishes and also $\int_{0}^\infty e^{-t}\,d|\nu|(t) < \infty$ by the assumption that each $e^{-st}$ is $\nu$-integrable in the Lebesgue sense, it follows that $\int_0^\infty f(t)\,d\nu = 0$ for all continuous compactly supported continuous functions $f$. Hence by the Riesz theorem, $\nu = 0$.
\end{remark}

\begin{proof}[Proof of Theorem~\ref{T: main}]
    Corollary~\ref{abelian_corollary} yields:
    \begin{equation*}
        \int_{{\mathbb{R}}} e^{-s\lambda}\,d\nu_H(\lambda) = \frac{1}{|B(0,1)|}\lim_{r\downarrow 1}\,(r-1){\rm Tr}( e^{-sH}  M^{-dr}_{\anglb{x}} ).
    \end{equation*}
    Theorem~\ref{main_residue_formula} identifies the limit above as being exactly:
    \begin{align*}
        \lim_{r\downarrow 1}\,(r-1){\rm Tr}( e^{-sH}  M^{-dr}_{\anglb{x}} ) &= \frac{1}{d}\lim_{r\downarrow d} (r-d){\rm Tr}(e^{-sH}M^{-r}_{\anglb{x}})\\
                                                                        &= {\rm Tr}_\omega(e^{-sH}M^{-d}_{\anglb{x}}).
    \end{align*}
    Therefore,
    \begin{equation*}
        \int_{{\mathbb{R}}} e^{-s\lambda}\,d\nu_H(\lambda) = \frac{1}{|B(0,1)|}{\rm Tr}_\omega(e^{-sH}M^{-d}_{\anglb{x}}).
    \end{equation*}
    Theorem~\ref{main_cwikel_estimate} implies that if $f$ is a Borel function on $\mathbb{R}$ such that $t\mapsto |f(t)|\anglb{t}^d$ is bounded, then:
    \begin{equation}\label{measure_growth_bound}
        |{\rm Tr}_\omega(f(H)M^{-d}_{\anglb{x}})| \leq C\sup_{t \in {\mathbb{R}}} |f(t)|\anglb{t}^d
    \end{equation}
    for some constant $C$. From the Riesz theorem, it follows that the functional $f\mapsto {\rm Tr}_\omega(f(H)M^{-d}_{\anglb{x}})$ is represented by a Borel measure $\mu$ on $\mathbb{R}$,
    \begin{equation*}
        {\rm Tr}_{\omega}(f(H)M^{-d}_{\anglb{x}}) = \int_{\mathbb{R}} f\,d\mu,\quad f \in C_c(\mathbb{R}).
    \end{equation*}
    This identity is only \emph{a priori} valid for continuous compactly supported functions, but we may include the function $f(t) = e^{-st}$, for $s > 0$, as follows. 
    Select a sequence $\{f_n\}_{n=0}^\infty \subset C_c({\mathbb{R}})$ such that as $n\to\infty$ we have:
    $$\sup_{t > -\|V\|_{\infty}} |e^{-st}-f_n(t)|\anglb{t}^d\rightarrow 0.$$
    It follows that $\int_{\mathbb{R}} f_n\,d\mu\rightarrow \int_{\mathbb{R}} e^{-st}\,d\mu(t)$ and \eqref{measure_growth_bound} implies that ${\rm Tr}_\omega(f_n(H)M^{-d}_{\anglb{x}})\rightarrow {\rm Tr}_\omega(e^{-sH}M^{-d}_{\anglb{x}})$.
    Hence,
    \begin{equation*}
        \int_{\mathbb{R}} e^{-st}\,d\mu(t) = {\rm Tr}_\omega(e^{-sH}M^{-d}_{\anglb{x}}) = |B(0,1)|\int_{\mathbb{R}} e^{-st}\,d\nu_H(t),\quad s > 0.
    \end{equation*}
    Uniqueness for the Laplace transform (Remark~\ref{laplace_uniqueness}) gives the equality of measures, $\mu = |B(0,1)|\nu_H = \frac{\omega_d}{d}\nu_H$, and this is the desired equality.
\end{proof}

\section{Acknowledgements}
The authors wish to thank the anonymous referee for helpful comments and suggestions. F. S. is partially supported by the Australian Research Council grant FL170100052.

\end{document}